\DeclareMathAlphabet{\mathpzc}{OT1}{pzc}{m}{it}
\newcommand\aH{{\mathpzc H}}
\newcommand\aG{{\mathpzc G}}
\newcommand\aL{{\mathpzc L}}
\newcommand\ah{{\mathpzc h}}
\newcommand\ag{{\mathpzc g}}
\newcommand\al{{\mathpzc l}}
\newtheorem{theorem}{Theorem}
\newtheorem{lemma}{Lemma}
\newtheorem{remark}{Remark}
\newcommand\mvector{\boldsymbol}
\newcommand\vb{\mvector{b}}
\newcommand\ve{\mvector{e}}
\newcommand\vl{\mvector{l}}
\newcommand\vm{\mvector{m}}
\newcommand\vq{\mvector{q}}
\newcommand\vr{\mvector{r}}
\newcommand\vx{\mvector{x}}
\newcommand\vA{\mvector{A}}
\newcommand\vB{\mvector{B}}
\newcommand\vE{\mvector{E}}
\newcommand\vI{\mvector{I}}
\newcommand\vJ{\mvector{J}}
\newcommand\vL{\mvector{L}}
\newcommand\vM{\mvector{M}}
\newcommand\vR{\mvector{R}}
\newcommand\vS{\mvector{S}}
\newcommand\vV{\mvector{V}}
\newcommand\vX{\mvector{X}}
\newcommand\vZ{\mvector{Z}}
\newcommand\vz{\mvector{z}}
\newcommand\valpha{\mvector{\alpha}}
\newcommand\vgamma{\mvector{\gamma}}
\newcommand\vomega{\mvector{\omega}}
\newcommand\vOmega{\mvector{\Omega}}
\newcommand\cM{{\mathcal M}}
\newcommand\cN{{\mathcal N}}
\newcommand\scA{{\mathscr A}}
\newcommand\scD{{\mathscr D}}
\newcommand\scG{{\mathscr G}}
\newcommand\scS{{\mathscr S}}
\newcommand\scT{{\mathscr T}}
\newcommand\field{\mathbb}
\newcommand\R{\field{R}}
\newcommand\C{\field{C}}
\newcommand\J{\field{J}}
\newcommand\Z{\field{Z}}
\newcommand\N{\field{N}}
\newcommand\vGamma{\mvector{\Gamma}}
\newcommand\vzero{\mvector{0}}
\renewcommand\Re{\operatorname{Re}}
\renewcommand\Im{\operatorname{Im}}
\newcommand\rmd{\mathrm{d}}
\newcommand\rme{\mathrm{e}}
\newcommand\rmi{\mathrm{i}}
\newcommand\Dtt{\frac{\mathrm{d}^2\phantom{t} }{\mathrm{d}t^2}}
\newcommand\pder[2]{\dfrac{\partial #1 }{\partial #2}}
\newcommand\Dz{\frac{\mathrm{d}\phantom{z} }{ \mathrm{d}z}}
\newcommand\ord{\operatorname{ord}}
\newcommand\mtext[1]{\quad\text{#1}\quad}
\newcommand\abs[1]{\lvert #1 \rvert}
\newcommand\id{\operatorname{\mathrm{Id}}}
\newcommand\diag{\operatorname{diag}}
\newcommand\Dt{\frac{\mathrm{d}\phantom{t} }{\mathrm{d}\mspace{1mu}
t}}
\begin{document}
\mathtoolsset{
mathic 
}
\title{Top on a smooth plane}
\author{Maria Przybylska}
\email{m.przybylska@if.uz.zgora.pl}
\affiliation{Institute of Physics, University of Zielona G\'ora, \\
 Licealna 9, 65--417,  Zielona G\'ora, Poland}
\author{Andrzej J. Maciejewski}
\email{a.maciejewski@ia.uz.zgora.pl}
\affiliation{Janusz Gil Institute of Astronomy, University of Zielona G\'ora, \\
Licealna~9, 65-417,  Zielona G\'ora, Poland}

\begin{abstract}
We investigate the dynamics of a sliding top that is a rigid body with an ideal sharp
tip moving in a perfectly smooth horizontal plane, so no friction forces act
on the body. We prove that this system is integrable only in two cases
analogous to the Euler and Lagrange cases of the classical top problem. The
cases with the constant gravity field with acceleration $g\neq0$ and without
external field $g=0$ are considered. The non-integrability proof for $g\neq0$
based on the fact that the equations of motion for the sliding top are a
perturbation of the classical top equations of motion. We show that the
integrability of the classical top is a necessary condition for the
integrability of the sliding top.  Among four integrable classical top cases the
corresponding two cases for the sliding top are also integrable, and for the two
remaining cases, we prove their non-integrability by analyzing the differential
Galois group of variational equations along a certain particular solution. In
the absence of constant gravitational field $g=0$ the integrability is much more
difficult. At first, we proved that if the sliding top problem is integrable, then the
body is symmetric. In the proof, we applied one of the Ziglin theorem concerning
the splitting of separatrices phenomenon. Then we prove the non-integrability of
the symmetric sliding top using differential Galois group of variational
equations except two the same as for $g\neq0$ cases. The integrability of these
cases is also preserved when we add to equations of motion a gyrostatic term. 
\newline 
 \paragraph*{Declaration}The final published version of the article is available at \textbf{\href{https://doi.org/10.1063/5.0200592}{https://doi.org/10.1063/5.0200592} }
\end{abstract}

\maketitle
%
\begin{quotation}
A rigid body, apart from the material point, is the most important model in classical mechanics that represents various physical systems. The analysis of the dynamics of a rigid body with one fixed point in the gravitational field, called the heavy top, has been a research problem for hundreds of years and a testing ground for which various methods have been used and gave impetus to the creation of new methods. This paper provides a complete integrability analysis of an almost forgotten top model with an ideal sharp tip that slides on a perfectly smooth horizontal plane in a constant gravity field.
\end{quotation}

\section{Formulation of the problem and  main result}
\label{sec:equ}
We consider the constrained motion of a rigid body in a constant gravity field.
For the description of the problem, we use two orthonormal reference frames.  The
inertial frame $\mathscr{F}=\{O, \ve_1,\ve_2, \ve_3\}$ with origin at a point
$O$, and axes defined by three unit orthogonal vectors $\ve_i$ satisfying the 
relation $ \ve_3=\ve_1\times\ve_2$. The body-fixed frame
$\mathcal{F}_{\mathcal{B}}= \{C, \vb_1, \vb_2 , \vb_3\}$ has its origin in the
center of mass $C$ of the body,  and axes given by the unit orthogonal vectors
$\vb_i$ fulfilling $\vb_3=\vb_1\times\vb_2$. 

We will use the following convention. The coordinates of the vector $\vec{z}$ in
the inertial and body frames are denoted by $\vz=[z_1,z_2,z_3]^T$, and by
$\vZ=[Z_1,Z_2,Z_3]^T$, respectively. That is $z_i=\vz\cdot\ve_i$ and $Z_i=
\vz\cdot \vb_i$, for $i=1,2,3$. The orientation matrix $\vA=[A_{ij}]$ of the
body frame relative to the inertial frame is defined by
$A_{ij}=\ve_i\cdot\vb_{j}$  for  $i,j=1,2,3$, so   for an arbitrary vector
$\vec{z}$ we have $\vz=\vA\vZ$. Clearly, with our definitions, we have
$\vA^T\vA= \id$, and $\det\vA=+1$, so $\vA\in\mathrm{SO}(3,\R)$.  The time
derivative of $\vA(t)$ is given by
\begin{equation*}
  \dot \vA = \vA \widehat\vOmega,  
\end{equation*}
where $\widehat\vOmega = \vA^T \dot\vA=-\dot \vA^T\vA $  is an antisymmetric matrix. 

The standard isomorphism of Lie algebra $\R^{3}$ with cross product as the multiplication, and the Lie algebra $\mathfrak{so}(3,\R)$ of antisymmetric
matrices is given by 
\begin{equation}
  \R^3 \ni \vZ \longmapsto \widehat \vZ = \begin{bmatrix}
    0 & -Z_3 & Z_2\\
    Z_3 &0& -Z_1 \\
    -Z_2 & Z_1 & 0
  \end{bmatrix}\in\mathfrak{so}(3,\R).
  \label{eq:5}
\end{equation}
This isomorphism gives the vector $\vOmega$   corresponding to the matrix
$\widehat\vOmega$. It is the angular velocity of the body projected onto the
body axes. Moreover, if $\vz=\vA\vZ$, then $\widehat\vz =\vA\widehat\vZ\vA^T $,
so
\begin{equation}
  \label{eq:18}
  \dot\vA = \widehat\vomega \vA,
\end{equation}
where $\vomega=\vA\vOmega$ is the angular velocity in the inertial frame.

If a vector $\vec{z}$ is constant in the body frame, that is if
$\dot\vZ=\vzero$, then 
\begin{equation}
  \label{dotx}
  \dot\vz = \Dt \left(\vA\vZ\right)= \dot\vA \vZ =\widehat\vomega \vA\vZ= 
  \widehat\vomega\vz = \vomega\times \vz.  
\end{equation}
Moreover, we also have 
\begin{equation}
  \label{dotx2}
  \dot\vz = \vA \widehat\vOmega \vZ.
\end{equation}

The kinetic energy of the body $T$ is 
\begin{equation}
  \label{2T}
  T=
  \frac{m}{2} \dot\vr\cdot\dot\vr  + \frac{1}{2}
  \vOmega\cdot \vI\vOmega,
\end{equation}
where  $m$ is the body mass, $\vI$ is its tensor of inertia and $\vr$ is the radius vector of its center of mass. 
The potential energy of the body is given by 
\begin{equation}
  \label{U}
  U  = m g \ve_3 \cdot\vr,
\end{equation}
where $g$ denotes the gravitational acceleration. 

\begin{figure}[ht]
  \begin{center}
    \includegraphics*{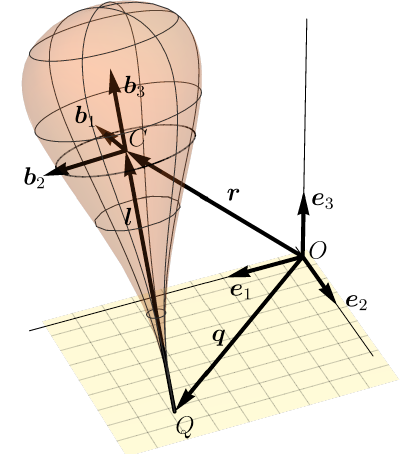}
  \end{center}
\caption{Geometry of the system.
  \label{geometry}}
\end{figure}

Our aim is to study the dynamics of a rigid body with an ideal sharp tip that is constrained to move in a horizontal plane which is assumed to be perfectly smooth, so no friction forces are acting on the body. We will call this model the sliding top.  The tip is a point $Q$ fixed in the body.  Let $\vq=\overrightarrow{OQ}$ be its radius vector, so we assume that $\vq=[q_1,q_2,0]^T$. We also denote $\vl = \vr-\vq$, see Fig.~\ref{geometry}. The point $Q$ is fixed
in the body, so $\vL = \vA^T\vl$ is a constant vector. 

The configuration of the system is given by a vector $\vq$ and the orientation
matrix $\vA$, so the configuration space  is $\R^2\times\mathrm{SO}(3,\R)$.
Because $\vr=\vq+\vl$ and $\vq\cdot\ve_3=0$, we can rewrite the kinetic and
potential energies in the form of
\begin{equation}
  T= \frac{m}{2}(\dot\vq +\dot\vl)\cdot (\dot\vq +\dot\vl)+ \frac{1}{2}
  \vOmega\cdot \vI\vOmega,\qquad U = m g \ve_3 \cdot\vl.
  \label{eq:kinipot}
\end{equation}
It is easy to notice that coordinates $q_1$ and $q_2$ are cyclic. The
corresponding momenta 
\begin{equation*}
  p_i = m( \dot q_i +\dot l_i), \qquad i=1,2,  
\end{equation*}
are first integrals of the system (the horizontal velocity of the system center of mass is constant). Without loss of generality, we fix their values to zero.  In
effect, the reduced Lagrange function of the system is
\begin{equation}
  \label{L} 
 \mathcal{L} = \frac{1}{2} m {\left(\ve_3\cdot\dot \vl\right)}^2 +\frac{1}{2} 
  \vOmega\cdot \vI\vOmega  - m g \ve_3 \cdot\vl.
\end{equation} 
Denoting $\vGamma =\vA^T\ve_3$, and using the fact that $\dot \vl = \vomega\times
\vl$, we can rewrite the Lagrange function \eqref{L} in the form 
\begin{equation}
  \label{Lf}
 \mathcal{L} = \frac{1}{2}\vOmega\cdot \vI\vOmega  +
  \frac{1}{2} m \left[\vGamma\cdot(\vOmega\times\vL)\right]^2 - 
  m g \vGamma\cdot\vL.
\end{equation}
Equations of motion  have the form 
\begin{equation}
  \label{eq:lag_borisov} 
  \begin{split}
    & \Dt \left(\pder{\mathcal{L}}{\vOmega}\right) = \pder{\mathcal{L}}{\vOmega}\times\vOmega +
    \pder{\mathcal{L}}{\vGamma}\times\vGamma, \\ 
    & \Dt  \vGamma = \vGamma \times \vOmega,
  \end{split}
\end{equation}
see,  \cite{Borisov:19::}.
Since 
\begin{equation}
  \label{eq:dL}
  \begin{split}
&\frac{\partial \mathcal{L}}{\partial \vOmega}=
\vI\vOmega+m[\vOmega\cdot(\vL\times\vGamma)](\vL\times\vGamma),\\
&\frac{\partial \mathcal{L}}{\partial \vGamma}=
m[\vGamma\cdot(\vOmega\times\vL)](\vOmega\times\vL)-mg\vL,
\end{split}
\end{equation}
the explicit form of equations \eqref{eq:lag_borisov} is
\begin{equation}
\begin{split}
&\left[\vI+m(\vL\times\vGamma)\otimes(\vL\times\vGamma) \right] 
\Dt\vOmega=\vI\vOmega\times \vOmega\\
&-
m\left[(\vGamma\times\vOmega)\cdot(\vOmega\times\vL)+
g\right](\vL\times\vGamma),\\
& \Dt  \vGamma = \vGamma \times \vOmega,
\end{split}
 \label{eq:lag_borisovexpl} 
\end{equation}
where $\otimes$ denotes the outer product of vectors, see e.g. \cite[Ch. 1]{Ortega:1987::}. 

Equations \eqref{eq:lag_borisovexpl}  have first integrals
\begin{equation}
\begin{split}
&H=  \frac{1}{2}\vOmega\cdot \vI\vOmega+ 
\frac{1}{2} m \left[\vGamma\cdot(\vOmega\times\vL)\right]^2 + 
m g \vGamma\cdot\vL,\\
&F_1= \vGamma^2,\quad F_2=\vGamma\cdot\vI\vOmega.
\end{split}
\label{integrals} 
\end{equation}

Let us  introduce the angular momenta defined as
\begin{equation}
\vM=\frac{\partial \mathcal{L}}{\partial \vOmega}  = 
\vJ \vOmega,\quad \vJ=\vI+m(\vL\times \vGamma)\otimes(\vL\times \vGamma),
\end{equation}
compare with formula (1.100) in  \cite{Borisov:19::}. 
Multiplying both sides of the first equation \eqref{eq:dL} by $\vGamma$ and by $\vL$,  we deduce that 
\begin{equation}
\begin{split}
&\vM\cdot\vGamma=  \vI\vOmega\cdot\vGamma=  \vI\vJ^{-1}\vM\cdot\vGamma,\\
&\vM\cdot\vL=  \vI\vOmega\cdot\vL=  \vI\vJ^{-1}\vM\cdot\vL. 
\end{split}
\end{equation}
Using these identities we find  the final form of equations of motion for the
sliding top
\begin{equation}
\begin{split}
 \Dt\vM&=\vM\times \vJ^{-1}\vM-mg\vL\times\vGamma\\
 &+ 
 m\left[\vGamma\cdot(\vJ^{-1}\vM\times\vL)\right]
 \left((\vJ^{-1}\vM\times\vL)\times\vGamma\right),\\
 \Dt  \vGamma &= \vGamma \times \vJ^{-1}\vM.
\end{split}
 \label{eq:lag_borisovMM} 
\end{equation}

\begin{figure}[ht]
  \begin{center}
    \includegraphics*{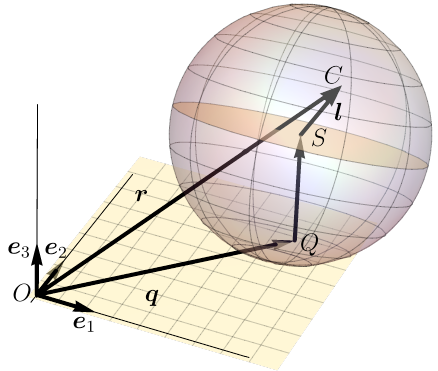}
  \end{center}
\caption{Another realization of the system.
  \label{geometry2}}
\end{figure}
Let us notice that the same equations of motion have another system: a ball with a displaced center of mass relative to its geometric center, sliding on a smooth
horizontal plane.  The difference is that unlike as is for a top, there are no
falling trajectories. The analogy between these two systems from the point of
view of permanent rotations was noted in \cite{Bizyaev:22::}. Indeed, if the
ball has radius $a$, then according to geometry presented in
Fig.~\ref{geometry2} the radius vector of the center of mass is
\[
\vr=\vq+a\ve_3+\vl,
\]
hence $\dot\vr=\dot\vq+\dot\vl$, and $U=mg\ve_3\cdot\vr=mg\ve_3\cdot\vl$, where we omitted the constant term $mga$. Thus, the kinetic $T$ and potential $U$ energies are the same as in \eqref{eq:kinipot}.

System~\eqref{eq:lag_borisovMM}  has first integrals
\begin{equation}
\begin{split}
&H=  \frac{1}{2} \vJ^{-1}\vM\cdot \vI \vJ^{-1}\vM+ 
\frac{1}{2} m \left[\vGamma\cdot( \vJ^{-1}\vM\times\vL)\right]^2 \\
&+ 
m g \vGamma\cdot\vL,\quad F_1= \vGamma^2,\quad F_2=\vGamma\cdot\vM.
\end{split}
\label{integralsM} 
\end{equation}
Equations of motion \eqref{eq:lag_borisovMM} can be written as
\begin{equation}
  \label{eq:poissonowska}
  \dot \vX = \J(\vX) \pder{H}{\vX}(\vX),\quad\text{where}\quad \vX= (\vM,\vGamma)
\end{equation}
and $\J(\vX)$ is $6\times6$ matrix with the following block structure
\begin{equation}
  \label{eq:Poisson}
  \J(\vX)= \begin{bmatrix}
    \widehat\vM &  \widehat\vGamma \\
    \widehat\vGamma & \vzero
  \end{bmatrix}.
\end{equation}
The matrix $\J(\vX)$ defines a degenerate Poisson bracket. For two
smooth functions $F(\vX)$ and $G(\vX)$ we define it as
\begin{equation}
  \label{eq:poissb}
  \{F,G\}(\vX) = \pder{F}{\vX}^T \J(\vX) \pder{G}{\vX}.
\end{equation}
Thus, the Poisson bracket of $\vM$ and $\vGamma$ given by $\J(\vX)$ is 
\[
\{M_i,M_j\}=-\varepsilon_{ijk}M_k,\,\,\,
 \{M_i,\Gamma_j\}=-\varepsilon_{ijk}\Gamma_k,\,\,\, \{\Gamma_i,\Gamma_j\}=0,
\]
so, it coincides with the Lie algebra $\mathfrak{e}(3)$. The functions $F_1$ and $F_2$ are Casimir functions of this Poisson structure. Thus, on a generic leaf,
the system is Hamiltonian and has two degrees of freedom.  For its
integrability, one additional first integral is missing, and it exists only for particular values of parameters of the system.  The system \eqref{eq:lag_borisovMM} depends on parameters: principal moments of inertia $A$, $B$ and $C$ that are eigenvalues of $\vI$ satisfying inequalities  
\[
\begin{split}
 &A>0,\qquad B>0,\qquad C>0,\\
 &A+B\geq C,\quad B+C\geq A,\quad C+A\geq B,
\end{split}    
\]
and components of vector $\vL$.

The main result of this paper is formulated in the following theorem.
\begin{theorem}
  \label{thm:we-main}
System~\eqref{eq:lag_borisovMM} is integrable with complex meromorphic first integrals only in the
following cases: 
\begin{enumerate}
\item The Euler case when $\vL=\vzero$ and the additional first integral is the
total angular momentum $F_3=\vM\cdot\vM$. The form of
equations~\eqref{eq:lag_borisovMM} shows that this case is not equivalent to the
case $g=0$ as in the classical heavy top. 
\item The Lagrange case where the body is symmetric and the center of mass lies on the axis of symmetry, for example, $B=A$ and $L_1=L_2=0$. Then the additional
first integral is the projection of the angular momentum onto the symmetry axis
$F_3=M_3$. In the completely symmetric case where $A=B=C$ the additional
first integral is $F_3=\vM\cdot\vL$.
\end{enumerate}
\end{theorem}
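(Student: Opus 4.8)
The statement has two halves: I must show that the two listed cases are genuinely integrable, and that no other choice of parameters yields meromorphic integrability. The first half is a direct verification from \eqref{eq:lag_borisovMM}. In the Euler case $\vL=\vzero$ we have $\vJ=\vI$ and the two force terms drop out, leaving $\Dt\vM=\vM\times\vI^{-1}\vM$; differentiating $F_3=\vM\cdot\vM$ along the flow gives $2\vM\cdot(\vM\times\vI^{-1}\vM)=0$, so together with $H$, $F_1$, $F_2$ this supplies the integral missing for Liouville integrability on a generic symplectic leaf. In the Lagrange case $B=A$, $L_1=L_2=0$ the same differentiation yields $\Dt M_3=0$ (the $\vb_3$ axis being simultaneously an inertia axis and the direction of $\vL$), and in the fully symmetric subcase $A=B=C$ one checks $\Dt(\vM\cdot\vL)=0$. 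So sufficiency is routine.

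The essential content is necessity, which I would split according to whether $g\neq0$ or $g=0$, exactly along the two regimes the abstract isolates. For $g\neq0$ the plan is to exploit that \eqref{eq:lag_borisovMM} is a perturbation of the classical heavy top: after rescaling so that the rank-one correction $m(\vL\times\vGamma)\otimes(\vL\times\vGamma)$ in $\vJ$ and the accompanying quadratic force are governed by a small parameter, the limiting system is the Euler--Poisson top. I would argue that meromorphic integrability of the sliding top persists in this limit, so the classical top with the same $\vI$ and $\vL$ must itself be integrable. By the Husson--Ziglin classification the only integrable classical tops are the Euler, Lagrange, Kovalevskaya and Goryachev--Chaplygin cases, which restricts the admissible parameters to four families. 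The Euler and Lagrange families coincide with the integrable sliding-top cases already found, so it remains to eliminate Kovalevskaya and Goryachev--Chaplygin. For each I would choose a convenient particular solution (a permanent rotation or a planar pendulum-type orbit), write the normal variational equations, reduce them to a second-order linear ODE with rational coefficients, and apply the Morales--Ramis theorem: computing the differential Galois group (via Kovacic's algorithm or a Riemann-scheme/Lam\'e comparison) I expect a non-abelian identity component, which obstructs integrability and rules these two families out.

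For $g=0$ the argument is in two stages, and this is where the real work lies. First I would prove that integrability forces the body to be symmetric. Here I would invoke Ziglin's theorem on the splitting of separatrices: for a generic asymmetric inertia tensor the $g=0$ system possesses homoclinic/heteroclinic solutions whose monodromy around the complexified orbit, acting on the variational equation, generates a group with non-trivial commutators; the existence of the required independent meromorphic integrals would force these monodromy generators to share common invariant structure, and the resulting obstruction collapses unless $\vI$ is of revolution. Once symmetry $A=B$ is established, I would again select a particular solution of the reduced symmetric system, form the variational equations, and run the differential Galois analysis to exhibit a non-abelian identity component in every case except $\vL=\vzero$ and the on-axis case $L_1=L_2=0$. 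Finally I would verify that adding a gyrostatic term leaves these two cases integrable, by re-checking the corresponding $F_3$.

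The hardest step, and the one I would budget most care for, is the explicit determination of the differential Galois groups of the normal variational equations in the Kovalevskaya/Goryachev--Chaplygin and the symmetric $g=0$ settings: the outcome hinges on identifying particular solutions for which the variational system is solvable in closed form, on correctly reducing the $4\times4$ normal system to a scalar second-order equation, and on the delicate Kovacic-algorithm case analysis that certifies non-abelianity. The Ziglin splitting argument in the $g=0$ symmetry reduction is a close second, since it requires genuine control of the complexified separatrix and of its monodromy generators.
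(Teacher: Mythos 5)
Your overall architecture coincides with the paper's (sufficiency by direct verification; for $g\neq 0$ reduce to the classical-top classification and eliminate Kovalevskaya and Goryachev--Chaplygin by Morales--Ramis along a particular solution, which the paper does via a reduction to a Lam\'e equation with $n=\tfrac{1}{2}(-1\pm\sqrt{9-8A/C})$ complex; for $g=0$ first force symmetry via Ziglin, then Galois analysis of the symmetric cases), but two of your pivotal steps are mis-specified and would fail as written. First, ``meromorphic integrability of the sliding top persists in this limit'' is not a valid inference: as the perturbation parameter degenerates, the additional integral can blow up or become functionally dependent on $H$, $F_1$, $F_2$, so integrability of a limiting system does not follow from integrability of nearby systems. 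The paper takes no limit at all: it introduces the weights $\deg_{\valpha} M_i=1$, $\deg_{\valpha}\Gamma_i=2$, compatible with the $\mathfrak{e}(3)$ Poisson bracket, so that $H=H_2+H_6$ with $H_2$ exactly the heavy-top Hamiltonian, shows that the lowest weight-homogeneous term $F^\circ$ of any meromorphic first integral $F$ is a first integral of the truncated (heavy-top) system, and uses the Ziglin Lemma to choose $F$ so that $H^\circ$, $F_1$, $F_2$, $F^\circ$ are functionally independent (Lemma~\ref{lem:lowest}). This leading-term mechanism is what must replace your persistence claim.

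Second, your symmetry-forcing step for $g=0$ conflates Ziglin's two theories. The picture you describe --- monodromy of the variational equation along a complexified orbit generating non-commuting matrices --- is Ziglin's 1982--83 monodromy theory, and it is unavailable here precisely because the asymmetric $g=0$ system has no suitable explicit particular solution (the paper states this is why the case is hard). What is actually used (Theorem~\ref{thm:Ziglin}) is the 1980 separatrix-splitting result, which is a residue criterion, not a monodromy argument: setting $\vL=\varepsilon\vS$ gives $H=H_0+\varepsilon^2 H_1+\cdots$ as a perturbation of the free Euler top; in Andoyer--Deprit variables one Fourier-expands $H_1$ in $\ag$ and must show that the sums of residues of the modes $h_1$ and $h_2$ at $z=\pm\rmi$, evaluated on the complexified heteroclinic solutions \eqref{eq:solsEulertop}, cannot all vanish for an asymmetric body. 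Crucially, $h_1$ alone does not suffice: its residues vanish exactly on the Hess--Appelrot-type locus \eqref{eq:hess1}, so the paper needs $h_2$ as well and a laborious elimination ($S_1\neq0$ versus $S_1=0$, the determinants $W_1$, $W_2$, and the combinations culminating in \eqref{eq:condory}). Without these explicit residue computations your plan has no engine producing the conclusion $A=B$. By contrast, the remainder of your outline for the symmetric case --- invariant manifold, reduction of the normal variational equations to a scalar second-order equation, and the Kovacic case analysis excluding everything but $\vL=\vzero$ and the on-axis Lagrange configuration --- does match the paper's Lemmata~\ref{lem:2} and~\ref{lem:3}, including the need to treat the subcases $d\neq0$ and $d=0$ separately with different invariant manifolds.
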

Moreover, we can also identify the Hess-Appelrot case. In this case, there
exists a polynomial of degree one which defines a five-dimensional invariant
hyperplane. As for the classical heavy top, without loss of generality, we
assume that $A\geq B\geq C$, and then we set  
\begin{equation}
L_2=0,\quad \sqrt{(A-B)C}\,L_3\pm\sqrt{(B-C)A}\,L_1=0. 
\label{eq:HAconditions}
\end{equation}
Under these conditions, the zero level of the polynomial    
\begin{equation}
  G =\sqrt{(A-B)C}\,M_1\mp \sqrt{(B-C)A}\,M_3 
 \label{eq:HAdarbus}
\end{equation}
is invariant.

All the above-mentioned cases were listed in \citep{Bizyaev:14::} in the context
of the study of a model with a ball with a displaced center of mass sliding in a smooth horizontal plane.

Notice that in Theorem~\ref{thm:we-main} we do not assume gravitational acceleration $g\neq 0$. The proof of this theorem splits naturally into two cases. In the first case, we assume that $g\neq 0$. Interestingly enough, the proof in the case $g=0$ is much more difficult.  

The plan for the rest of the article is as follows. In Section~\ref{sec:integrability_preliminar} we show the relation between the integrability of the considered system and the classical heavy top and illustrate its dynamics by means of the Poincar\'e cross-sections.  In Section~\ref{sec:Ziglin} analytical tools used for the integrability analysis are described. Sections~\ref{sec:anal} and~\ref{sec:g0} present proof of the main Theorem~\ref{thm:we-main} for cases in the presence of constant gravity field with acceleration $g\neq0$ and for $g=0$, respectively. In Section~\ref{sec:finalremarks} we summarize the results and show that the addition of a gyrostat term to two integrable cases of our system does not destroy its integrability. For interested readers, Appendices~\ref{sec:2ndorderlde} and \ref{sec:lamiaste} contain effective tools for checking differential Galois groups of second-order linear differential equations with rational coefficients and the special case of the Lam\'e equation necessary to follow applications of Theorem~\ref{thm:Morales}. For completeness, in Appendix~\ref{sec:residius_explicitely} are given some more complicated  expressions which appear in proof of Lemma~\ref{lem:3} during the application of Theorem~\ref{thm:Ziglin}.

\section{Integrability analysis. Preliminary considerations.}
\label{sec:integrability_preliminar}

\subsection{Relation of the system with classical heavy top}
\label{sec:withheavytop}

To compare  the considered system with the classical heavy top problem (a body with a fixed point in a constant gravity field), we  introduce the weights $\valpha=(1,1,1,2,2,2)$ for variables $\vX=(\vM,\vGamma)$, so that the weighted degree with respect to these weights is defined by   
\begin{equation*}
  \deg_{\valpha} M_i = 1, \qquad \deg_{\valpha} \Gamma_i = 2, \qquad i=1,2,3. 
\end{equation*}
Let $\vE_{\valpha}$ be the Euler field associated with these weights 
\begin{equation*}
  \vE_{\valpha} =\sum_{i=1}^6 \alpha_i X_i \pder{\phantom{X}}{X_i}.
\end{equation*}
Then, a function $F(\vX)$ is weight-homogeneous of degree $k$ if 
\begin{equation*}
  \vE_{\valpha}[F] = k F, 
\end{equation*}
and a  vector field $\vV(\vX)$ is weight-homogeneous of degree $k$ if 
\begin{equation*}
  [\vE_{\valpha}, \vV] = k\vV,
\end{equation*}
where here $[\cdot,\cdot]$ denotes the Lie bracket. 

The chosen weights are compatible with the Poisson structure defined by
$\J(\vX)$. Thus, if $F(\vX)$ is weight-homogeneous of degree $k$, then the
Hamiltonian vector field $\J(\vX) \tfrac{\partial F}{\partial\vX}$ is weight
homogeneous of degree $k-1$.  Moreover, if $F(\vX)$ and $G(\vX)$ are weight
homogeneous, then $\{F,G\}$ is weight-homogeneous and if $\{F,G\}\neq 0$ then
$\deg_{\valpha}\{F,G\}=\deg_{\valpha}F +\deg_{\valpha}G -1$.

Let $F(\vX)$ be a meromorphic function. Then it admits weight homogeneous
expansion 
\begin{equation*}
 F(\vX) =  F_m(\vX) + \cdots,
\end{equation*}
here $F_m(\vX)$ is a non-zero weight-homogeneous function of degree $m$, and
dots denote weight-homogeneous terms of degree higher than $m$. Function $F_m$
is called the lowest order term of $F$ and is denoted $F^\circ$. 
The Hamiltonian $H(\vX)$ is a rational function, so it has a unique expansion 
\begin{equation}
    H(\vX)  = H_2(\vX) + H_6(\vX) + \cdots,
    \label{eq:Hexpan}
\end{equation}
starting from the lowest second degree term $H_2(\vX)=H^\circ(\vX)$ and the next is of degree 6, which have explicit forms 
\begin{align}
  \label{eq:H2}
    H_2(\vX) = &\frac{1}{2}\vM\cdot  \vI^{-1}\vM + m g \vGamma\cdot\vL. \\ 
    \label{eq:H6}
    H_6(\vX) = &-\frac{1}{2} m \left[\vGamma\cdot( \vI^{-1}\vM\times\vL)\right]^2 .
\end{align}
Let us notice that Hamilton equations \eqref{eq:poissonowska} with $H=H_2$ and
the Poisson structure  given in \eqref{eq:Poisson} are the Euler-Poisson equations for the classical heavy top
\begin{equation}
\begin{split}
 &\Dt\vM=\vM\times \vI^{-1}\vM-mg\vL\times\vGamma,\\
 &\Dt  \vGamma = \vGamma \times \vI^{-1}\vM.
\end{split}
 \label{eq:heavytop} 
\end{equation}
To make this observation useful, we need the following lemma. 
\begin{lemma}
  \label{lem:lowest}
If the system~\eqref{eq:poissonowska} is integrable with meromorphic first
integrals, then system~\eqref{eq:heavytop} is integrable. 
\end{lemma}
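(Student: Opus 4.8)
The plan is to exploit the weight-homogeneous expansion \eqref{eq:Hexpan} and treat \eqref{eq:poissonowska} as a higher-order perturbation of the heavy top \eqref{eq:heavytop}. Recall that integrability of \eqref{eq:poissonowska} on a generic symplectic leaf means the existence of one meromorphic first integral $F$ with $\{H,F\}=0$ that is functionally independent of $H$ modulo the Casimirs $F_1=\vGamma^2$ and $F_2=\vGamma\cdot\vM$; equivalently, $\mathrm{d}F_1,\mathrm{d}F_2,\mathrm{d}H,\mathrm{d}F$ are linearly independent at a generic point. From such an $F$ I would manufacture a corresponding extra first integral of \eqref{eq:heavytop}.

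The first step is to show that the lowest-order term of a first integral of the full system is a first integral of the heavy top. Expand $H=H_2+H_6+\cdots$ as in \eqref{eq:Hexpan} and $F=F^\circ+\cdots$ into weight-homogeneous components with respect to $\valpha$. Since the bracket is compatible with the weights, $\{H,F\}$ decomposes into weight-homogeneous pieces, and its term of lowest degree, namely $\deg_{\valpha}F^\circ+1$, is exactly $\{H_2,F^\circ\}$; every other contribution such as $\{H_6,F^\circ\}$, or $\{H_2,\cdot\}$ applied to the higher-order terms of $F$, has strictly larger weighted degree. As $\{H,F\}=0$, each homogeneous component must vanish, so $\{H_2,F^\circ\}=0$. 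Because $H_2=H^\circ$ generates precisely the heavy top \eqref{eq:heavytop}, the function $F^\circ$ is a meromorphic (indeed weight-homogeneous) first integral of the heavy top. The same holds trivially for $H^\circ=H_2$ itself and for the Casimirs $F_1,F_2$, which are already weight-homogeneous and remain Casimirs of the unchanged Poisson structure.

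The main step, and the main obstacle, is the functional independence of these lowest-order integrals: leading terms of functionally independent functions may become dependent, so $F^\circ$ need not a priori be independent of $H_2,F_1,F_2$. I would resolve this by a valuation-raising argument. If $F^\circ$ is functionally dependent on the weight-homogeneous functions $H_2,F_1,F_2$, then comparing weighted degrees gives $F^\circ=\Psi(H_2,F_1,F_2)$ for a meromorphic combination $\Psi$ that is weight-homogeneous of degree $\deg_{\valpha}F^\circ$. Replacing $F$ by $F-\Psi(H,F_1,F_2)$ — which is still a first integral of \eqref{eq:poissonowska} and still functionally independent of $H,F_1,F_2$, since we only subtract a function of them — cancels the lowest-order term and strictly raises the weighted degree of the leading term. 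Iterating, either after finitely many steps the new leading term becomes functionally independent of $H_2,F_1,F_2$, so that it is the sought extra first integral of \eqref{eq:heavytop} and the heavy top is integrable, or the leading degree grows without bound, which forces $F$ to agree with a convergent series in $H,F_1,F_2$ and hence to be functionally dependent on them, contradicting the assumed independence.

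The bracket identity $\{H_2,F^\circ\}=0$ is essentially free once the grading is in place, so the delicate point is the termination of the valuation-raising procedure in the meromorphic (analytic) category, which is what makes the non-terminating alternative genuinely yield functional dependence on $H,F_1,F_2$. This is where I expect the real work to lie; alternatively one could invoke a general principle that the lowest-order truncation of an integrable weight-homogeneous perturbation is integrable, but the direct argument above keeps everything internal to the setup already established.
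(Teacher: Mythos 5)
Your first step is exactly the paper's: since the weights are compatible with the Poisson structure, the lowest weight-homogeneous component of $0=\{H,F\}$ is $\{H_2,F^\circ\}$ (all other contributions, e.g.\ $\{H_6,F^\circ\}$ or brackets of $H_2$ with higher terms of $F$, have strictly larger degree), whence $F^\circ$ is a first integral of the heavy top~\eqref{eq:heavytop}. That part is correct. The genuine gap lies in your treatment of functional independence, which is precisely the point the paper does \emph{not} argue by hand but settles by invoking the Ziglin Lemma~\cite{Ziglin:82::b}: if meromorphic functions are functionally independent, then suitable \emph{polynomial combinations} of them have functionally independent lowest-order forms --- a finite substitution with no infinite iteration to control. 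Your valuation-raising scheme breaks at its very first step: functional dependence of $F^\circ$ on $H_2,F_1,F_2$ only says the differentials are generically linearly dependent, which yields $F^\circ=\Psi(H_2,F_1,F_2)$ with $\Psi$ at best a locally defined, possibly multivalued (algebraic) function --- not a single-valued meromorphic function of three variables. Already for $f=xy$ and $g=x^2y^2$ one has $f$ functionally dependent on $g$ via $f^2=g$, yet $f$ is no single-valued function of $g$; in such a situation $F-\Psi(H,F_1,F_2)$ is not a well-defined meromorphic first integral and your replacement cannot be performed. Ziglin's Lemma circumvents exactly this by allowing polynomial expressions in \emph{all} the integrals, including $F$ itself, rather than subtraction of a function of $H,F_1,F_2$ alone.

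The second defect is the non-terminating branch of your dichotomy. From the leading degree of $F-\sum_{j\le N}\Psi_j(H,F_1,F_2)$ tending to infinity you conclude that $F$ ``agrees with a convergent series in $H,F_1,F_2$'' and is hence functionally dependent on them. But unbounded weighted valuation gives equality only in the formal, filtration-completed sense: the series $\sum_j\Psi_j$ need not converge, and even formal membership in the completion of the subfield generated by $H,F_1,F_2$ does not translate into functional dependence of meromorphic functions without a substantial additional argument, which you do not supply. Ruling out this branch (equivalently, guaranteeing termination after finitely many polynomial substitutions) is the actual content of Ziglin's Lemma, and you correctly sense that ``this is where the real work lies'' --- but that work is neither done nor reducible to the sketch given. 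As written, your proposal reproduces the paper's first (easy) half and replaces its one-line citation in the second half by an incomplete ad hoc argument; to close the lemma you should either cite Ziglin's Lemma as the paper does, or genuinely prove it, working with polynomial combinations of $F,H,F_1,F_2$ and a transcendence-degree or induction argument rather than the subtraction-and-pass-to-the-limit scheme above.
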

\begin{proof}
By  the above-shown properties, the Poisson bracket of $F(\vX) =
F^\circ(\vX)+\cdots$ and $G(\vX) = G^\circ(\vX)+\cdots$  is 
\begin{equation*}
  \{F,G\} = \{F^\circ,G^\circ\} + \cdots 
\end{equation*}
Hence, if $F(\vX)$ is a first integral of~\eqref{eq:poissonowska} which is
functionally independent with $H$, $F_1$, and $F_2$, then
$F^\circ(\vX)$ is a first integral~\eqref{eq:H2}.
Furthermore, if $F$ commute with $F_1$ and $F_2$ then $F^\circ$ also commute with
$F_1$ and $F_2$. By the Ziglin Lemma \cite{Ziglin:82::b} we can always choose
$F$ in such a way that $H^\circ$, $F_1$, $F_2$ and $F^\circ$ are functionally
independent. 
\end{proof}
By the above lemma, the sliding top is integrable only in the cases for which
the heavy top is integrable. But the problem of the integrability of the heavy
top is solved completely.  In the following, we recall known facts.  

The heavy top equations of motion ~\eqref{eq:heavytop} have first integrals
$H_2$, $F_1$ and $F_2$ given by \eqref{integralsM}. For their integrability,
just one additional first integral is necessary. The problem of the
integrability of these equations has been analyzed for centuries and the following integrable cases have been identified.
\begin{enumerate}
\item The Euler case (1758) corresponds to the situation when there is
  no gravity (i.e. when $g=0$) or $\vL=\vzero$ (the fixed point of
  the body is the center of mass). The additional first integral in
  this case is the total angular momentum $F_3 = \vM\cdot\vM$.
\item In the Lagrange case~\cite{Lagrange:1889::} the body is symmetric (i.e. two of its principal moments of inertial are equal)
  and the fixed point lies on the symmetry axis. The additional first integral in this case is the projection of the angular momentum onto the symmetry axis. If we assume that $A=B$, then in the Lagrange case $L_1=L_2=0$, and $F_3=M_3$.
\item In the Kovalevskaya case~\citep{Kowalevski:1888::,Kowalevski:1890::} the
body is symmetric and the principal moment of inertia along the symmetry axis is
half of the principal moment of inertia with respect to an axis perpendicular to
the symmetry axis. Moreover, the fixed point lies in the principal plane
perpendicular to the symmetry axis. If $A=B=2C$, then (after an appropriate
rotation around the symmetry axis) we have in the Kovalevskaya case $L_2=L_3=0$.
The additional first integral has the form
 \begin{multline*}
F_3 = \left( \frac{1}{2}(M_1^2-M_2^2)-mg L_1 A\Gamma_1\right)^2 \\
+(M_1M_2-mg L_1 A \Gamma_2)^2.
\end{multline*}
\item In the Goryachev-Chaplygin case~\citep{ Goryachev:1910::} the body is
symmetric and, as in the Kovalevskaya case, the fixed point lies in the
principal plane perpendicular to the symmetry axis. If we assume that the third
principal axis is the symmetry axis, then in this case we have $A=B=4C$ and
$L_2=L_3=0$. In the Goryachev-Chaplygin case equations~\eqref{eq:heavytop}  are
integrable only at the level $F_1=0$ and the additional first integral has the
following form:
 \[
F_3 =
  M_3(M_1^2+M_2^2) -mg L_1 A M_1\Gamma_3.
\]
\end{enumerate}

For more details on recent studies and results in rigid body dynamics, mathematical structures, interpretations, and generalizations of these cases, see \cite{Borisov:19::}.
Furthermore, for many years, there was an open question of whether the list of integrable cases above is complete. The first important step in answering this
question was taken by \cite{Ziglin:80::d}, who proved the following theorem. 
\begin{theorem}[Ziglin, 1980]
\label{thm:z1}
If $(A-B)(B-C)(C-A)\neq 0$ and $mg\vL\neq\vzero$, then the Euler-Poisson equations \eqref{eq:heavytop} 
does not admit a real meromorphic first integral which is functionally independent together with $H_2$, $F_1$, and $F_2$.
\end{theorem}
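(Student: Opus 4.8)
The plan is to prove this classical non-integrability result of Ziglin via the method of analyzing the monodromy group of variational equations along a suitable particular solution, which is precisely the technique the Ziglin theorem \cite{Ziglin:82::b} was designed for. First I would exhibit an explicit particular solution of the Euler--Poisson equations \eqref{eq:heavytop} that is simple enough to make the variational equations tractable. A natural choice is a permanent rotation or, more fruitfully for this argument, a solution lying on an invariant plane: one seeks a solution for which $\vM$ and $\vGamma$ remain confined to a coordinate plane, so that the phase curve is a one-dimensional complex curve (typically an elliptic curve or its degeneration) parametrized by an elliptic or trigonometric function of time. Concretely, for the asymmetric heavy top one uses the solution where the body spins about a principal axis and $\vGamma$ precesses, giving a particular solution whose time-dependence is expressed through Jacobi elliptic functions.

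Second, I would write down the variational equations (the linearization of \eqref{eq:heavytop}) along this particular solution. Since the system possesses the three known first integrals $H_2$, $F_1$, $F_2$, their differentials furnish invariant relations that reduce the dimension of the effective variational system; after restricting to a symplectic transversal to the phase curve on a common level set of the integrals, the normal variational equations form a linear system with doubly-periodic (elliptic-function) coefficients. The key object is then the monodromy group of this normal variational system, viewed as a subgroup of $\mathrm{Sp}(2,\C)=\mathrm{SL}(2,\C)$, generated by the monodromy transformations obtained by analytically continuing solutions around the singular points of the elliptic curve.

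Third, I would invoke Ziglin's theorem in the form stated in \cite{Ziglin:80::d,Ziglin:82::b}: if the system admitted an additional meromorphic first integral functionally independent of $H_2$, $F_1$, $F_2$, then the monodromy group would have to preserve a certain meromorphic function (a rational invariant) on the phase space of the normal variational equations, which severely constrains the group---essentially forcing every pair of monodromy generators to share a common eigenvector or to be simultaneously reducible to a special commuting form. The heart of the proof is therefore the computation showing that, under the hypothesis $(A-B)(B-C)(C-A)\neq 0$ and $mg\vL\neq\vzero$, the monodromy group of the normal variational equations is \emph{non-resonant} and not of this special type: one produces two monodromy generators whose commutator is nontrivial and which do not possess a common invariant, contradicting the existence of the extra integral.

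The main obstacle, and the part demanding the most care, is the explicit determination of the monodromy group of the normal variational equations. This requires identifying the singular points of the associated second-order linear ODE with elliptic coefficients (an equation of Lam\'e or Picard--Fuchs type), computing the local exponents at each singularity, and extracting enough information about the global monodromy (for instance via the trace of a monodromy matrix, or via the differential Galois group in the refined Morales--Ramis reformulation referenced later in the paper) to verify that the non-degeneracy conditions on $A,B,C$ and $\vL$ translate into genuine non-commensurability of the exponents. Establishing that these exponents are generically irrational---so that the monodromy cannot be reduced to the integrable form---is exactly where the arithmetic hypothesis $(A-B)(B-C)(C-A)\neq0$ enters, and carrying this through rigorously (ruling out the measure-zero resonant parameter values that could restore an invariant) is the genuinely delicate step.
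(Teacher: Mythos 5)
Your proposal takes a route that is genuinely different from the one the paper relies on, and for this particular statement it has two concrete gaps. The paper's proof of Theorem~\ref{thm:z1} (following \cite{Ziglin:80::d,Ziglin:80::e}) is not the monodromy/variational-equations method at all: it is the separatrix-splitting criterion, Theorem~\ref{thm:Ziglin}. One regards the heavy top as a perturbation of the asymmetric Euler top (the weight-homogeneity of \eqref{eq:heavytop} allows one to scale $mg\vL$ into a small parameter, so smallness is no loss of generality), passes to Andoyer--Deprit variables, takes the double-asymptotic solutions joining the hyperbolic rotations about the middle principal axis, and shows that suitable Fourier modes of the perturbing Hamiltonian have non-vanishing residue sums, which excludes an additional integral analytic on the real phase space. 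This is exactly the scheme the authors themselves reproduce, with far heavier computations, in the proof of their Theorem~\ref{thm:g0gen}.

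The first gap in your plan is that the monodromy method requires an explicit non-equilibrium particular solution of \eqref{eq:heavytop}, and for a fully asymmetric body such a solution is available only for special directions of $\vL$ (e.g., pendulum-type solutions when $\vL$ is parallel to a principal axis); the invariant manifold $\cN$ exploited in Section~\ref{sec:anal} requires $B=A$ and $L_2=L_3=0$. The hypothesis of Theorem~\ref{thm:z1} allows an arbitrary $\vL\neq\vzero$, and for generic $\vL$ the solution you invoke, ``the body spins about a principal axis and $\vGamma$ precesses,'' simply does not satisfy the equations: the torque $-mg\vL\times\vGamma$ drives the motion off any candidate invariant plane. Your argument could therefore at best cover $\vL$ along a principal axis, whereas the separatrix method needs no particular solution of the full system, only the heteroclinic structure of the unperturbed Euler top. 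The second gap is the real-versus-complex issue: Ziglin's monodromy theory and its Morales--Ramis refinement obstruct \emph{complex} meromorphic integrals in a neighborhood of the complexified phase curve, while Theorem~\ref{thm:z1} asserts nonexistence of a \emph{real} meromorphic integral, which need not extend meromorphically to the complex domain on which the monodromy acts. This passage is genuinely nontrivial---it is precisely why the real-analytic statement for the symmetric top required the separate work \cite{Ziglin:97::}---whereas the splitting criterion of Theorem~\ref{thm:Ziglin} operates directly with integrals analytic on a real domain. Note finally that the Lam\'e-type exponent analysis you sketch is what Ziglin and the present authors carry out in the complementary, \emph{symmetric} situations (Theorem~\ref{thm:z2}, Sections~\ref{sec:anal} and~\ref{sec:g0}), not in the asymmetric case that Theorem~\ref{thm:z1} addresses.
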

The proof of this theorem is based on the original method of splitting the separatrices developed by Ziglin, see also for details \cite{Ziglin:80::e}. 
More precisely, the proof consists in the application of Theorem~\ref{thm:Ziglin} given in Section~\ref{sec:Ziglin}.  
The final answer to the integrability problem was also given by Ziglin in~\cite{Ziglin:83::b}. Using his elegant theory based on properties of the monodromy group of variational equations, he proved the following theorem.  
\begin{theorem}[Ziglin, 1983]
\label{thm:z2}
The complexified Euler-Poisson system for a symmetric body is integrable on the
level $F_2=0$ with complex meromorphic first integrals only in the four classical cases.
\end{theorem}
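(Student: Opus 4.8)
The plan is to prove that the symmetric heavy top (say $A=B$) is non-integrable on the level $F_2=0$ outside the four classical cases by analyzing the monodromy of the variational equations along a suitably chosen particular solution, following Ziglin's 1983 method. First I would exploit the symmetry $A=B$ to reduce the phase space: under this symmetry the system admits, besides $H_2$, $F_1$, $F_2$, the partial integral structure coming from the $\mathrm{SO}(2)$-symmetry about the figure axis, and on the invariant level $F_2=0$ (which means $\vGamma\cdot\vM=0$, a geometrically distinguished locus) one can locate an explicit particular solution. The natural candidate is a solution lying in the invariant plane corresponding to pendulum-like or permanent-rotation motion, whose closure is an elliptic or rational curve $\Gamma_0$ in the complexified phase space; I would parametrize it explicitly so that the variational equations become linear ODEs with coefficients that are meromorphic functions on $\Gamma_0$.

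Next I would set up the normal variational equations (NVE) transverse to $\Gamma_0$. The key mechanism in Ziglin's approach is that an additional meromorphic first integral, functionally independent of the known ones, forces the monodromy group $\mathscr{G}$ of the NVE to possess a nontrivial rational invariant (a meromorphic function on the relevant projective space preserved by all monodromy transformations). Ziglin's theorem then asserts that if $\mathscr{G}$ contains a non-resonant element $g$ (one whose eigenvalue multipliers are not roots of unity and satisfy a genericity condition), then any other monodromy element must either commute with $g$ or act in a specially constrained way; the existence of two non-commuting such elements obstructs the invariant and hence the extra integral. So the technical core is: compute the monodromy around the singular points of the NVE on $\Gamma_0$, identify the local exponents, and exhibit two monodromy generators that fail to share an invariant unless the parameters fall into one of the four classical cases.

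The main obstacle will be the explicit determination of the monodromy group of the NVE, since in general the NVE is a higher-order or coupled system rather than a single second-order equation, and its monodromy need not reduce to a classical special-function equation. I would handle this by diagonalizing or block-decomposing the NVE using the $A=B$ symmetry so that the transverse dynamics splits into scalar (or at worst $2\times 2$) pieces; each piece should reduce, after the elliptic parametrization of $\Gamma_0$, to a Lam\'e-type or hypergeometric-type equation whose monodromy is classically known via its local exponents at the branch points. The resonance/commutativity analysis then becomes a condition on these exponents as functions of $A,B,C$ and $\vL$, and I would show that the Ziglin obstruction vanishes precisely on the parameter loci defining the Euler, Lagrange, Kovalevskaya, and Goryachev--Chaplygin cases.

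Finally, I would verify that the analysis is non-vacuous: that the chosen element $g\in\mathscr{G}$ is genuinely non-resonant for generic parameters (this is where the elliptic modulus or the exponent differences must be checked not to degenerate), and that the two generators selected do not commute. The conclusion then follows from Ziglin's theorem by contradiction: assuming an independent meromorphic integral on $F_2=0$ yields a common monodromy invariant incompatible with the computed group, so integrability forces the parameters into one of the four classical cases. I expect the genericity verification of the non-resonance condition, together with the careful bookkeeping of exponent differences across the singular fibers, to be the delicate part rather than the structural setup.
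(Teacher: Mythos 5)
This statement is not proved in the paper at all: Theorem~\ref{thm:z2} is recalled from Ziglin's 1983 work \cite{Ziglin:83::b} purely as a known background result (with the real-analytic counterpart in \cite{Ziglin:97::} and a differential Galois variant in \cite{mp:05::a}), so there is no in-paper proof to compare your attempt against. Measured against the cited source, your outline reconstructs Ziglin's original argument faithfully: the particular solution is the pendulum-type motion in the invariant plane $M_1=M_3=\Gamma_2=0$, which automatically lies on $F_2=\vM\cdot\vGamma=0$ (this is also why the theorem must be restricted to that level, since Goryachev--Chaplygin is integrable only there); its closure is an elliptic curve, the generic symplectic leaf is four-dimensional, so the normal variational equation is a single second-order equation and your worry about higher-order coupling does not materialize; and Ziglin's non-resonance dichotomy (any monodromy element either commutes with a non-resonant $g$ or exchanges its eigendirections subject to a trace constraint) is indeed the key lemma. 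Your closing concern about verifying non-resonance is exactly where Ziglin works hardest: he varies the energy within the one-parameter family of pendulum solutions to escape resonant multiplier values, a freedom your sketch gestures at but does not exploit explicitly.

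What your proposal lacks --- and what constitutes essentially the entire content of Ziglin's paper --- is the execution: the explicit reduction of the NVE to an equation with computable local exponents, the bookkeeping of exponent differences as functions of $A/C$ and of the position of $\vL$, and the exhaustive elimination showing that the monodromy obstruction degenerates precisely on the loci $\vL=\vzero$ (Euler), $L_1=L_2=0$ (Lagrange), $A=B=2C$ with $L_2=L_3=0$ (Kovalevskaya), and $A=B=4C$ with $L_2=L_3=0$ (Goryachev--Chaplygin). As written, your text is a correct and well-informed plan of the known proof rather than a proof; none of the monodromy matrices, invariants, or parameter conditions is actually derived, so the claim that the obstruction vanishes \emph{only} in the four classical cases remains entirely unverified within the proposal itself.
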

The non-integrability with real meromorphic first integrals was proved
in~\cite{Ziglin:97::}. Similar results for a symmetric heavy top were given by the authors of the differential Galois group of variational equations in~\cite{mp:05::a}.

\subsection{Numerical analysis of the system}
\label{sec:numerical}

In the previous section we showed how the sliding top is related to the classical heavy top.  But immediately we notice the differences: namely in our Theorem~\ref{thm:we-main}  the Kovalevskaya case does not appear, moreover, we claim that even when the gravity field vanishes the system is not integrable.
Simple numerical experiments show that both systems are fundamentally different. 

We numerically integrate the equations~\eqref{eq:lag_borisovMM}. For chosen values
of parameter $\vI$, $\vL$, $m$ and $g$ we fix a common level of the first integrals
$H$, $F_1$ and $F_2$. Generically, it is three-dimensional. On this level, we choose a cross-section plane. We mark on it the points where an orbit passes through this plane in a specified direction. For the presentation of the results, we use the
Androyer-Deprit variables $(\aG,\aL,\aH,\ag,\al, \ah)$, see \cite{Borisov:19::}.
As the cross-section plane, we choose $\ag=\pi/2$ and the cross direction is fixed by
$\dot\ag>0$. As coordinates on this plane, we choose $(\al,\aL/\aG)$.

\begin{figure*}[htp]
   \subfigure[ Classical top\label{fig:topasym} ]{
      \includegraphics[width=0.48\textwidth]{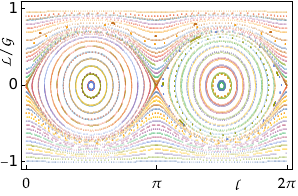}}\hspace{-.1cm}
    \subfigure[Sliding top \label{fig:sliderasym}]{
  \includegraphics[width=0.48\textwidth]{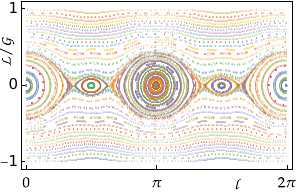}}
\caption{Poincar\'e cross-sections for asymmetric classical~(a),
and sliding~(b) tops when $g=0$. For the sliding top $\vL=\left(\tfrac{8}{9},0,0\right)$.
\label{fig:asymmetric}}
\end{figure*}

\begin{figure*}[htp]
   \subfigure[$L=|\vL|=1/4$\label{fig:sliderasymdL1n4}]{
      \includegraphics[width=0.48\textwidth]{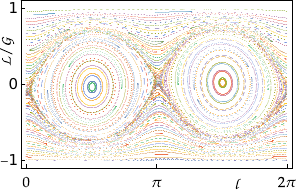}}\hspace{-.1cm}
    \subfigure[$L=|\vL|=1/2$ \label{fig:sliderasymdL1n2}]{
  \includegraphics[width=0.48\textwidth]{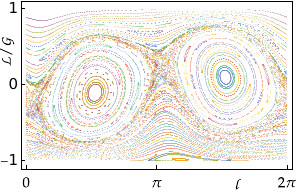}}
\caption{Poincar\'e cross-sections for asymmetric  sliding~top when $g=0$ for
two different lengths of $L=|\vL|$: (a) 1/4 and (b) 1/2.  Values of $\vL$: (a) 
$\vL=(1/6,1/8,\sqrt{11}/24)$, (b) $\vL=(1/3,1/4,\sqrt{11}/12)$.
\label{fig:asymmetricdifferentL}}
\end{figure*}

In all the examples below, we assume that both tops have the same moments of inertia
$I_1=1$, $I_2=\tfrac{2}{3}$ $I_3=\tfrac{1}{2}$ and mass $m=1$. The fixed values
of first integrals are $H=\tfrac{3}{2}$ and $F_2=\tfrac{1}{20}$. 

At first,  we consider the case without gravity ($g=0$). The dynamics in the case of the classical top shows in Fig.~\ref{fig:topasym} where we notice two unstable
periodic solutions close to points $(\al,\aL)=(0,0)$ and $(\al,\aL)=(\pi,0)$ in
the cross-section plane. They correspond to the unstable rotation of the top around
the second principal axis. Stable periodic solutions corresponding to rotations
around the first principal axis are visible at points $(\al,\aL)=(\pi/2,0)$ and
$(\al,\aL)=(3\pi/2,0)$. They are surrounded by
quasi-periodic solutions.  Fig.~\ref{fig:sliderasym} shows the dynamics of the sliding top with $\vL=\left(\frac{8}{9},0,0\right)$. We notice that it is completely different from the one shown in Fig.~\ref{fig:topasym}. Now near the
points $(\al,\aL)=(0,0)$ and $(\al,\aL)=(\pi,0)$ stable periodic solutions appear and additionally also near points $(\al,\aL)=(\pi/2,0)$ and
$(\al,\aL)=(3\pi/2,0)$. Moreover, four unstable periodic solutions appear and in their vicinity we notice a chaotic behavior. 

\begin{figure*}[htp]
   \subfigure[Classical top, $g=1$ and $L_1=L=1$\label{fig:rigidKo}]{
      \includegraphics[width=0.48\textwidth]{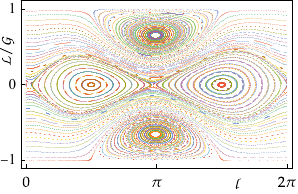}}\hspace{-.1cm}
    \subfigure[Sliding top, $g=20$ and $L_1=L=\tfrac{1}{20}$ \label{fig:slidedKo}]{
  \includegraphics[width=0.48\textwidth]{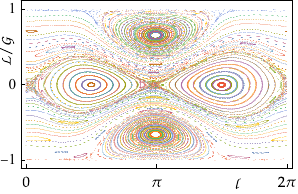}}
\caption{Poincar\'e cross-sections for the Kovalewskaya cases. Values of
parameters are $I_1=I_2=1$, $I_3=\tfrac{1}{2}$, $L_2=L_3=0$  and $m=1$ and
values of first integrals $H=\tfrac{3}{2}$ and $F_2=\tfrac{1}{20}$.
 \label{fig:Kowal}}
\end{figure*}

However, in the case of no gravity, the sliding top can be considered as a
perturbation of the classical free top. The role of a small parameter plays the
length of the vector $\vL$. Even for relatively big values of $L$ the differences
between cross sections for the classical and sliding tops are small. This is why we show in Fig.~\ref{fig:asymmetricdifferentL} results for two cases with
$L=1/4$ and $L=1/2$.

\begin{figure*}[htp]
   \subfigure[Classical top, $g=1$ and $L_1=L=1$ \label{fig:rigidGC}]{
      \includegraphics[width=0.48\textwidth]{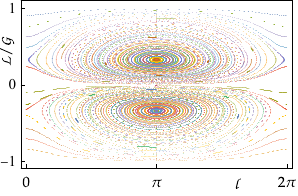}}\hspace{-.1cm}
    \subfigure[ Sliding top, $g=20$ and $L_1=L=\tfrac{1}{20}$ \label{fig:slidedGC}]{
  \includegraphics[width=0.48\textwidth]{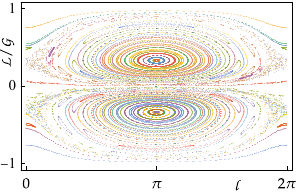}}
\caption{Poincar\'e cross-sections for
Goryachev-Chaplygin cases. The parameters values are $I_1=I_2=1$,
$I_3=\tfrac{1}{4}$, $L_2=L_3=0$  and $m=1$ and the values of the first integrals
$H=\tfrac{3}{2}$ and $F_2=0$. 
\label{fig:GCY}}
\end{figure*}

\begin{figure*}[htp]
   \subfigure[Classical
top, $g=1$ and  $L_1=\tfrac{3}{5}$, $L_3=\tfrac{4}{5}$ \label{fig:rigidHA}]{
      \includegraphics[width=0.48\textwidth]{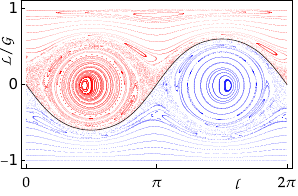}}\hspace{-.1cm}
    \subfigure[Sliding top,  $g=2$ and $L_1=\tfrac{3}{10}$, $L_3=\tfrac{4}{10}$
\label{fig:slidermult2}]{
  \includegraphics[width=0.48\textwidth]{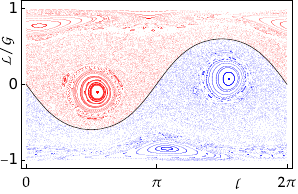}}
\caption{Poincar\'e cross-sections for the Hess-Appelrot cases. The parameters values are $I_1=1$, $I_2=\tfrac{5}{8}$, $I_3=\tfrac{3}{8}$, $L_2=0$, $m=1$
and the values of the first integrals $H=28$ and $F_2=3$. 
The cross-points obtained from initial conditions that satisfy $G>0$ and $G<0$ are
plotted in red and blue, respectively. \label{fig:HAY}}
\end{figure*}

The expansion~\eqref{eq:Hexpan} shows that we can consider the slider top
as a perturbation of the classical heavy ($g\neq0)$ top. From the forms
\eqref{eq:H2} and~\eqref{eq:H6}, we deduce that it is enough to fix $gL=1$, and
then consider $L$ as a perturbing parameter. 

Figures~\ref{fig:Kowal} and~\ref{fig:GCY} show Poincar\'e cross sections for the
Kovalevskaya and Goryachev-Chaplygin cases for the classical top (on the left) and the corresponding sliding top (on the right).

The cross sections presented in Fig.~\ref{fig:Kowal} were obtained by numerical integration equations of motion for values of parameters $I_1=I_2=1$,
$I_3=\tfrac{1}{2}$, $L_2=L_3=0$  and $m=1$. We chose for the classical top $g=1$ and
$L_1=L=1$ and for the sliding top $g=20$ and $L_1=L=\tfrac{1}{20}$ to have the
same values of $mgL=1$ for both tops. We fix the same values of the first integrals $H=\tfrac{3}{2}$ and $F_2=\tfrac{1}{20}$.  For both systems, one
can notice two unstable periodic solutions near $(\al,\aL)=(0,0)$.
$(\al,\aL)=(\pi,0)$. There are also two pairs of stable periodic solutions
surrounded by quasi-periodic trajectories. The left Fig.~\ref{fig:rigidKo} shows an integrable Kovalevskaya rigid top.  Fig.~\ref{fig:slidedKo} showing the cross section for the sliding top regions shows chaotic behavior appearing near unstable periodic orbits. The origin of this behavior is connected with the separatrices crossing phenomenon.

In Fig~\ref{fig:GCY} equations of motion were integrated for values of parameters $I_1=I_2=1$, $I_3=\tfrac{1}{4}$, $L_2=L_3=0$  and $m=1$. Similarly, as previously, we chose the rigid top $g=1$ and $L_1=L=1$, and for the sliding top $g=20$ and
$L_1=L=\tfrac{1}{20}$. We consider the same values for the first integrals
$H=\tfrac{3}{2}$ and $F_2=0$. On the left, Fig.~\ref{fig:rigidGC} corresponding
to the integrable heavy top traces of three periodic solutions are visible: two
hyperbolic with $\al=0$ and elliptic with $\al=\pi$ surrounding by
quasi-periodic orbits.  On the right Fig.~\ref{fig:slidedGC} corresponding to
the sliding top the chaotic layers around separatrices to two hyperbolic
equilibria are visible.

Figures~\ref{fig:HAY} illustrate the Hess-Appelrot cases of the classical heavy
top Fig.~\ref{fig:rigidHA} and the corresponding sliding top
Fig.~\ref{fig:slidermult2}. We choose the following values of parameters
that satisfy conditions \eqref{eq:HAconditions} $I_1=1$, $I_2=\tfrac{5}{8}$,
$I_3=\tfrac{3}{8}$, $L_2=0$, $m=1$ and for the heavy top $g=1$ and
$L_1=\tfrac{3}{5}$, $L_3=\tfrac{4}{5}$ and for the sliding top $g=2$ and
$L_1=\tfrac{3}{10}$, $L_3=\tfrac{4}{10}$. In both figures are visible wide
layers of randomly distributed crossing points separated by the zero level of
Darboux polynomial $G=\sqrt{\frac{3}{5}}M_1+\frac{4}{\sqrt{15}}M_3$ which in
Androyer-Deprit variables takes the form $G=4\aL+3 \sqrt{\aG^2-\aL^2} \sin\al$.
This curve separates cross-points obtained from initial conditions satisfying
$G>0$ and $G<0$ that are plotted in red and blue, respectively.

\section{Non-integrability theorems}
\label{sec:Ziglin}

The analysis of the integrability of systems of high dimensions and dependent on
parameters is complicated. One can expect generically non-integrability, but for
certain values of parameters the system can be integrable. To select such
values, one needs theorems formulating necessary conditions for integrability,
or just necessary conditions for the existence of a certain number of additional
functionally independent first integrals. 

In this paper, we will use two such theorems. One of them is related to the
phenomenon of splitting of surfaces asymptotic to a hyperbolic periodic
solution of a system which is a perturbation of an integrable system; for more details, see, e.g. \cite[Sec. 7.2]{Arnold:06::} or \cite[Ch. V]{Kozlov:96::}.  

Here, we use the formulation proposed
in \cite{Ziglin:80::d,Ziglin:80::e}. 
Let us consider a Hamiltonian system given by Hamiltonian 
\begin{equation}
    H(q,p, \varphi, I, \mu) =  H_0(q,p,  I) +\mu H_1(q,p, \varphi, I)+\cdots,
    \label{eq:HexpZiglin}
\end{equation}
which is a real analytic function $2\pi$-periodic with respect to $\varphi$; variables $\vx=(q,p)$ belong to an open set $U\subset\R^2$, $I\in (I_0-\epsilon,I_0+\epsilon)$, 
$\varphi\in\R \mod 2\pi$, and $\abs{\mu}<\epsilon$. 

Hamilton's equations take the form
\begin{equation}
\frac{\mathrm{d} q}{\mathrm{d} t}=\frac{\partial H}{\partial p},\,\,\,
\frac{\mathrm{d} p}{\mathrm{d} t}=-\frac{\partial H}{\partial q},\,\,\,
\frac{\mathrm{d} \varphi}{\mathrm{d} t}=\frac{\partial H}{\partial I},\,\,\,
\frac{\mathrm{d} I}{\mathrm{d} t}=-\frac{\partial H}{\partial \varphi},
\label{eq:sysZig}
\end{equation}
and for $\mu=0$ unperturbed Hamiltonian system is
\begin{equation}
\begin{split}
&\frac{\mathrm{d} q}{\mathrm{d} t}=\frac{\partial H_0(q,p,  I)}{\partial p},\quad
\frac{\mathrm{d} p}{\mathrm{d} t}=-\frac{\partial H_0(q,p,  I)}{\partial q},\\
&
\frac{\mathrm{d} \varphi}{\mathrm{d} t}=\frac{\partial H_0(q,p,  I)}{\partial I},\quad
\frac{\mathrm{d} I}{\mathrm{d} t}=0.
\end{split}
\label{eq:sysZigunpertur}
\end{equation}
We assume that:\\
1. the  system
\begin{equation}
\frac{\mathrm{d} q}{\mathrm{d} t}=\frac{\partial H_0(q,p,  I_0)}{\partial p},\quad
\frac{\mathrm{d} p}{\mathrm{d} t}=-\frac{\partial H_0(q,p,  I_0)}{\partial q},
\label{eq:nonperturb}
\end{equation}
has two  hyperbolic equilibria $\vx_1$ and $\vx_2$ (not necessarily distinct),
and the eigenvalues $\lambda_i$ and $-\lambda_i$ of the linearization of the
system~\eqref{eq:nonperturb} at the point $\vx_i$, $i=1,2$ are real and
different from zero;\\
2. equilibria $\vx_1$ and $\vx_2$ are joined 
by double-asymptotic solution $\widehat{\vx}(t)$ such that 
\[
\lim_{t\to-\infty}\widehat{\vx}(t)=\vx_1,\quad\text{and}\quad \lim_{t\to\infty}\widehat{\vx}(t)=\vx_2;
\]
3.  Inequality $\tfrac{\partial H_0}{\partial I}(\widehat{\vx}(t),I_0)>c>0$ holds for all $-\infty<t<\infty$.

Let us introduce the notation
\[
\begin{split}
 &\widehat{\vz}(t)= (\widehat{\vx}(t),I_0,\widehat{\varphi}(t)),\quad
 \widehat{\varphi}(t)=\int_0^t\frac{\partial H_0}{\partial I}(\widehat{\vx}(\tau),I_0)d\tau,\\
 & \widehat{\vz}(t,\overline{\varphi})=(\widehat{\vx}(t),I_0,\widehat{\varphi}(t)+\overline{\varphi}),
\end{split}
\]
and make the following further assumptions.\\
4. Solution $\widehat{\vz}(t)$ is analytical, although generally it has an
analytic continuation (in general not single-valued) with at most a finite
number of singular points in the strip $\Pi:0\leq \mathrm{Im}\,
t\leq\tfrac{2\pi}{\lambda_1}$. We denote by $\Pi'$ the strip $\Pi$ with these
singular points removed;\\
5. Hamiltonian $H(q,p, \varphi, I, \mu)$ is analytic, although generally it has
an analytic continuation (in general not single-valued) in a domain of complex
$(q,p,\varphi,I,\mu)$-space containing the point
$(\widehat{\vz}(t,\overline{\varphi}),\mu=0)$ for every $t\in\Pi'$ and
$\overline{\varphi}\in\R$;\\
6. The functions $\frac{\partial H_0}{\partial I}(\widehat{\vz}(t))$ and
$\frac{\partial H_1}{\partial \varphi}(\widehat{\vz}(t,\overline{\varphi}))$ for
fixed $\overline{\varphi}\in\R$ are single-valued in $\Pi$.

In \cite{Ziglin:80::d}, see also \cite{Ziglin:80::e} the following result was proved.
\begin{theorem}
\label{thm:Ziglin0}
If for at least one value of  $\overline{\varphi}\in\R$ the function
$\frac{\partial H_1}{\partial \varphi}(\widehat{\vz}(t,\overline{\varphi}))$ has
nonzero residues in $\Pi$, then there exists a solution of \eqref{eq:sysZig}
which is not single-valued in the sense that
\[
\lim_{\mu\to0}\frac{\Delta I(\mu)}{\mu}\neq0,
\]
where $\Delta I(\mu)$ is the increment of the coordinate $I(t,\mu)$ of
$(q,p,\varphi,I)(t,\mu)$ after one circuit of closed contour
$\Gamma\subset\Pi'$. If the sum of these residues is not equal to zero, then for
any $|\mu|\neq0$ small enough, the system  \eqref{eq:sysZig} does not have a
first integral analytic in $U$ and functionally independent of $H$.
\end{theorem}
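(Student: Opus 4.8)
The plan is to prove both assertions by the Poincar\'e--Melnikov method of measuring the \emph{splitting of the separatrix}, using Ziglin's device of replacing the real splitting integral by a contour integral in complex time and evaluating it through residues. First I would fix the geometry of the unperturbed system~\eqref{eq:sysZigunpertur}. Since $H_0$ is independent of $\varphi$, the action $I$ is conserved, and the frozen planar flow~\eqref{eq:nonperturb} at $I=I_0$ carries the two hyperbolic equilibria $\vx_1,\vx_2$ joined by the heteroclinic orbit $\widehat\vx(t)$. Letting $\varphi$ run over its period, the circles $\Lambda_i$ obtained from $\vx_i$ at $I=I_0$ are normally hyperbolic invariant circles whose two-dimensional stable and unstable manifolds coincide along the separatrix swept out by $\widehat\vx$. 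Assumption~3, $\partial_I H_0>c>0$, makes $\varphi$ a genuine fast angle along the separatrix, and the eigenvalues $\pm\lambda_i$ make the $\Lambda_i$ hyperbolic, so by normal hyperbolicity the circles and their invariant manifolds persist for $|\mu|$ small. The whole scheme is to measure, to first order in $\mu$, by how much these manifolds fail to match.

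Second comes the key computation. Since $\dot I=-\partial_\varphi H=-\mu\,\partial_\varphi H_1+O(\mu^2)$, integrating along the separatrix gives the increment of $I$. Following Ziglin I would not integrate along the real line but continue $\widehat\vz(t,\overline\varphi)$ into the strip $\Pi$ and integrate over a closed contour $\Gamma\subset\Pi'$, obtaining
\begin{equation*}
\begin{split}
\Delta I(\mu)&=-\mu\oint_\Gamma \frac{\partial H_1}{\partial\varphi}\bigl(\widehat\vz(t,\overline\varphi)\bigr)\,\rmd t+O(\mu^2)\\
&=-2\pi\rmi\,\mu\sum_{\text{enclosed}}\operatorname{Res}\frac{\partial H_1}{\partial\varphi}+O(\mu^2),
\end{split}
\end{equation*}
by the residue theorem, where assumptions~4--6 guarantee that the integrand continues meromorphically with only finitely many poles in $\Pi$ and is single-valued there. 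Hence if some residue is nonzero, a loop $\Gamma$ around the corresponding pole yields $\lim_{\mu\to0}\Delta I(\mu)/\mu\neq0$, which is the first assertion. The height $2\pi/\lambda_1$ of $\Pi$ is tuned to the hyperbolic rate at $\vx_1$: under $t\mapsto t+2\pi\rmi/\lambda_1$ the solution returns to itself at the $\vx_1$ end, so the real splitting integral $\int_{-\infty}^{\infty}\partial_\varphi H_1\,\rmd t$ closes up over $\Pi$ and equals, up to a nonzero factor coming from the monodromy at the $\vx_2$ end, $2\pi\rmi$ times the sum of all residues in $\Pi$.

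Third, I would deduce non-integrability. The sum of residues being nonzero makes the real Melnikov function $M(\overline\varphi)$ nonzero for that phase; since $H_1$ is $2\pi$-periodic in $\varphi$, the mean of $M$ over $\overline\varphi$ vanishes, so $M$ must change sign and therefore has a simple zero. A simple zero of $M$ forces a \emph{transverse} intersection of $W^u(\Lambda_1^\mu)$ and $W^s(\Lambda_2^\mu)$ inside the three-dimensional level $\{H=h\}$ for all small $\mu\neq0$. Suppose now an analytic first integral $F$ independent of $H$ existed. Being constant on the invariant circles and on the orbits asymptotic to them, $F$ would be constant on both $W^u$ and $W^s$, so both surfaces would lie in the two-dimensional set $\{H=h,\,F=c\}$; but a transverse crossing requires $T_pW^u+T_pW^s=T_p\{H=h\}$ to be three-dimensional at a crossing point $p$, impossible for two surfaces contained in a common two-dimensional level. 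This contradiction yields the second assertion.

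The \textbf{main obstacle} is the analytic control of the computation rather than its formal structure. One must justify that $\widehat\vz(t,\overline\varphi)$ continues to a neighbourhood of $\Gamma$ with only finitely many singularities (assumptions~4--6 are exactly what is needed) and, more importantly, bound the $O(\mu^2)$ remainder uniformly along $\Gamma$ so that the leading residue term genuinely dominates and the limit $\lim_{\mu\to0}\Delta I(\mu)/\mu\neq0$ is rigorous, not merely formal. The second delicate point is the passage from a nonvanishing Melnikov function to a truly transverse intersection: this rests on the standard but nontrivial fact that simple zeros of $M$ correspond to transverse heteroclinic points, together with the persistence of the $\Lambda_i$ and of their stable and unstable manifolds as analytic objects under the perturbation.
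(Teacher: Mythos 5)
You should first be aware that the paper contains no proof of Theorem~\ref{thm:Ziglin0} at all: it is quoted verbatim from \cite{Ziglin:80::d}, with the reader referred to \cite{Ziglin:80::d,Ziglin:80::e} for the proof, so your proposal can only be measured against Ziglin's original argument. For the first assertion your outline does follow his mechanism: continue the perturbed solution along a closed contour $\Gamma\subset\Pi'$, write $\dot I=-\mu\,\partial_\varphi H_1+O(\mu^2)$, and evaluate the leading term by the residue theorem, with assumptions 4--6 legitimizing the continuation and the single-valuedness of the integrand; the uniform $o(\mu)$ control of the remainder along the complex contour, which you correctly flag, is the main technical content. Two analytic points are glossed over, though. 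The integrand $\partial_\varphi H_1(\widehat{\vz}(t,\overline{\varphi}))$ need not decay as $t\to\pm\infty$ (it need not vanish on the limiting circles), so the real splitting integral requires care even to define; and, more importantly, your claim that the integral ``closes up over $\Pi$'' up to a single nonzero monodromy factor is not available for the full function $\partial_\varphi H_1$. Under $t\mapsto t+2\pi\rmi/\lambda_1$ each Fourier mode $h_k$ is only quasi-periodic with its own multiplier $\rme^{\rmi k c}$, so the boundary-of-strip argument must be run mode by mode, and the nonvanishing of each factor $1-\rme^{\rmi k c}$ must be checked rather than asserted. This is precisely why the paper restates the criterion in the mode-resolved form of Theorem~\ref{thm:Ziglin}, which is the version actually applied in Section~\ref{sec:g0generic}.

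The genuine gap is in your third step. From zero mean you infer that $M(\overline{\varphi})$ changes sign and ``therefore has a simple zero''; that implication is false --- $M(\overline{\varphi})=\cos^3\overline{\varphi}$ has zero mean and only triple zeros --- and a simple zero is exactly what the Melnikov criterion needs to produce the \emph{transverse} intersection on which your final contradiction rests. Your dimension count also tacitly assumes $\rmd H\wedge \rmd F\neq 0$ at the crossing point: functional independence only guarantees this off a proper analytic subset, which could a priori contain the asymptotic surfaces, so one must in addition vary the energy level (equivalently, work over the family of circles $\Lambda_i(h)$) to reach a contradiction. Note that Ziglin's own deduction of the second assertion does not pass through transverse heteroclinics at all: as the titles of \cite{Ziglin:80::d,Ziglin:80::e} indicate, it exploits the branching established in the first assertion --- a first integral analytic in $U$ extends holomorphically and is constant along analytic continuation of solutions, so the monodromy along $\Gamma$, which preserves $H$ but shifts $I$ by approximately $-2\pi\rmi\mu$ times the residue sum, obstructs the existence of an integral independent of $H$ for every fixed small $|\mu|\neq 0$. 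As written, your simple-zero step fails and the transversality-based contradiction with it; you must either repair that route (odd-order zeros still give a topological crossing, and a real-analytic argument with varying $h$ can be pushed through) or switch to the branching argument of the quoted sources.
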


In this paper we will use a more convenient formulation of these integrability
obstructions given in \cite{Ziglin:80::e}. We expand $H_1$ in the Fourier series
\[
\begin{split}
&H_1(q,p, \varphi, I)=\sum_{k\in\Z} h_k(q,p, \varphi, I),\\
&h_k(q,p, \varphi, I)=\overline{h}_k(q,p,  I)e^{\rmi k\varphi},
\end{split}
\]
and then we have the following theorem. 
\begin{theorem}
\label{thm:Ziglin}
If for at least one $k\neq0$  the sum of residues of function $h_k(q,p, \varphi,
I)$  is non-zero, then not even one small enough $|\mu|$ has in phase space $U$
analytic first integral independent of $H$.
\end{theorem}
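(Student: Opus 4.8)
The plan is to deduce Theorem~\ref{thm:Ziglin} directly from Theorem~\ref{thm:Ziglin0} by inserting the Fourier expansion of $H_1$ into the residue condition and tracking how the phase shift $\overline{\varphi}$ enters. The key observation is that shifting by $\overline{\varphi}$ affects the double-asymptotic solution only through its $\varphi$-component, so in each Fourier mode it appears as a factor $e^{\rmi k \overline{\varphi}}$ that is constant in $t$. This lets one express the sum of residues demanded by Theorem~\ref{thm:Ziglin0} as a trigonometric series in $\overline{\varphi}$ whose coefficients are precisely the modal residues appearing in the hypothesis of Theorem~\ref{thm:Ziglin}.

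First I would compute the integrand. Differentiating the Fourier series term by term gives $\frac{\partial H_1}{\partial \varphi} = \sum_{k\in\Z} \rmi k\, h_k$, and evaluating along $\widehat{\vz}(t,\overline{\varphi})=(\widehat{\vx}(t), I_0, \widehat{\varphi}(t)+\overline{\varphi})$ yields
\begin{equation*}
  \frac{\partial H_1}{\partial \varphi}(\widehat{\vz}(t,\overline{\varphi})) = \sum_{k\in\Z} \rmi k\, e^{\rmi k \overline{\varphi}}\, \overline{h}_k(\widehat{\vx}(t), I_0)\, e^{\rmi k \widehat{\varphi}(t)} = \sum_{k\in\Z} \rmi k\, e^{\rmi k \overline{\varphi}}\, h_k(\widehat{\vz}(t)),
\end{equation*}
where $\widehat{\vz}(t)=\widehat{\vz}(t,0)$. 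Let $r_k$ denote the sum of the residues of $h_k(\widehat{\vz}(t))$ over the finitely many singular points of the solution lying in the strip $\Pi$; this is exactly the quantity whose non-vanishing for some $k\neq0$ is assumed in Theorem~\ref{thm:Ziglin}. Since $H_1$ is analytic in a neighborhood of $(\widehat{\vz}(t),\mu=0)$ for $t\in\Pi'$ (assumption~5), its Fourier series converges uniformly on the small circles encircling each singular point, so the residue of the sum equals the sum of the residues. Hence the total sum of residues of $\frac{\partial H_1}{\partial \varphi}(\widehat{\vz}(t,\overline{\varphi}))$ in $\Pi$ is
\begin{equation*}
  R(\overline{\varphi}) = \sum_{k\in\Z} \rmi k\, r_k\, e^{\rmi k \overline{\varphi}}.
\end{equation*}

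Finally I would exploit uniqueness of Fourier coefficients. The expression above displays $R(\overline{\varphi})$ as a trigonometric series whose $k$-th coefficient is $\rmi k\, r_k$. By hypothesis there is some $k_0\neq0$ with $r_{k_0}\neq0$, so the coefficient $\rmi k_0\, r_{k_0}$ is nonzero and $R$ cannot vanish identically in $\overline{\varphi}$. Choosing $\overline{\varphi}^{\ast}$ with $R(\overline{\varphi}^{\ast})\neq0$, the function $\frac{\partial H_1}{\partial \varphi}(\widehat{\vz}(t,\overline{\varphi}^{\ast}))$ has a nonzero sum of residues in $\Pi$, and in particular at least one residue is nonzero. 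Both conditions of Theorem~\ref{thm:Ziglin0} are therefore met for $\overline{\varphi}^{\ast}$, and its conclusion gives that for all sufficiently small $|\mu|\neq0$ the system~\eqref{eq:sysZig} admits no first integral analytic in $U$ and functionally independent of $H$, which is the assertion of Theorem~\ref{thm:Ziglin}. The only genuinely technical point is the interchange of summation and residue extraction; I expect this to be the main obstacle, but it is settled routinely by the uniform convergence of the Fourier series on compact subsets of the analyticity domain guaranteed by assumption~5.
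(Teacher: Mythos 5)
The paper contains no proof of this theorem: it is Ziglin's own ``more convenient formulation,'' and the authors simply defer to \cite{Ziglin:80::d,Ziglin:80::e} (``For the proof and details, see\ldots''). So there is no in-paper argument to compare against; judged on its own terms, your reduction to Theorem~\ref{thm:Ziglin0} is correct and is precisely the reduction on which Ziglin's reformulation rests. The chain is sound: term-by-term differentiation of the Fourier series, the observation that the phase shift enters each mode only through the constant factor $\rme^{\rmi k\overline{\varphi}}$, the identity $R(\overline{\varphi})=\sum_k \rmi k\,r_k\,\rme^{\rmi k\overline{\varphi}}$ for the total residue sum, and uniqueness of Fourier coefficients to produce a $\overline{\varphi}^{\ast}$ with $R(\overline{\varphi}^{\ast})\neq 0$, at which point the second conclusion of Theorem~\ref{thm:Ziglin0} applies (and, as you note, a nonzero sum forces at least one nonzero residue, so the first hypothesis is met as well). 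The factor $\rmi k$ is also exactly why the hypothesis must exclude $k=0$, which your argument respects.

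Two technical points deserve slightly more care than you give them. First, you correctly flag the interchange of summation and residue extraction; the justification is that assumption~5 provides analyticity of $H_1$ in a complex strip in the $\varphi$ variable around the points $\widehat{\varphi}(t)+\overline{\varphi}$, uniformly for $t$ on the small circles about the singular points, so the modal terms $\overline{h}_k(\widehat{\vx}(t),I_0)\rme^{\rmi k\widehat{\varphi}(t)}$ decay exponentially in $|k|$ uniformly on those circles. This same decay is what makes $R(\overline{\varphi})$ a continuous (indeed analytic) $2\pi$-periodic function whose $k$-th Fourier coefficient is $\rmi k\,r_k$, which is what the uniqueness step actually requires. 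Second, for $r_k$ to be well defined one needs each mode $t\mapsto h_k(\widehat{\vz}(t))$ to be single-valued in $\Pi$; assumption~6 asserts this only for the full derivative $\partial H_1/\partial\varphi$ at fixed real $\overline{\varphi}$. This gap closes easily via the coefficient formula
\begin{equation*}
\rmi k\, h_k(\widehat{\vz}(t)) \;=\; \frac{1}{2\pi}\int_0^{2\pi}
\frac{\partial H_1}{\partial \varphi}\bigl(\widehat{\vz}(t,\overline{\varphi})\bigr)\,
\rme^{-\rmi k \overline{\varphi}}\,\rmd\overline{\varphi},
\end{equation*}
since analytic continuation in $t$ commutes with the integral and each integrand is single-valued in $\Pi$ by assumption~6; but the point should be stated, since in concrete applications (e.g.\ the functions $h_k^{\pm}$ of Section~\ref{sec:g0generic}) the solution itself has branch points in the strip. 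With these two remarks supplied, your proof is complete.
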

This theorem applies to systems that can be considered as Hamiltonian
systems that are perturbations of appropriate integrable systems with two
immovable hyperbolic equilibria.
For the proof and details, see \cite{Ziglin:80::d,Ziglin:80::e}. We will use this
theorem in the proof of Theorem~\ref{thm:g0gen} which is, in a certain sense, a
version of Ziglin Theorem~\ref{thm:z1} for the sliding top without gravity.
Amazingly enough, the proof of the Ziglin Theorem~\ref{thm:z1} is quite simple, but the
proof of our theorem~\ref{thm:g0gen} is quite complicated and laborious.

In the case when for a considered non-linear system a non-equilibrium particular solution can be explicitly written, variational equations along this solution
can be calculated. The presence of first integrals and the integrability of a
nonlinear system implies the presence of the same number of first integrals and
the integrability of the variational equations. 

Moreover, variational equations are linear equations and for them the monodromy
group and the differential Galois group are well defined. The monodromy group
acts linearly in a solution space of a linear system (or scalar linear equation
of a certain degree) by analytic continuations of solutions along closed loops.
More precisely, the monodromy matrices form an anti-representation of the first
homotopy group of a Riemann surface related to the particular solution. First
integrals of variational equations appear invariant of the monodromy group. The
integrability obstructions can be translated to the properties of the monodromy
group of variational equations. They were formulated by \cite{Ziglin:82::b}.
This approach is now called
the Ziglin theory.

Since the monodromy group is just a matrix group, more useful for applications is
the differential Galois group which is an algebraic group. The aim of the
differential Galois theory is to study the question of the solvability of linear
differential equations with coefficients in a certain differential field.
Usually, solutions do not belong to the basic differential field containing
coefficients of the considered equation.  The smallest differential field
containing all solutions is called the Picard-Vessiot extension of the basic
field. The differential Galois group is the group of automorphisms of the
Picard-Vessiot extension (i.e. invertible transformations of preserving field
operations) that commute with differentiation and do not change elements of the
basic differential field.  For a detailed exposition of the differential Galois
theory, see, e.g. \cite{Kaplansky:76::,Morales:99::c} or a short introduction
focused on applications in \cite{mp:03::b}.

Differential Galois group is an algebraic group that has a few components, and
this one containing the identity is called the identity component. The first
integrals of variational equations are also invariant of the differential Galois
group. The conditions of the integrability in the Liouville sense for Hamilton
equations translate in the properties of the differential Galois group
formulated in the following theorem due to J.J Morales-Ramis and J.-P. Ramis,
for details see \cite{Morales:99::c}.
\begin{theorem}
If a Hamiltonian system is meromorphically integrable in the Liouville sense in
a neighborhood of a phase curve $\Gamma$ corresponding to a particular
solution, then the identity component of the differential Galois group of
variational equations along $\Gamma$ is Abelian.
\label{thm:Morales}
\end{theorem}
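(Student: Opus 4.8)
The plan is to reduce the Liouville integrability of $X_H$ to an abundance of invariants of the differential Galois group of the variational equations (VE), exploiting that the VE of a Hamiltonian system is itself a linear Hamiltonian system. First I would linearize $\dot z = X_H(z)$ along $\Gamma=\{z(t)\}$ to obtain $\dot\xi = A(t)\xi$ with $A(t)=DX_H(z(t))$. Since $X_H$ preserves the symplectic form, $A(t)$ takes values in $\mathfrak{sp}(2n,\C)$, so the monodromy group, and hence the differential Galois group $G$ of the VE, is an algebraic subgroup of $\mathrm{Sp}(2n,\C)$, and the Lie algebra $\mathfrak g$ of its identity component $G^0$ lies in $\mathfrak{sp}(2n,\C)$. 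This places the entire problem inside the symplectic group, where the involution structure of the integrals can be read off geometrically.

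Next I would turn each first integral $F_k$ of $X_H$ into a first integral of the VE. Expanding $F_k$ to the appropriate order along $\Gamma$ produces such an integral — the linear form $\xi\mapsto dF_k(z(t))\cdot\xi$ when the gradient is nonzero, or a quadratic form built from the Hessian otherwise — and by the Galois correspondence the rational first integrals of the VE are exactly the rational invariants of $G$ acting on the $2n$-dimensional solution space $V$. Crucially, the functional independence of $F_1=H,\dots,F_n$ and their pairwise involution $\{F_i,F_j\}=0$ are inherited by the induced invariants, with the Poisson bracket replaced by the constant symplectic pairing carried by $V$. I would also pass to the normal variational equations by quotienting out the flow direction $\dot z(t)$ (which always solves the VE) and restricting to the energy level; this lowers the dimension to $2(n-1)$ and alters $G^0$ only by an abelian factor, while $n-1$ functionally independent involutive invariants survive.

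The decisive step — and the one I expect to be the main obstacle — is a purely algebraic statement: a connected algebraic subgroup $G^0\subseteq \mathrm{Sp}(2m,\C)$ that possesses $m$ functionally independent rational invariants in mutual involution must be abelian. I would prove this by observing that the common level sets of the invariants are $G^0$-stable, that involution forces their differentials to span a Lagrangian subspace at a generic point of $V$, and that infinitesimal invariance confines $\mathfrak g$ to the subalgebra of $\mathfrak{sp}(2m,\C)$ annihilating a maximal isotropic family; the structure theory of $\mathfrak{sp}$ then forces $[\mathfrak g,\mathfrak g]=0$. Specializing $m=n-1$ gives that $G^0$ of the normal variational equations is abelian, and since the discarded tangential block contributes at most a commutative (unipotent) factor, the identity component of the Galois group of the full VE along $\Gamma$ is abelian, which is the assertion of Theorem~\ref{thm:Morales}. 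The genuine difficulty is concentrated entirely in the algebraic lemma, as the first three steps are essentially formal once the symplectic nature of the VE and the correspondence between rational first integrals and Galois invariants are in place.
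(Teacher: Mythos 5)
The paper does not prove Theorem~\ref{thm:Morales} at all: it is quoted as a known result of Morales-Ruiz and Ramis with a pointer to \cite{Morales:99::c}, so your attempt has to be measured against the standard proof there. Your overall architecture does match that proof (symplectic VE with Galois group in $\mathrm{Sp}(2n,\C)$, integrals of the VE induced from the $F_k$, rational first integrals as $G$-invariants, and a key algebraic lemma), but two steps fail as written. First, the assertion that functional independence of $F_1,\dots,F_n$ ``is inherited by the induced invariants'' is precisely what is \emph{not} automatic: the lowest-order terms of the Taylor expansions of independent integrals along $\Gamma$ can be functionally dependent, and repairing this requires Ziglin's Lemma (one replaces the $F_k$ by suitable polynomial combinations so that the junior parts become independent) --- the very lemma this paper invokes, in a parallel situation, in the proof of its Lemma~\ref{lem:lowest}. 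Relatedly, your dichotomy ``linear form if the gradient is nonzero, quadratic form otherwise'' is incomplete: the junior part is a homogeneous polynomial of arbitrary degree $k\geq 1$, and the whole construction must be run at that level of generality.

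Second, and more seriously, your concluding inference from the NVE back to the full VE is invalid. Writing the VE in block-triangular form with tangential and normal blocks, the Galois group of the VE is an \emph{extension} of (a group mapping to) the NVE group and the tangential group by a unipotent kernel coming from the off-diagonal blocks, and an extension of an abelian group by an abelian unipotent group need not be abelian --- Heisenberg-type groups are exactly such extensions. So ``the tangential block contributes at most a commutative factor'' does not deliver abelianness of $G^0$ of the full VE. The standard proof avoids the detour entirely: it applies the key lemma directly to the full VE with all $n$ involutive invariants (i.e., with $m=n$), and the NVE statement is a corollary, not an intermediate step. Finally, your sketch of the key lemma rests on the wrong mechanism: the subalgebra of $\mathfrak{sp}(2m,\C)$ stabilizing a Lagrangian subspace is parabolic, hence very far from abelian, so ``confining $\mathfrak{g}$ to the annihilator of a maximal isotropic family'' plus structure theory cannot close the argument. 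The actual proof sends $a\in\mathfrak{g}$ to its quadratic Hamiltonian $H_a$, notes that $\{H_a,f_i\}=0$ for all invariants $f_i$, concludes that on a generic common level (a Lagrangian leaf) each $H_a$ is functionally dependent on $f_1,\dots,f_m$, hence $\{H_a,H_b\}=0$ for all $a,b$, and then uses that $a\mapsto H_a$ is an injective Lie algebra morphism with $\{H_a,H_b\}=H_{[a,b]}$ to force $[\mathfrak{g},\mathfrak{g}]=0$.
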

Despite its abstract definition, the differential Galois group is known for
various linear equations, e.g. the hypergeometric equation or the Lam\'e
equation, and in the case of second-order linear equations with rational
coefficients, there is an algorithm, called the Kovacic algorithm, see
Appendix~\ref{sec:2ndorderlde}, that always allows its determination. We will
use this theorem for the analysis of cases where the considered sliding top is
symmetric.

\section{Integrability analysis. Case $g\neq 0$}
\label{sec:anal}

In this section, we assume that $g\neq 0$ and prove the following theorem.
\begin{theorem}
  \label{thm:gneq0}
If $g\neq 0$ then the sliding top problem is not integrable in the class  of complex meromorphic first integrals, except the Euler and
the Lagrange cases.
\end{theorem}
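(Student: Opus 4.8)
The plan is to turn the reduction already established in Lemma~\ref{lem:lowest} into a finite list of candidate cases and then to kill the surviving ones by a differential Galois obstruction. First I would argue that integrability of the sliding top~\eqref{eq:lag_borisovMM} with complex meromorphic integrals forces, by Lemma~\ref{lem:lowest}, integrability of the heavy top~\eqref{eq:heavytop} with the same $\vI$, $\vL$ and $g\neq0$. The integrability problem for the heavy top is completely solved: by Theorem~\ref{thm:z1} an asymmetric body with $mg\vL\neq\vzero$ admits no additional first integral, and by Theorem~\ref{thm:z2} a symmetric body is integrable only in the four classical cases. Hence the only configurations for which the sliding top can be integrable are the Euler, Lagrange, Kovalevskaya and Goryachev--Chaplygin ones. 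The Euler ($\vL=\vzero$) and Lagrange ($B=A$, $L_1=L_2=0$) cases are genuinely integrable, with the extra integrals $\vM\cdot\vM$ and $M_3$, so the whole theorem reduces to excluding the two mass ratios $A=B=2C$ and $A=B=4C$ with $\vL=(L,0,0)$.

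For these two cases I would exhibit an explicit particular solution and invoke Theorem~\ref{thm:Morales}. Because $\vL$ lies along the first principal axis, a direct computation shows that
\[
\mathcal{S}=\{\,M_1=M_3=0,\ \Gamma_2=0\,\}
\]
is invariant under~\eqref{eq:lag_borisovMM}: on $\mathcal{S}$ one has $\vJ=\diag(A,B+mL^2\Gamma_3^2,C)$, so $\vJ^{-1}\vM$ points along the second principal axis, the whole right-hand side of the first equation in~\eqref{eq:lag_borisovMM} stays directed along that axis, and $\dot\Gamma_2=0$. The induced motion is one degree of freedom, a pendulum with position-dependent inertia governed by
\[
H|_{\mathcal{S}}=\frac{M_2^2}{2\,(B+mL^2\Gamma_3^2)}+mgL\,\Gamma_1,\qquad \Gamma_1^2+\Gamma_3^2=1.
\]
Since $F_2=\vGamma\cdot\vM$ vanishes identically on $\mathcal{S}$, this orbit lies on the leaf $\{F_1=1,F_2=0\}$, which is exactly the level relevant to the Goryachev--Chaplygin case.

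Next I would form the variational equations of~\eqref{eq:lag_borisovMM} along an orbit of $\mathcal{S}$, restrict to the leaf $\{F_1=1,F_2=0\}$, and split off the tangential part (which reproduces the pendulum). The normal variational equations then reduce to a single second-order linear equation in the transverse deviations $(\delta M_1,\delta M_3,\delta\Gamma_2)$. Taking the pendulum variable (equivalently $\Gamma_1$ or $\Gamma_3$) as the new independent variable converts this into an equation with rational coefficients on the Riemann sphere, which I expect to fall into the Lam\'e family treated in Appendix~\ref{sec:lamiaste}. The decisive feature is that the quadratic sliding term $H_6$ of~\eqref{eq:H6} enters these coefficients and deforms the equation away from the one for the classical top, which in the Kovalevskaya and Goryachev--Chaplygin cases is integrable with an Abelian Galois group. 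I would then apply the criteria of Appendices~\ref{sec:2ndorderlde} and~\ref{sec:lamiaste} (the Kovacic algorithm together with the Lam\'e classification) to show that, precisely at $A=B=2C$ and $A=B=4C$, the identity component of the differential Galois group is not Abelian, contradicting Theorem~\ref{thm:Morales}.

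The hard part will be this last step. Two difficulties stand out. First, the passage from the six-dimensional variational system to a single scalar second-order equation with rational coefficients requires careful use of the integrals $H$, $F_1$, $F_2$ to eliminate the tangential direction and to lower the order, and the reparametrization must genuinely rationalize the coefficients; because the sliding pendulum carries the solution on a curve of higher genus than the classical pendulum, choosing the right algebraic coordinate (or a convenient energy level) is delicate. Second, determining the Galois group of the resulting Lam\'e-type equation and verifying that its identity component fails to be Abelian at exactly the two exceptional mass ratios---while the classical top, lacking the $H_6$ contribution, stays integrable---is where the whole obstruction is concentrated, and it is the computation on which the proof ultimately rests.
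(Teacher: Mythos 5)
Your proposal is correct and follows essentially the same route as the paper: reduction via Lemma~\ref{lem:lowest} and Ziglin's Theorems~\ref{thm:z1}--\ref{thm:z2} to the Kovalevskaya and Goryachev--Chaplygin cases, the invariant manifold $M_1=M_3=\Gamma_2=0$ with its pendulum-type restricted flow, elimination of the tangential direction via the linearized integral to get a scalar second-order normal variational equation, rationalization, and a Lam\'e-equation Galois obstruction through Theorem~\ref{thm:Morales}. The ``delicate choice of energy level'' you flag is exactly what the paper does, fixing $h$ so that $1+A-h^2=0$ to remove two singularities and land in the algebraic Lam\'e form with $n=\tfrac{1}{2}\left(-1\pm\sqrt{9-8A/C}\right)$, whose complex values at $A=2C$ and $A=4C$ violate all cases of Lemma~\ref{lem:lame}.
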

\begin{proof}
According to Lemma~\ref{lem:lowest} the necessary integrability condition for
our system is the integrability of the corresponding classical heavy top. The
list of known integrable case for classical heavy is recalled in
Section~\ref{sec:withheavytop}. We already noticed that the sliding top is
integrable in the Euler case with $\vL=0$, and in the Lagrange case. Thus,
we should analyze only the Kovalevskaya and the Goryachev-Chaplygin cases. In
both of these cases we can assume that ${L}_2={L}_3=0$ and $B=A$.

One can easily notice that manifold 
\begin{equation*}
 \cN:= \{ ( \vM,\vGamma)\in \C^6\ | \ M_1=M_3=\Gamma_2 = 0,
 \ \Gamma_1^2+\Gamma_3^2=1\},   
 \label{eq:inva1}
\end{equation*}
is invariant with respect to flow of \eqref{eq:lag_borisovexpl}.
 The system restricted to $ \cN$ takes the form
\begin{equation}
  \label{eq:rep}
  \begin{split}
 & \Dt M_2 = \Gamma_3+\frac{M_2^2\Gamma_1\Gamma_3}{(A+\Gamma_3^2)^2}, \quad 
  \Dt \Gamma_1=-\frac{M_2\Gamma_3}{A+\Gamma_3^2}, \\
 & \Dt \Gamma_3=\frac{M_2\Gamma_1}{A+\Gamma_3^2},
  \end{split}
\end{equation}
and it gives a family of particular solutions obtained as the intersection of
two algebraic curves
\begin{equation}
h=\frac{M_2^2}{2 (b+\Gamma_3^2)}+\Gamma_1,\quad \Gamma_1^2+\Gamma_3^2=1.  
\label{eq:part}
\end{equation}

If we define by $(\vm,\vgamma)$ variations of variables  $(\vM,\vGamma)$
then the variational equations take the form
\begin{equation*}
  \label{eq:varallv}
  \Dt \begin{bmatrix}
    m_1\\
    m_2\\
    m_3 \\
    \gamma_1 \\
    \gamma_2\\
    \gamma_3
  \end{bmatrix}
  =
\begin{bmatrix}
  0 & 0 & a_{13} & 0 & a_{15} & 0 \\
  0 & a_{22} & 0 & a_{24} & 0 & a_{26}\\
  a_{31} & 0 & a_{33} & 0 & a_{35} & 0 \\
  0 & a_{42} & 0 & 0 & 0 & a_{46}\\
  a_{51} & 0 & a_{53} & 0 & a_{55} & 0 \\  
  0 & a_{62} & 0 & a_{64} & 0 & a_{66}\\
\end{bmatrix}
\begin{bmatrix}
  m_1\\
  m_2\\
  m_3 \\
  \gamma_1 \\
  \gamma_2\\
  \gamma_3
\end{bmatrix}.
\end{equation*}
Here coefficients matrix $[a_{ij}]$ is just matrix of derivative of the
right-hand sides of equations of motion~\eqref{eq:poissonowska} evaluated at the
particular solution. 

Notice that equations for variables $(m_1, m_3, \gamma_2)$ form a close
subsystem which in explicit form reads
\begin{equation}
  \label{eq:normal_sys}
 \hspace{-0.3cm} \begin{bmatrix}
    \dot m_1\\[2.3ex]  \dot m_3 \\[2.3ex]  \dot\gamma_2 
   \end{bmatrix}\hspace{-0.15cm}
  =  \hspace{-0.15cm} \begin{bmatrix}
0 &\frac{(A-C)M_2}{C \left(A+\Gamma_3^2\right)}&\frac{(A-C)\Gamma_3 M_2^2}{C \left(A+\Gamma_3^2\right)^2}\\[0.5em]
  -\frac{M_2 \Gamma_3^2}{A \left(A+\Gamma_3^2\right)}&\frac{\Gamma_1 \Gamma_3 M_2}{C(A +\Gamma_3^2)}&\frac{\Gamma_1 \Gamma_3^2 M_2^2}{C \left(A+\Gamma_3^2\right)^2}-1\\[0.5em]
    \frac{\Gamma_3}{A}&-\frac{\Gamma_1}{C}&-\frac{\Gamma_1\Gamma_3 M_2}{ C(A+\Gamma_3^2)}
  \end{bmatrix}
  \hspace{-0.15cm}
   \begin{bmatrix}
     m_1\\[2.3ex]   m_3 \\[2.3ex] \gamma_2
   \end{bmatrix}.
\end{equation}
It is the system of normal variational equations that describes variations from normal to
invariant manifold $\cN$. 
It has the first integral 
\[
 f_2 = \Gamma_1 m_1   +  \Gamma_3m_3 + 
 M_2 \gamma_2.
\]
We consider the zero level of this integral and express $\gamma_2$ as a function
of $m_1$ and $m_3$. In effect, we obtain a system of two equations from which
we obtain one second-order equation for $m_1$
\begin{equation}
  \label{eq:rnvegneq0}
   \Dtt m + a_1(t) \Dt m + a_0(t) m = 0, \qquad m \equiv m_1,
\end{equation}
with coefficients
\[
\begin{split}
& a_1(t) =-\frac{2\Gamma_3}{M_2} +\frac{2 M_2 \Gamma_1 \Gamma_3}{(A + \Gamma_3^2)^2},\\
&a_0(t)=-\frac{\Gamma_1 (A-C) \left[\left(A+\Gamma_3^2\right)^2-\Gamma_1 M_2^2\right]}{C
   \left(A+\Gamma_3^2\right)^3}.
\end{split}
\]
Now, we make the following transformation of the independent variable
\begin{equation}
\label{eq:tran}
t \longrightarrow z := \frac{\Gamma_3(t)}{1+\Gamma_1(t)}.
\end{equation}
Then we obtain
\[
\begin{split}
&\Gamma_1 = \frac{1-z^2}{1+z^2}, \qquad \Gamma_3 = \frac{2 z}{1+z^2},   \\
 &M_2 = \frac{2\dot z(A + 4 z^2 + 2 A z^2 + A z^4)}{(1+z^2)^3}.
\end{split}
\]
Using the first equations in \eqref{eq:part} and \eqref{eq:rep} one can
calculate derivatives
\[
\begin{split}
&\dot z^2=\frac{(z^2+1)^3 \left[(h+1) z^2+h-1\right]}{2 \left[A (z^2+1)^2+4 z^2\right]},\\
&\ddot z=\frac{z (z^2+1)^2}{\left[A (z^2+1)^2+4 z^2\right]^2} \Big[h (z^2+1) \left(A (z^2+1)^2+6 z^2-2\right)\\
&+A
   \left(z^3+z\right)^2+6 z^4-4 z^2+2\Big]
\end{split}
\]
and transform equation \eqref{eq:rnvegneq0} into a linear equation with rational coefficients of the form
\begin{equation}
\label{eq:var}
m''+p(z)m'+q(z)m=0,\qquad '=\dfrac{\mathrm{d}}{\mathrm{d}z},
\end{equation}
with coefficients  
\begin{equation}
\begin{split}
 &p(z)= \frac{\ddot z+a_1\dot z}{\dot z^2} =
z \Big[\frac{2 \left(A z^2+A+2\right)}{A (z^2+1)^2+4 z^2}+\frac{1}{z^2+1}\\
&-\frac{h+1}{(h+1)
   z^2+h-1}\Big],\\
 &q(z)=\frac{a_0}{\dot z^2}=\frac{A-C}{C}\Big[-\frac{2}{(z^2+1)^2}+\frac{h}{z^2+1}\\
 &-\frac{h (h+1)}{h
   z^2+h+z^2-1}+\frac{4 (A+1)}{A (z^2+1)^2+4 z^2}\Big].
\end{split}
\label{eq:ppqqdgneq0}
\end{equation}
We fix $h$ such that $1 + A - h^2=0$.  Then polynomial $A (z^2+1)^2+4 z^2$ divides by $(h+1)z^2+h-1$, two singularities of this equation vanishes and coefficients $p(z)$ and $q(z)$ in \eqref{eq:ppqqdgneq0} simplify to 
\begin{equation}
\begin{split}
&p(z)=   \frac{2 z \left((h-1) z^2+h\right)}{h (z^2+1)^2-z^4+1},\\
&q(z)=-\frac{2 (z^2-1) \left(C-h^2+1\right)}{C(z^2+1)^2 \left[(h-1) z^2+h+1\right]}.
\end{split}
\label{eq:pqspec}
\end{equation}

If we make the following change of the independent variable
\[
z\to y=\frac{1}{z^2+1}+\frac{h-3}{6},
\]
then equation \eqref{eq:var} with coefficients \eqref{eq:pqspec}  transforms into
\begin{equation}
\frac{\mathrm{d}^2m}{\mathrm{d} y^2} +\frac{f'(y)}{2f(y)} \frac{\mathrm{d}m}{\mathrm{d} y}-\frac{\alpha y+\beta}{f(y)}m=0,
\label{eq:lamiak}
\end{equation}
with function $f(y)$
\begin{equation*}
 f(y)=4y^3-g_2y-g_3,\,\,\,   
 g_2=\frac{h^2}{3}+1,\,\,\, g_3=-\frac{1}{27} h \left(h^2-9\right), 
\end{equation*}
and parameters 
\begin{equation*}
\begin{split}
&\alpha= \frac{2\left(C-h^2+1\right)}{C}=2-\frac{2 A}{C},\\
&\beta=-\frac{h \left(C-h^2+1\right)}{3 C}=\frac{h (A-C)}{3 C}.
\end{split}
\end{equation*}
In equation \eqref{eq:lamiak} one can recognize the algebraic form of the Lam\'e
equation; see Appendix~\ref{sec:lamiaste}. The discriminant
$27g_3^2-g_2^3=-(h^2-1)^2=-A^2\neq0$. 

Usually, the Lam\'e equation is written
with the parameter $n$ instead of $\alpha$  defined as $\alpha=n(n+1)$ and for
our equation equals
\begin{equation}
n=\frac{1}{2} \left(-1\pm\sqrt{9-8\frac{A}{C}}\right).
\label{eq:lamiasten}
\end{equation}

Conditions which guarantee that the identity component of the differential
Galois group of the Lam\'e equation  is Abelian expressed in terms $n,\beta,g_2$
and $g_3$ are given  in  Lemma~\ref{lem:lame}.  

Let us check the values of $n$ corresponding to the Kovalevskaya and
Goryachev-Chaplygin cases. In the Kovalevskaya case $A=B=2C$,  and in this case
expression on $n$ \eqref{eq:lamiasten} in the corresponding Lam\'e equation is
\[
n=\frac{1}{2} \left(-1\pm\sqrt{7}\,\rmi\right).
\]
 In the Goryachev-Chaplygin case  $A=B=4C$ equation \eqref{eq:lamiasten} gives
\[
n=\frac{1}{2} \left(-1\pm\sqrt{23}\,\rmi\right).
\]
These values of $n$ do not agree with any form admissible in three cases when
the identity component of the differential Galois group has an Abelian identity
component mentioned in Lemma~\ref{lem:lame} in Appendix~\ref{sec:lamiaste}.
Moreover, as the covering $t\mapsto z$ given by~\eqref{eq:tran} does not change
the identity component of the differential Galois group of the normal
variational equations~\eqref{eq:rnvegneq0} is not Abelian and by
Theorem~\ref{thm:Morales} the Kovalevskaya and the Goryachev-Chaplygin cases of
our system are not integrable. This ends the proof. 
\end{proof}

\section{Integrability analysis. Case $g= 0$}
\label{sec:g0}

If $g=0$ and $\vL=\vzero$ then the system is integrable.  
In this section, we assume that $g=0$ and $\vL\neq\vzero $. 
Unlike for the case $g\neq0$, Lemma~\ref{lem:lowest} does not give any
obstruction for the integrability. This is why we cannot use the Ziglin
Theorem~\ref{thm:z1} and reduce our consideration to a symmetric body. 
In the following section, we prove the theorem that gives obstruction to the
integrability of the sliding top similar to those given by
Theorem~\ref{thm:z1} for the heavy top. 

\subsection{Non-symmetric case}
\label{sec:g0generic}
We will prove the following theorem.
\begin{theorem}
    \label{thm:g0gen}
If  $(A-B)(B- C)(C-A)\neq 0$, and $\vL\neq\vzero$, then the
system~\eqref{eq:poissonowska} with $g=0$ is not integrable  with complex meromorphic first integrals. 
\end{theorem}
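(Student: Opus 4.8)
The plan is to prove non-integrability of the asymmetric sliding top with $g=0$ by applying Ziglin's splitting-of-separatrices result (Theorem~\ref{thm:Ziglin}), treating the sliding top as a perturbation of the free asymmetric Euler top. Recall that for $g=0$ the Hamiltonian is $H = H_2 + H_6$ with $H_2 = \tfrac12\vM\cdot\vI^{-1}\vM$ (the free top) and the perturbation $H_6 = -\tfrac12 m[\vGamma\cdot(\vI^{-1}\vM\times\vL)]^2$. Unlike in the $g\neq0$ case, Lemma~\ref{lem:lowest} gives no obstruction, so the idea is to introduce an artificial small parameter. The natural choice is to scale $\vL \to \mu\vL$ (equivalently rescale $\vM$), which makes $H_6 = O(\mu^2)$ and places the system in the perturbative framework \eqref{eq:HexpZiglin} with the free Euler top as the unperturbed system $H_0$.

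First I would exhibit the relevant hyperbolic structure of the free asymmetric top. For $(A-B)(B-C)(C-A)\neq0$, rotation about the intermediate inertia axis (say $\vb_2$, with $A>B>C$ or the appropriate ordering) is well known to be \emph{unstable}: on the fixed level sets $F_1=\vGamma^2$, $F_2=\vGamma\cdot\vM$ of the Casimirs, the reduced free-top system has two hyperbolic equilibria corresponding to the two senses of rotation about the intermediate axis, and these are joined by the classical separatrix (homoclinic/heteroclinic) solutions lying on the energy level through the saddle. These separatrix solutions $\widehat\vx(t)$ are expressible explicitly in terms of $\operatorname{sech}$ and $\tanh$ functions of $t$. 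The next step is to pass to action-angle-type coordinates $(q,p,\varphi,I)$ for the free top adapted to this separatrix so that assumptions 1--3 of Theorem~\ref{thm:Ziglin0} hold: the pair $(q,p)$ describes the reduced two-dimensional saddle dynamics, $I$ is essentially the value of $F_2$ (or a combination of the Casimirs), and $\varphi$ is the conjugate angle measuring rotation of the top about the vertical. One must verify that $\partial H_0/\partial I$ is bounded below by a positive constant along the separatrix (assumption~3), which corresponds to the fact that the precession frequency does not vanish.

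The heart of the argument is to compute the Fourier coefficients $\overline h_k(q,p,I)$ of the perturbation $H_1 \propto H_6$ along the separatrix, continue $\widehat\vx(t)$ into the complex-$t$ strip $\Pi$, locate its singularities (the poles of the $\operatorname{sech}/\tanh$ functions, which occur at purely imaginary $t$), and show that for some $k\neq0$ the sum of residues of $h_k$ in $\Pi$ is nonzero. Because $H_6$ is quadratic in $\vGamma$ and $\vM$ and involves the fixed vector $\vL$, its restriction to the separatrix is a concrete elliptic/rational-trigonometric expression in $\widehat\vx(t)$ whose residues at the complex singularities can in principle be evaluated in closed form; the $\varphi$-dependence enters through $\vGamma$, which rotates about the vertical with the precession angle, so the Fourier decomposition in $\varphi$ is genuinely nontrivial and the relevant harmonics $k=\pm1,\pm2$ will carry the obstruction. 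By Theorem~\ref{thm:Ziglin}, a nonvanishing residue sum for a single $k\neq0$ rules out an analytic first integral independent of $H$ for all small $\mu\neq0$; one then removes the artificial parameter by noting that the residue computation is analytic in the system parameters, so the obstruction persists for the actual system (i.e.\ $\mu=1$) for generic, hence all admissible, parameter values with $(A-B)(B-C)(C-A)\neq0$ and $\vL\neq\vzero$.

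The main obstacle I anticipate is the residue computation itself: one must show the residue sum is \emph{not identically zero} as a function of the parameters $(A,B,C)$ and the components of $\vL$. This is delicate because there could be cancellations among the contributions from the different singular points in the strip $\Pi$, or among the different components of $\vL$; the completely symmetric and Lagrange-type degeneracies must emerge precisely as the loci where the residue sum vanishes. Concretely, the hard part will be organizing the explicit expansion of $H_6(\widehat\vz(t,\overline\varphi))$ into Fourier harmonics in $\varphi$ and extracting the Laurent tails at the poles of the separatrix solution, keeping track of how $\vL$ enters, and then verifying non-cancellation—most likely by exhibiting a specific harmonic $k$ and a specific choice of parameters for which the residue is manifestly nonzero, followed by an analyticity/genericity argument to cover the full asymmetric parameter region. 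I expect this is exactly why the authors describe the proof as ``quite complicated and laborious,'' with the bulkier residue expressions relegated to Appendix~\ref{sec:residius_explicitely}.
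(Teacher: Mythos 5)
Your skeleton coincides with the paper's actual proof: set $\vL=\varepsilon\vS$ and treat the sliding top as a perturbation of the free Euler top; use the two hyperbolic equilibria (rotations about the middle axis) and the explicit double-asymptotic solutions \eqref{eq:solsEulertop}; pass to Andoyer--Deprit variables so that the system has the Ziglin form \eqref{eq:HexpZiglin} with $q=\al$, $p=\aL$, $I=\aG$, $\varphi=\ag$; Fourier-expand $H_1$ (only harmonics $|k|\leq 2$ occur) and compute residues of $h_1,h_2$ at the complexified singularities $z=\rme^{\lambda t}=\pm\rmi$. Two local corrections, though. First, the slow action must be $I=\aG=\sqrt{\vM\cdot\vM}$, a first integral of $H_0$ that the perturbation destroys --- \emph{not} ``essentially the value of $F_2$'': $F_2=\vM\cdot\vGamma=\aH$ is a Casimir, conserved by both the unperturbed and the perturbed flow, so with that choice $\Delta I\equiv 0$ and Theorem~\ref{thm:Ziglin} is vacuous. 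Second, your removal of the artificial parameter ``by analyticity in the parameters'' is a non sequitur: Ziglin's theorem only rules out integrals for all sufficiently small $\mu\neq 0$, and a nonvanishing first-order (Melnikov-type) residue controls nothing at $\mu=1$. For $g=0$ the fix is cheap but different: the linear map $(\vM,\vGamma)\mapsto(\vM,\rho\vGamma)$ conjugates the system with parameter $\vL$ to the system with parameter $\vL/\rho$ (only the irrelevant Casimir level $F_1$ changes, since $\vL\times\vGamma$ and $\vGamma\cdot(\,\cdot\times\vL)$ are invariant), so all $\varepsilon\neq0$ give mutually conjugate systems and smallness of $|\vL|$ costs no generality.

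The genuine gap is your closing move: ``exhibit a specific harmonic and a specific choice of parameters with nonzero residue, then cover the full asymmetric region by analyticity/genericity.'' That only excludes a proper analytic subset's complement, while the theorem claims non-integrability for \emph{every} $(A,B,C)$ with $(A-B)(B-C)(C-A)\neq0$ and every $\vL\neq\vzero$ --- and the zero loci of the individual residue obstructions are genuinely nonempty: the Hess--Appelrot-type conditions \eqref{eq:hess1} ($S_2=0$, $\sqrt{b-a}\,S_3\pm\sqrt{c-b}\,S_1=0$) annihilate both residues of $h_1$ along one of the two separatrices, and nothing a priori forbids an integrable case from sitting exactly on such a locus (that is precisely where Hess--Appelrot sits for the heavy top). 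The paper's proof is therefore not ``nonzero somewhere plus genericity'' but a complete algebraic elimination: the residue sums are factored as $\scS^\pm_k=\scA^\pm_k R^\pm_k$ with $\scA^\pm_k\neq0$, and one shows that the \emph{simultaneous} vanishing $R_1^{\pm}=R_2^{\pm}=0$ --- both harmonics $k=1,2$ \emph{and both} double-asymptotic solutions --- admits no solution with $0<a<b<c$ and $\vS\neq\vzero$, via the case split $S_1\neq0$ versus $S_1=0$ and resultant-style combinations ($W_1=W_2=0$, $C_1D_2-C_2D_1=0$, etc.). Your plan becomes a proof only if the genericity step is replaced by this joint analysis of several residue conditions over both separatrix branches.
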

\begin{proof}
In our proof, we will use Theorem~\ref{thm:Ziglin} mentioned in Section~\ref{sec:Ziglin}. 

At first, we set $\vL = \varepsilon \vS$, and consider $\varepsilon$ as
a small parameter. Then, the Hamiltonian of the system~\eqref{integralsM} can be
written as a perturbation of the integrable Euler top 
\begin{equation}
  H = H_0 
   + \varepsilon^2 H_1 + \cdots,
   \label{eq:HHDA}
\end{equation}
where 
\begin{equation}
\begin{split}
& H_0 = 
 \frac{1}{2} \left( \frac{M_1^2}{A} + \frac{M_2^2}{B}+\frac{M_3^2}{C} \right),\\ 
& H_1 = \frac{1}{2} m \left[\vGamma\cdot(\vOmega\times\vS)\right]^2, \quad
  \vM = \vI\vOmega.
  \end{split}
  \label{H0H1}
\end{equation}
Hence, we can apply the results of \cite{Ziglin:80::e}. To this
end we have to introduce the Androyer-Deprit canonical coordinates: angles
$(\al,\ag,\ah)$ and momenta $(\aL,\aG,\aH)$
 related to
$(\Gamma_1,\Gamma_2,\Gamma_3)$ and $(M_1,M_2,M_3)$ by formulae
\begin{widetext}
\begin{equation}
\begin{split} 
M_1=&\sqrt{\aG^2-\aL^2}\sin \al, \quad
M_2=\sqrt{\aG^2-\aL^2}\cos \al, \quad M_3=\aL,\\
\Gamma_1=&\left[\frac{\aH}{\aG}\sqrt{1-\frac{\aL^2}{\aG^2}}+\frac{\aL}{\aG}\sqrt{1-\frac{\aH^2}{\aG^2}}\cos \ag\right]\sin \al
+\sqrt{1-\frac{\aH^2}{\aG^2}}\sin \ag \cos \al, \\
\Gamma_2=&\left[\frac{\aH}{\aG}\sqrt{1-\frac{\aL^2}{\aG^2}}+\frac{\aL}{\aG}\sqrt{1-\frac{\aH^2}{\aG^2}} \cos \ag\right]\cos \al-\sqrt{1-\frac{\aH^2}{\aG^2}} \sin \ag \sin \al , \\
\Gamma_3=&\frac{\aL\aH}{\aG^2}-\sqrt{1-\frac{\aL^2}{\aG^2}}\sqrt{1-\frac{\aH^2}{\aG^2}} \cos \ag,
\end{split}
\label{eq:ADvariables}
\end{equation}
\end{widetext}
where $\aG>0$, $-\aG\leq\aL\leq\aG$ and $-\aG\leq\aH\leq\aG$, see
\cite{Andoyer:1923::,Deprit:67::,Kozlov:80::}. The inverse formulae are the
following
\begin{widetext}
\begin{equation*}
\begin{split}
&\aL=M_3,\quad \aG=\sqrt{\vM\cdot\vM},\quad \aH=\vM\cdot\vGamma,\quad
\al=\arctan\left(\frac{M_1}{M_2} \right),\quad \ag=\arctan\left(\frac{M_2\Gamma_1-M_1\Gamma_2}{\frac{M_3(\vM\cdot\vGamma)}{\sqrt{\vM\cdot\vM}}-\sqrt{\vM\cdot\vM}\Gamma_3}\right).
\end{split}
\end{equation*}
\end{widetext}
Let us notice that $\aH=F_2$ is constant for our system and the classical
heavy top problem. In these variables $H_0$ and $H_1$ take the forms
\begin{widetext}
\begin{equation*}
\begin{split}
&H_0=\frac{1}{2} \left((\aG^2-\aL^2)  \left(a \sin^2\al+b \cos^2\al\right)+c
   \aL^2\right),\\
&H_1=-\frac{m}{\aG^4} \Big[\aG \sqrt{\aG^2-\aH^2}\sin\ag  \Big[\sin\al \Big(c S_1\aL
-a S_3
   \sqrt{\aG^2-\aL^2} \sin\al\Big)-b S_3 \sqrt{\aG^2-\aL^2} \cos^2\al+c S_2 \aL \cos\al\Big]\\
   &+\sqrt{\aG^2-\aH^2}\cos\ag  \Big[\cos\al \Big(  (a-b)S_3\aL 
  \sqrt{\aG^2-\aL^2}
   \sin\al-S_1 \left(b \aG^2+(c-b) \aL^2\right)\Big)+S_2 \sin\al \left(a
   \aG^2+(c-a) \aL^2\right)\Big]\\
   &+\sqrt{\aG^2-\aL^2} \Big(\aH \cos\al 
   \Big[(a-b)S_3
    \sqrt{\aG^2-\aL^2} \sin\al+ (b-c)S_1 \aL \Big]+(c-a) S_2 \aL\aH    \sin\al\Big)\Big]^2,
\end{split}    
\end{equation*}
\end{widetext}
where $a=\tfrac{1}{A}$, $b=\tfrac{1}{B}$ and $c=\tfrac{1}{C}$. Later, we will
assume that $0<a<b<c$.  In these variables $H_0$ does not depend on $\aH$ and
$\ah$. Moreover, the expansion \eqref{eq:HHDA} has the form
\eqref{eq:HexpZiglin} with variables $q=\al$, $p=\aL$, $I=\aG$ and
$\varphi=\ag$.

The Fourier series of  $H_1$ with respect to variable $\ag$ contains only five terms  
\begin{equation*}
 H_1=h_0+h_1 +h_{-1} +h_2 +h_{-2}. 
\end{equation*}
In our further considerations we will need to know the explicit form   of two of them $h_1$ and $h_2$:
\begin{widetext}
\begin{equation*}
\begin{split}
&h_1= \frac{m \aH}{4 \aG^4} \sqrt{\aG^2-\aH^2} \sqrt{\aG^2-\aL^2} 
\Big[2 \aL \Big((a-c) S_2 \sin \al
+ (c-b) S_1 \cos \al\Big)
-(a-b)S_3 \sqrt{\aG^2-\aL^2} \sin (2 \al)\Big] 
\Big[(a-b)\\
&\cdot S_3 
   \sqrt{\aG^2-\aL^2} \Big(\aL \sin (2 \al)-\rmi \aG \cos (2 \al)\Big)  +\rmi (a+b)S_3 \aG 
   \sqrt{\aG^2-\aL^2}+2 \sin\al
   \Big(a S_2 (\aG^2-\aL^2)+c \aL (S_2 \aL-\rmi S_1 \aG )\Big)\\
 &  -2 \cos \al \Big(b S_1(\aG^2-\aL^2)+c  (S_1 \aL+\rmi S_2 \aG)\aL\Big)\Big]\rme^{\rmi \ag}, \\
&h_2=\frac{m (\aH^2-\aG^2) }{32 \aG^4}\Big[4 \cos (2 \al) \Big(a^2 (\aG^2-\aL^2) 
   \Big(S_2^2 (\aL^2-\aG^2)+S_3^2\aG^2\Big)+2 a c S_2\aL (\aL^2-\aG^2) (S_2\aL
-\rmi S_1 \aG)+b^2 (\aG^2\\ &-\aL^2) \left((S_1^2-S_3^2)\aG^2 -S_1^2\aL^2\right)+2 b c S_1\aL (\aG^2-\aL^2) (S_1\aL+\rmi S_2 \aG)
   +c^2 \aL^2 \left[(S_1^2-S_2^2) (\aG^2+\aL^2)
  +4 \rmi S_1 S_2\aG \aL\right]\Big)\\
  &+(\aG^2-\aL^2)
   \Big(S_3^2 \left((a-b)^2\aL^2- (3 a^2+2 a b+3
   b^2)\aG^2\right)+4 (-\rmi a S_2 (\aG+\aL)+b S_1 (\aG+\aL)-c (S_1-\rmi S_2)\aL) (\rmi S_2 (a (\aG\\ &-\aL)+c \aL)
  +b S_1
   (\aG-\aL)+c S_1\aL)\Big)-8 \rmi S_3 \sqrt{\aG^2-\aL^2} \Big(
   (a+b)\aG
   -(a-b) [\aG \cos (2 \al)+\rmi \aL \sin (2 \al)]\Big) \Big[\sin \al \Big(a
   S_2 (\aL^2\\ &-\aG^2)
   +c \aL (-S_2\aL+\rmi S_1\aG)\Big)+\cos \al \Big(b S_1 (\aG^2-\aL^2) +c \aL (S_1\aL
   +\rmi S_2\aG)\Big)\Big]
   -4 \sin (2 \al) \Big(2 \Big(a S_2
   (\aG^2-\aL^2)+c \aL (S_2\aL\\
   &-\rmi S_1\aG)\Big) \left(b S_1 (\aG^2-\aL^2)
   +c \aL (S_1\aL+\rmi S_2\aG)\right) +\rmi   (b^2-
   a^2) S_3^2 \aG \aL
   (\aG^2-\aL^2)\Big) -(a-b)^2S_3^2
   (\aG^4-\aL^4) \cos (4 \al)\\
 &  -2 \rmi S_3^2 (a-b)^2 \aG \aL(\aG^2-\aL^2) \sin (4 \al)\Big] \rme^{\rmi 2 \ag}.    
\end{split}   
\end{equation*}
\end{widetext}

For $\varepsilon=0$ our system reduces to an integrable Euler top with
Hamiltonian $H_0$. Thus, the formulae for solutions given in
\cite{Ziglin:80::e} can be applied directly. Equations \eqref{eq:nonperturb} for
the unperturbed Euler top  read
 \begin{equation}
 \begin{split}
\frac{\mathrm{d} \al}{\mathrm{d} t}&=
\frac{\partial H_0(\al,\aL,G_0 )}{\partial \aL}=\aL\left(c-(a\sin^2\al+b\cos^2\al)\right),\\
\frac{\mathrm{d} \aL}{\mathrm{d} t}&=
-\frac{\partial H_0(\al,\aL,G_0 )}{\partial \al}=(b-a)(G_0-\aL^2)\sin\al\cos\al,
\end{split}
\label{eq:nonperturbspec}
\end{equation}
where $\aG=G_0>0$. They  
 have two hyperbolic equilibria at $(\aL,\al)=(0,0)$ and $(\aL,\al)=(0,\pi)$,
 with
eigenvalues $\{-\lambda,\lambda\}$, where  $\lambda=G_0\sqrt{(b-a)(c-b)}>0$, see
 Fig.~\eqref{fig:topasym}.
These points are connected by two double-asymptotic solutions
\begin{equation}
\begin{split}
&\widehat{\aL}^{\pm}(t)= \pm 2G_0\sqrt{\frac{b-a}{c-a}}
\frac{e^{\lambda t}}{\left(1+e^{2\lambda t}\right)},\\
&\widehat{\al}^{\pm}(t)=\arctan\left( \pm\frac{1}{2}
\sqrt{\frac{c-a}{c-b}}\frac{\left(1-e^{2\lambda t}\right)}{e^{\lambda t}}\right).
\end{split}
\label{eq:solsEulertop}
\end{equation}
Function $\widehat{\ag}^{\pm}(t)$ for unperturbed system one can obtain
integrating the third equation in \eqref{eq:sysZigunpertur}
\begin{equation*}
\begin{split}
\widehat{\ag}^{\pm}(t)&=\int\frac{\partial H_0}{\partial \aG}(\widehat{\al}^{\pm}(t),\widehat{\aL}^{\pm}(t),G_0)\mathrm{d}t\\
&=G_0\int\left(a\sin^2\widehat{\al}^{\pm}(t)+b\cos^2\widehat{\al}^{\pm}(t)\right)\mathrm{d}t\\
&=G_0bt+\frac{\rmi}{2}\ln\left(\frac{e^{2\lambda t}-\gamma_1^2}{e^{2\lambda t}-\gamma_2^2} \right)+g_0,
\end{split}
\end{equation*}
where we introduced new parameters depending on $a$, $b$ and $c$
\[
\begin{split}
&\gamma_1=\sqrt{(b-a)(c-a)}+\rmi \sqrt{(c-b)(c-a)},\\
&\gamma_2=\gamma_1^{\ast}=\sqrt{(b-a)(c-a)}-\rmi \sqrt{(c-b)(c-a)},
\end{split}
\]
and symbol ${}^{\ast}$ denotes the complex conjugation, see \cite{Ziglin:80::e}. 
From \eqref{eq:solsEulertop} one deduce the following identities
\begin{equation}
\begin{split}
 &\sin\widehat{\al}^{\pm}(t) =2\sqrt{\frac{c-b}{c-a}}\frac{\rme^{\lambda t}}{\sqrt{(\rme^{2\lambda t}-\gamma_1^2)(e^{2\lambda t}-\gamma_2^2)}},\\
 & \cos\widehat{\al}^{\pm}(t) =\pm\frac{1-e^{2\lambda t}}{\sqrt{(e^{2\lambda t}-\gamma_1^2)(e^{2\lambda t}-\gamma_2^2)}},\\
  &\sqrt{1-\left(\frac{\widehat{\aL}^{\pm}(t)}{G_0}\right)^2}=\frac{\sqrt{(e^{2\lambda t}-\gamma_1^2)(e^{2\lambda t}-\gamma_2^2)}}{1+e^{2\lambda t}}.
\end{split}    
\end{equation}
Next, we evaluate $h_1$ and $h_2$ at these solutions, and the results are
denoted by $h_1^\pm(z)$ and $h_2^\pm(z)$ where $z=\rme^{\lambda t}$. The only
poles of these functions are at $z=\pm \rmi$, see formulae~\eqref{eq:h1z},
\eqref{eq:h1zm}, \eqref{eq:h2z} and \eqref{eq:h2zm}, respectively.  Let 
\begin{equation}
\scS^\pm_k =  \operatorname{res}\,(h_k^{\pm},-\rmi) + \operatorname{res}\,(h_k^{\pm},+\rmi), \qquad k=1,2, 
\end{equation}
be the sum of the residues of the function $h_k^{\pm}$.  These sums are of the form 
$\scS^\pm_k = \scA^\pm_k R^\pm_k$ where $\scA^\pm_k \neq 0$, see
formulae~\eqref{eq:h1z} and \eqref{eq:h2z}.
By Theorem~\ref{thm:Ziglin}, if the system is integrable then  $\scS^\pm_k=0$
for $k=1,2$, so $R^\pm_k=0$.
Thus,
\begin{equation}
\label{eq:resA0}
\begin{split}
&\Re  R_1^{+}+\Re  R_1^{-}=4\beta S_1[A_{11}S_2+A_{12}S_3]=0,\\
&\Im  R_1^{+}+\Im  R_1^{-}=4\alpha\beta S_1[A_{21}S_2+A_{22}S_3]=0,
\end{split}    
\end{equation}
where 
\begin{equation*}
\begin{split}
&A_{11}=\sqrt{\alpha ^2+\beta ^2} \left(\rme^{\frac{\pi  b}{\alpha  \beta }}-1\right) \left(\alpha ^2 \beta ^2-2 b^2+3 \alpha ^2 b\right),\\
&A_{12}=3 \alpha ^2 \beta  \left(\rme^{\frac{\pi  b}{\alpha  \beta }}+1\right) \left(-\alpha ^2+\beta ^2+b\right),\\
&A_{21}=-3 \beta  \sqrt{\alpha ^2+\beta ^2} \left(\alpha ^2-b\right) \left(\rme^{\frac{\pi  b}{\alpha  \beta }}-1\right),\\
&A_{22}=\left(\rme^{\frac{\pi  b}{\alpha  \beta }}+1\right) \left(\beta ^2 \left(3 b-4 \alpha ^2\right)+b \left(2 b-3 \alpha ^2\right)\right).
\end{split}    
\end{equation*}
In the above formulae $\alpha=\sqrt{b-a}$ and $\beta=\sqrt{c-b}$ and
$\alpha\beta\neq 0$. 

Let us assume that  $S_1\neq0$. Then 
  we can consider~\eqref{eq:resA0} as a system of linear
homogeneous equations for $S_2$ and $S_3$.  It has  a non-zero solution  if the
determinant
\begin{equation*}
\begin{split}
&\begin{vmatrix}  
A_{11}&A_{12}\\
A_{21}&A_{22}
\end{vmatrix}=\sqrt{\alpha ^2+\beta ^2} \left(1-\rme^{\frac{2 \pi  b}{\alpha  \beta }}\right)W_1,\\
&W_1=b^2 \left(2 b-3 \alpha ^2\right)^2+\beta^2 \left(9 \alpha ^6+6 b^3-10 \alpha ^2 b^2-3 \alpha ^4 b\right)\\
&+\alpha^2 \beta^4 \left(6 b-5 \alpha ^2\right),
\end{split}
\end{equation*}
vanishes. 

Similarly,  we get
\begin{equation*}
\begin{split}
&\Re  R_2^{+}-\Re  R_2^{-}=-8b\alpha\beta S_1[B_{11}S_2+B_{12}S_3],\\ 
&\Im  R_2^{+}-\Im  R_2^{-}=8\beta S_1 [B_{21}S_2+B_{22}S_3],
\end{split}    
\end{equation*}
where
\begin{equation*}
\begin{split}
&B_{11}=-3 \beta  \sqrt{\alpha ^2+\beta ^2} \left(\alpha ^2-b\right) \left(\rme^{\frac{2 \pi  b}{\alpha  \beta }}+1\right),\\
&B_{12}=\left(\rme^{\frac{2 \pi  b}{\alpha  \beta }}-1\right) \left(\beta ^2 \left(3 b-4 \alpha ^2\right)+b \left(2 b-3 \alpha ^2\right)\right),\\
&B_{21}=8 b \beta  S_1 \sqrt{\alpha ^2+\beta ^2} \left(\rme^{\frac{2 \pi  b}{\alpha  \beta }}+1\right) \left(\alpha ^2 \beta ^2-2 b^2+3 \alpha ^2 b\right),\\
&B_{22}=-12 \alpha ^2 \beta ^2 S_1 \left(\rme^{\frac{2 \pi  b}{\alpha  \beta }}-1\right) \left[\alpha ^2 \beta ^2-2 b^2+2 b (\alpha^2 -\beta^2 )\right].
\end{split}    
\end{equation*}
The determinant of the system for homogeneous system on $S_2$ and $S_3$ is
\begin{equation*}
\begin{split}
&\begin{vmatrix}  
B_{11}&B_{12}\\
B_{21}&B_{22}
\end{vmatrix}=4 \beta  S_1 \sqrt{\alpha ^2+\beta ^2} \left(\rme^{\frac{4 \pi  b}{\alpha  \beta }}-1\right)W_2,\\
&W_2=2 b^3 \left(2 b-3 \alpha ^2\right)^2+\alpha ^2 \beta ^4 \left(9 \alpha ^4+12 b^2-19 \alpha ^2 b\right)\\
&+2 b \beta ^2 \left(9 \alpha ^6+6 b^3-10 \alpha ^2 b^2-3 \alpha ^4 b\right).
   \end{split}
\end{equation*}
Hence, if the system is integrable, then  $W_1=W_2=0$. These equations have  the following solutions
\[
\begin{split}
&\{b= 0,\alpha = 0\},\quad\left\{b= -\frac{3 \beta ^2}{2},\alpha = 0\right\},\quad\{b= 0,\beta = 0\},\\
&\left\{b= \frac{3 \alpha ^2}{2},\beta =
   0\right\},\,\,\,
\left\{b= \alpha ^2,\beta = -\rmi \alpha \right\},\,\,\, \left\{b= \alpha ^2,\beta = \rmi \alpha \right\},
   \end{split}
\]
but all of them are excluded by assumptions. 
Summarizing,  if $S_1\neq 0$, then necessarily $S_2=S_3=0$. But if it is so,
then 
\begin{equation*}
  \begin{split}
    \Re R_1^{+}=&3 \alpha  \beta^3\left(b - 2 \alpha ^2\right) \left(e^{\frac{\pi  b}{\alpha  \beta }}+1\right) S_1^2 =0 ,\\
    \Re R_1^{+} = & \beta^2\left(b^2-2 \alpha ^2 \beta ^2-3 \alpha ^2 b\right) \left(e^{\frac{\pi  b}{\alpha  \beta }}+1\right) S_1^2=0.
  \end{split}
\end{equation*}
Hence the above equalities give
\begin{equation*}
  b = 2 \alpha ^2  \mtext{and} b^2-2 \alpha ^2 \beta ^2-3 \alpha ^2 b=0.
\end{equation*}
This implies that  $\alpha^2(\alpha^2+\beta^2)=0$, but it is impossible. 
We conclude $S_1\neq0$ is impossible. 

Thus, let $S_1=0$. We show, that in this case $S_2S_3\neq0$. In fact, 
if $S_1=S_3=0$ and $S_2\neq 0$,then 
\[
\begin{split}
&\Re R_1^{+}=-3 \alpha  b \beta  \left(\alpha ^2+\beta ^2\right) \left(e^{\frac{\pi  b}{\alpha  \beta }}+1\right)S_2^2\neq 0.
\end{split}
\]
Thus,  $S_3\neq 0$.  Similarly, if $S_1=S_2=0$ and $S_3\neq 0$,then 
\begin{equation*}
  \Re R_1^{+}=3 \alpha ^3 \beta \left(2 \beta ^2+b\right) \left(e^{\frac{\pi  b}{\alpha  \beta }}+1\right) S_3^2\neq 0.
\end{equation*}
Thus, we proved our claim.

We show that the last case $S_1 =0$ and $S_2S_3\neq0$ is impossible. 
 We consider combinations
\begin{equation}
\begin{split}
&2 \left(\alpha ^2 \beta ^2+b^2\right) \Re R_1^{+}  -3 \alpha  b \beta\Im  R_1^{+}\\
&=
2\alpha S_3[C_{11}S_2+C_{12}S_3]=0,\\
&3 \left(\alpha ^2 \beta ^2+2 b^2\right)\Re R_2^{+}+4 b \left(\alpha ^2 \beta ^2+b^2\right))
\Im R_2^{+}\\
&=4b\alpha S_3[C_{21}S_2+C_{22}S_3]=0,
\end{split}
\label{eq:linek1}
\end{equation}
where 
\[
\begin{split}
&C_{11}=  \sqrt{\alpha ^2+\beta ^2} \left(\rme^{\frac{\pi  b}{\alpha  \beta }}-1\right) \Big(-4 b^4-\beta ^2 b^2 \left(11 \alpha ^2+6 b\right)\\
&+\alpha ^2 \beta
   ^4 \left(2 \alpha ^2-15 b\right)\Big),\\
&C_{12}=  3 \alpha ^2 \beta ^3 \left(\rme^{\frac{\pi  b}{\alpha  \beta }}+1\right) \left(2 \alpha ^2 \beta ^2-b^2+3 \alpha ^2 b\right),\\
&C_{21}=-\sqrt{\alpha ^2+\beta ^2} \left(\rme^{\frac{2 \pi  b}{\alpha  \beta }}+1\right) \Big(9 \alpha ^4 \beta ^6+8 b^5\\
&+2 b^3 \beta ^2 \left(11 \alpha
   ^2+6 b\right)+5 \alpha ^2 b \beta ^4 \left(\alpha ^2+6 b\right)\Big),\\
 &C_{22}=3 \alpha ^2 \beta ^3 \left(\rme^{\frac{2 \pi  b}{\alpha  \beta }}-1\right) \left(\alpha ^2 \beta ^2 \left(\alpha ^2+b\right)-2 b^2 \left(b-2 \alpha
   ^2\right)\right).  
\end{split}
\]
Let us notice that the coefficients  of linear combinations on left-hand sides
of equation \eqref{eq:linek1} do not vanish. 
The determinant of matrix $[C_{ij}]$ is
\begin{equation*}
  \det[C_{ij}]=12 \alpha ^2 \beta ^3 \sqrt{\alpha ^2+\beta ^2}
     e^{\frac{3 \pi  b}{2 \alpha  \beta }} 
     \cosh \left(\frac{\pi  b}{2 \alpha  \beta }\right) 
     \left[ C_1 X +C_2  \right],
\end{equation*}
where $X=\cosh \left(\tfrac{\pi  b}{ \alpha  \beta }\right)$, and 
\begin{equation}
  \label{eq:C1C2}
  \begin{split}
    C_1=&
       \alpha ^2 \left(\alpha ^2 \beta ^2+b^2\right)^2 
       \Big(9 \beta ^4+4 b^2+\beta ^2 (\alpha ^2+12 b)\Big),\\
    C_2= &  -\left(2 b^2 \left(b-2 \alpha ^2\right)-\alpha ^2 \beta ^2 
       \left(\alpha ^2+b\right)\right) \Big(4 b^4\\
       &+\beta ^2 b^2
       \left(11 \alpha ^2+6 b\right)+\alpha ^2 \beta ^4 
       \left(15 b-2 \alpha ^2\right)\Big).
       \end{split}
\end{equation}

We consider also other combination
\begin{equation}
\begin{split}
& \hspace{-0.15cm}2 \left(b^2+\beta ^2 \left(3 b-2 \alpha ^2\right)\right)\Re R_1^{+} -3 \alpha  \beta  \left(2 \beta ^2+b\right) \Im  R_1^{+}\\
&=-2\alpha S_2[D_{11}S_2+D_{12}S_3]=0,\\
&3 \alpha  \beta  \left(2 b^2+\beta ^2 \left(4 b-\alpha ^2\right)\right)\Re R_2^{+}+4 b \Big(b^2+\beta ^2 (3 b\\
&-2 \alpha^2)\Big)\Im R_2^{+}=4\alpha bS_2[D_{21}S_2+D_{22}S_3]=0,
\end{split}
\label{eq:linek2}
\end{equation}
where
\[
\begin{split}
&D_{11}=3 \beta ^3 \left(\alpha ^2+\beta ^2\right) \left(\rme^{\frac{\pi  b}{\alpha  \beta }}+1\right) \left(-2 \alpha ^2 \beta ^2+b^2-3 \alpha ^2 b\right),\\
&D_{12}=\sqrt{\alpha ^2+\beta ^2} \left(\rme^{\frac{\pi  b}{\alpha  \beta }}-1\right) \Big[18 \alpha ^2 \beta ^6+4 b^4\\
&+b^2 \beta ^2 \left(18 b-\alpha
   ^2\right)+\beta ^4 \left(4 \alpha ^4+18 b^2+9 \alpha ^2 b\right)\Big],\\
&D_{21}=3 \beta ^3 \left(\alpha ^2+\beta ^2\right) \left(\rme^{\frac{2 \pi  b}{\alpha  \beta }}-1\right) \Big(\alpha ^2 \beta ^2 (\alpha ^2+b)\\
&-2
   b^2 (b-2 \alpha ^2)\Big),\\
&D_{22}=\sqrt{\alpha ^2+\beta ^2} \left(\rme^{\frac{2 \pi  b}{\alpha  \beta }}+1\right) \Big(-8 b^5+2 b^3 \beta ^2 (\alpha ^2-18 b)\\
&+b \beta ^4
   (\alpha ^4-36 b^2-18 \alpha ^2 b)+9 \alpha ^2
    \beta ^6 (\alpha ^2-4 b)\Big).
\end{split}
\]
As in the previous case, the linear combination coefficients on the right-hand
side of the equation \eqref{eq:linek2} do not vanish because $2 b^2+\beta ^2 \left(4
b-\alpha ^2\right)=a(c-b)+b(3c-b)>0$ and $b^2+\beta ^2 \left(3 b-2 \alpha
^2\right)=2a(c-b)+bc>0$.

The determinant of matrix $[D_{ij}]$ is
\begin{equation*}
\begin{split}
 \det[D_{ij}]=&-12 \beta ^3 \left(\alpha ^2+\beta ^2\right)^{3/2} 
\rme^{\frac{3 \pi  b}{2 \alpha  \beta }} \cosh \left(\frac{\pi  b}{2 \alpha  
\beta }\right)\\
&\cdot\left[D_1 X +D_2\right],\\
    D_1 = & 2 \alpha ^2 \left(\alpha ^2 \beta ^2+b^2\right) 
\left(2 \alpha ^2 \beta ^2-b^2-3 \beta ^2 b\right) \\
&\cdot\Big(9 \beta ^4
+4 b^2+\beta ^2 \left(\alpha
   ^2+12 b\right)\Big) , \\
D_2 = &
   \left(2 b^2 \left(b-2 \alpha ^2\right)-\alpha ^2 \beta ^2 \left(\alpha ^2+
   b\right)\right) \Big(18 \alpha ^2 \beta ^6+4 b^4\\
&+b^2 \beta ^2 (18 b-\alpha ^2)+ 
   \beta ^4 \left(4 \alpha ^4+18 b^2+9 \alpha ^2 b\right)\Big).
  \end{split}
\end{equation*}
Now, as $S_2S_3\neq$, equations~\eqref{eq:linek1} and~\eqref{eq:linek2} imply
that 
\begin{equation}
  \label{eq:CDX}
  C_1 X +C_2 = 0 \mtext{and} D_1 X +D_2=0
\end{equation}
and the necessary condition for this is  
\begin{multline*}
    C_1 D_2 - C_2D_1 =R (a b+2 a c-2 b c) \Big(a^2 b-a^2 c+a b^2\\+3 a b c-2 b^2 c\Big)=0, 
\end{multline*}
where $
   R =c (a-b)^2 (b-c)^2  [a b +c(b-a)-a c+b c]  \left[a b +c(9c -5b -a)\right]\neq 0
$, 
 because $0<a<b<c$. Hence,  we have only two possibilities 
\begin{equation}
  \label{eq:condory}
  \begin{split}
  &R_1=(a b+2 a c-2 b c) = 0, \mtext{or}\\
  &R_2= a^2 b-a^2 c+a b^2+3 a b c-2 b^2 c=0.
  \end{split}
\end{equation}
Solving the first condition for $c$ and substituting it into
equations~\eqref{eq:C1C2} we easily find that $b=2a$, then $C_1=C_2=0$. However,
it is impossible because if $b=2a$ then $R_1 = 2a(a-c)\neq0$. On the other hand,
if $b\neq2a$ then $ C_1\neq 0$ and $C_2\neq 0$, so equations~\eqref{eq:CDX}
reduces to  $X=1$, but it is impossible because by assumptions $X>1$.

Condition $R_2 = 0$ can be solved for $c$
\begin{equation}
  c=\frac{a b (a+b)}{a^2-3 a b+2 b^2}.
\end{equation}
One can check that with this $c$ we have $C_2=D_2=0$. Thus, as $X\neq 0$, we
must have $C_1=0$ and $D_1$. It is easy to check that this is possible only when
$7a^2 - b^2=0$. However, if $a=b/\sqrt{7}$ then 
\begin{equation*}
 R_2= \frac{1}{7} b^2 \left[\left(1+\sqrt{7}\right) b+3 \left(\sqrt{7}-5\right) c\right]=0,
\end{equation*}
but this is impossible because $c>b$. 

Summarizing assumptions that the system is integrable, and the body is not
symmetric leads to a contradiction. This finishes the proof. 
\end{proof}

Following \cite{Ziglin:80::e} one can also check the conditions of vanishing separately residues of the function $h_1(z)$ at $\pm\rmi$. Calculations yield that that conditions $\operatorname{res}\,(h_1^{+},\pm\rmi)=0$ give  $S_2=0$ and $\beta  S_1+\alpha  S_3=0$, and   $\operatorname{res}\,(h_1^{-},\pm\rmi)=0$ lead to $S_2=0$ and $ \alpha S_3 - \beta S_1=0$.
In parameters $a,b$ and $c$ these conditions take the forms
\begin{equation}
S_2=0\quad\text{and}\quad \sqrt{b-a} S_3\pm \sqrt{c-b}S_1=0,
 \label{eq:hess1}
\end{equation}
 compare with \eqref{eq:HAconditions}, which are the same as for the Hess-Appelrot case in the classical heavy top problem. In this case, like in the case of the heavy top,  our system is not integrable but possesses a Darboux polynomial \eqref{eq:HAdarbus} and its zero level is invariant with respect to the system.

\subsection{Generic symmetric case}
\label{ssec:g0non}
By Theorem~\ref{thm:g0gen}, if the system is integrable, then the body is
symmetric. Thus,  without loss of the generality we assume that $A=B$ and making
a rotation around the symmetry axis, we can achieve that the second component of
the vector $\vL$ in the principal axes frame vanishes. This guarantees that we
have a family of particular solutions needed for our non-integrability proof. 

 It is convenient to choose the body frame so that its first axis coincides with $\vL$ and the third axis of the body frame is perpendicular to the second principal axis. In this special coordinate frame  we have
\begin{equation}
\vL=[1,0,0]  ^T,\quad   \vI=\begin{bmatrix}
    a&0&2d\\
    0&b&0\\
    2d&0&c
\end{bmatrix},
\end{equation}
In the principal axes $\widetilde{\vL}=\vR[1,0,0]^T=[\widetilde{L}_1,0, \widetilde{L}_3]^T$,
and $\widetilde{\vI} = \vR\vI\vR^T = \diag(A,A,C)$, where 
\begin{equation}
  \vR = \begin{bmatrix}
    \widetilde{L}_1 & 0 & -\widetilde{L}_3\\
    0 & 1 & 0\\
    \widetilde{L}_3 & 0 & \widetilde{L}_1
  \end{bmatrix}.
\end{equation}
Hence, we have the following relations
\begin{equation*}
\begin{split}
&a=A\widetilde{L}_1^2 +C\widetilde{L}_3^2,\quad b=B=A,\quad 
c= C\widetilde{L}_1^2 +A\widetilde{L}_3^2,\\
&2d=(C-A)\widetilde{L}_1\widetilde{L}_3.
\end{split}
\end{equation*}
See \cite{mp:05::a} where a similar special body frame was used. 
The fact that $\vI$ is positively defined implies that 
\begin{equation}
  \label{eq:Ipos}
  a>0, \qquad b>0, \qquad (ac-4d^2)>0.
\end{equation}

Considering equations of motion in the introduced spacial body frame  once more we use the invariant manifold 
\begin{equation*}
 \cN:= \{ ( \vM,\vGamma)\in \C^6\ | \ M_1=M_3=\Gamma_2 = 0,
 \ \Gamma_1^2+\Gamma_3^2=1\}.  
 \label{eq:inva1geq0}
\end{equation*}
 The system restricted to $ \cN$ takes the form
\begin{equation}
  \label{eq:repgeq0}
  \frac{\mathrm{d} M_2}{\mathrm{d} t} = \frac{M_2^2\Gamma_1\Gamma_3}{(b+\Gamma_3^2)^2}, \,\,\,
  \frac{\mathrm{d} \Gamma_1}{\mathrm{d} t}=-\frac{M_2\Gamma_3}{b+\Gamma_3^2}, \,\,\, 
 \frac{\mathrm{d} \Gamma_3}{\mathrm{d} t} =\frac{M_2\Gamma_1}{b+\Gamma_3^2},
\end{equation}
and it gives a family of particular solutions obtained as the intersection of
two algebraic curves
\begin{equation}
h=\frac{M_2^2}{2 (b+\Gamma_3^2)},\quad \Gamma_1^2+\Gamma_3^2=1.  
\label{eq:partgeq0}
\end{equation}

For this particular solutions  the variational equations take the form
\begin{equation}
  \label{eq:varallvd0}
  \Dt \begin{bmatrix}
    m_1\\
    m_2\\
    m_3 \\
    \gamma_1 \\
    \gamma_2\\
    \gamma_3
  \end{bmatrix}
  =
\begin{bmatrix}
  a_{11} & 0 & a_{13} & 0 & a_{15} & 0 \\
  0 & a_{22} & 0 & a_{24} & 0 & a_{26}\\
  a_{31} & 0 & a_{33} & 0 & a_{35} & 0 \\
  0 & a_{42} & 0 & 0 & 0 & a_{46}\\
  a_{51} & 0 & a_{53} & 0 & a_{55} & 0 \\  
  0 & a_{62} & 0 & a_{64} & 0 & a_{66}\\
\end{bmatrix}
\begin{bmatrix}
  m_1\\
  m_2\\
  m_3 \\
  \gamma_1 \\
  \gamma_2\\
  \gamma_3
\end{bmatrix}.
\end{equation}

The normal variational equations are those corresponding to changes of variables $M_1,M_3$ and $\Gamma_2$
\begin{widetext}
\begin{equation}
\begin{split}
 \dot m_1 =&  -\frac{2 b d M_2}{(a c - 4 d^2) (b + \Gamma_3^2}m_1 +\frac{(a (b - c) + 4 d^2) M_2}{(a c - 4 d^2) (b + \Gamma_3^2)}m_3
 + \frac{(a (b - c) + 4 d^2) M_2^2 \Gamma_3}{(a c - 4 d^2) (b + \Gamma_3^2)^2}\gamma_2,\\
 \dot m_3=&  \frac{M_2 (a c - 2 d (2 d + \Gamma_1 \Gamma_3) - 
   c (b + \Gamma_3^2))}{(a c - 4 d^2) (b + \Gamma_3^2)}m_1
   +\frac{M_2 (a \Gamma_1 \Gamma_3 + 2 d (b + \Gamma_3^2)}{(a c - 4 d^2) (b + \Gamma_3^2)}m_3
   +\frac{M_2^2 \Gamma_3 (a \Gamma_1 \Gamma_3 + 
   2 d (b + \Gamma_3^2))}{(a c - 4 d^2) (b + \Gamma_3^2)^2}\gamma_2, \\
 \dot\gamma_2=& \frac{2 d \Gamma_1 + c \Gamma_3}{a c - 4 d^2}m_1-\frac{a \Gamma_1 + 2 d \Gamma_3}{a c - 4 d^2}m_3
 -\frac{M_2 \Gamma_3 (a \Gamma_1 + 2 d \Gamma_3)}{(a c - 4 d^2) (b + \Gamma_3^2)}\gamma_2.
\end{split}
\label{eq:norvariat}
\end{equation}
\end{widetext}
We use the zero level of the first integral of the normal variational equations
\[
 \Gamma_1 m_1   +  \Gamma_3m_3 + 
 M_2 \gamma_2=0.
\]
Using this constraint we eliminate   $\gamma_2$ from the normal variational
equations \eqref{eq:norvariat} and write the result as a second-order equation 
\begin{equation}
  \label{eq:rnvegeq0}
   \Dtt m + a_1(t) \Dt m + a_0(t) m = 0, \qquad m \equiv m_1,
\end{equation}
with coefficients
\[
\begin{split}
a_1(t)&=\frac{2 M_2 \Gamma_1\Gamma_3}{\left(b+\Gamma_3^2\right)^2},\\
a_0(t)&=\frac{M_2^2}{\left(a c-4 d^2\right)\left(b+\Gamma_3^2\right)^3}\Big[\Gamma_1^2 \left(a
   (b-c)+4 d^2\right)\\
   &-\left(b+\Gamma_3^2\right)
   \left((a-b) (b-c)+4 d^2\right)+2 b \Gamma_1
   \Gamma_3 d\Big].
\end{split}
\]
Now, we make the following transformation of the independent variable
\begin{equation}
\label{eq:trangeq0}
t \longrightarrow z := \frac{\Gamma_3(t)}{1+\Gamma_1(t)}.
\end{equation}
Then we obtain
\begin{gather*}
\Gamma_1 = \frac{1-z^2}{1+z^2}, \,\,\, \Gamma_3 = \frac{2 z}{1+z^2},   \,\,\, 
 M_2 = \frac{2\dot z(b + 4 z^2 + 2 b z^2 + b z^4)}{(1+z^2)^3}.
\end{gather*}
Using the first equation in \eqref{eq:partgeq0} and \eqref{eq:repgeq0} one can calculate derivatives
\[
\begin{split}
&\dot z^2=\frac{h (z^2+1)^4}{2 \left(b (z^2+1)^2+4
   z^2\right)},\\
&\ddot z=\frac{h z (z^2+1)^3 \left(b (z^2+1)^2+6
   z^2-2\right)}{\left(b (z^2+1)^2+4 z^2\right)^2}.
   \end{split}
\]
Now one can transform equation \eqref{eq:rnvegeq0} into
\begin{equation}
\label{eq:vargeq0}
m''+p(z)m'+q(z)m=0,\qquad '=\dfrac{\mathrm{d}}{\mathrm{d}z},
\end{equation}
with coefficients  
\begin{equation*}
\begin{split}
&p(z)=\frac{2 z \left(b z^2+b+2\right)}{b (z^2+1)^2+4 z^2},\\
&q(z)=-\frac{4}{ac-4d^2}\Big[\frac{(2 a-b) (b-c)+2 b d z+8 d^2}{(z^2+1)^2}+\frac{b d z}{z^2+1}\\
&-\frac{a (b+1)
   (b-c)+d \left(4 (b+1) d+b z \left(b
   \left(z^2+3\right)+4\right)\right)}{b (z^2+1)^2+4
   z^2}
\Big].
\end{split}
\label{eq:ppqqgeq0}
\end{equation*}
Making the following change of the dependent  variable
\begin{equation*}
m=w\exp\left[-\dfrac{1}{2}\int_{z_0}^z p(\zeta)\mathrm{d}\zeta\right],
\end{equation*}
one can simplify \eqref{eq:vargeq0} to the standard reduced form
\begin{equation}
\label{eq:zredgeq0}
w''=r(z)w,\qquad r(z)=\dfrac{1}{2}p'(z)+\dfrac{1}{4}p(z)^2-q(z). 
\end{equation}
The explicit form of $r(z)$  is  
\begin{equation}
\begin{split}
&\hspace{-0.4cm} r(z)=\frac{4 \left((2 a-b) (b-c)+2 b d
   z+8 d^2\right)}{\left(a c-4 d^2\right)(z^2+1)^2}\\
   &\hspace{-0.4cm}+\frac{4 b d z}{\left(a c-4 d^2\right)(z^2+1)}
   -\frac{12
   (b+1) z^2}{\left(b (z^2+1)^2+4 z^2\right)^2}\\
& \hspace{-0.4cm}+\frac{1}{\left(b (z^2+1)^2+4
   z^2\right) \left(a c-4 d^2\right)}\Big[a (5b+6) c\\
   &\hspace{-0.4cm}-4 a b (b+1)-4 d \left((5 b+6) d+b z \left(b
   \left(z^2+3\right)+4\right)\right)\Big].  
\end{split}
\label{eq:rrgeq0dneq0}
\end{equation}
Equation \eqref{eq:zredgeq0} has now generically seven singularities: six singularities 
$z_{1,2}=\pm\rmi$, $z_{3,4}=\pm \sqrt{-\tfrac{b+2 \sqrt{b+1}+2}{b}}$,
$z_{5,6}=\pm\sqrt{-\tfrac{b-2 \sqrt{b+1}+2}{b}}$ which are poles of the second order provided
\begin{equation*}
\begin{split}
&b(1+b)(a c - 4 d^2)U\neq0,\\
&U= (b-2 a)^2 (b - c)^2 + 4 (8 a (b - c) + b (-3 b + 4 c)) d^2 \\
&+ 64 d^4,
\label{eq:uuu}
\end{split}
\end{equation*}
and $z_{7}=\infty$.
We show that the above inequality always holds. In  fact, by assumptions
$b(1+b)(a c - 4 d^2)\neq0$. To show that $U\neq 0$  we introduce the
ratios of the parameters $x=\tfrac{b}{a}$, $y=\tfrac{c}{a}$ and
$z=\tfrac{d}{a}$. Then 
\[
\begin{split}
u&=\frac{U}{a^4}=(x-2)^2y^2-2 (x-2) \left((x-2) x-8 z^2\right)y\\
&+(x-2)^2 x^2+4 (8-3 x) x z^2+64 z^4.
\end{split}
\]
This polynomial treated as a quadratic polynomial of the variable $y$  has the discriminant
$\Delta=-16 (x-2)^2 x^2 z^2$.  Therefore, if $x\neq 2$  its roots are not real
but $y\in\R$ and this case is impossible. If $x=2$, then the polynomial $u$
simplifies to $u=16 z^2 (1 + 4 z^2)>0$ and does not vanish for real $z$.  Thus,
our claim is proved. 

In the generic case coefficient $r(z)$ in \eqref{eq:rrgeq0dneq0} has the following expansion
\begin{equation}
\label{eq:rozgeq0}
r(z)=\sum_{i=1}^6\left[\dfrac{\alpha_i}{(z-z_i)^2}+\dfrac{\beta_i}{z-z_i}\right],
\end{equation}
compare with formula \eqref{eq:r}, with coefficients $\alpha_i$
\[
\begin{split}
&\alpha_2 = \alpha_1^* = 2+\frac{b (-2 a+b-c)}{a c-4 d^2}+\rmi\frac{2 b d}{a c-4 d^2}, \\ 
&\alpha_3=\alpha_4=\alpha_5=\alpha_6=-\frac{3}{16} 
\end{split}
\]
and we do not write explicit formulas for $\beta_i$ because their forms do not play a role in our considerations. The differences of exponents $\Delta_i=\sqrt{1+4\alpha_i}$ at singular points $z_i$  equal
\begin{equation}
\label{eq:Deltaigeq0}
\begin{split}
&\Delta_1 = \Delta_2^* = \sqrt{9+\frac{4 b (-2 a+b-c)}{a c-4 d^2}-\rmi\frac{8 b d}{a c-4 d^2}}, \\
&\Delta_3=\Delta_4=\Delta_5=\Delta_6=\frac{1}{2}.
\end{split}
\end{equation}
The order of infinity is generically 4 provided $b V\neq0$
\[
V=4 a b(b-1) - 3 a (b-2) c + 4 b^2 (c-b) + 12 (b-2) d^2.
\]
If parameters are such that $V=0$, the infinity remains regular, so this does
not influence further analysis with the help of the Kovacic algorithm.

 The main result of this section can be formulated in the following way.
\begin{lemma}
\label{lem:2}
Let us assume that  $d\neq 0$. Then the identity component of the differential
Galois group of equation~\eqref{eq:zredgeq0} with coefficient $r(z)$ given in
\eqref{eq:rrgeq0dneq0} is not Abelian.
\end{lemma}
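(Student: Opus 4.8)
The plan is to compute the differential Galois group $\mathcal{G}$ of the reduced equation~\eqref{eq:zredgeq0} by the Kovacic algorithm (Appendix~\ref{sec:2ndorderlde}). Since the covering~\eqref{eq:trangeq0} does not change the identity component and $\mathcal{G}\subseteq\mathrm{SL}(2,\C)$, it suffices to prove $\mathcal{G}=\mathrm{SL}(2,\C)$, for then $\mathcal{G}^{\circ}=\mathrm{SL}(2,\C)$ is non-Abelian, which by Theorem~\ref{thm:Morales} is exactly the integrability obstruction we seek. Equivalently, I must exclude the three solvable or finite cases of the algorithm. The structural input is~\eqref{eq:Deltaigeq0}: at $z_3,\dots,z_6$ the exponent difference equals $\tfrac12$, while at $z_{1,2}=\pm\rmi$ it is $\Delta_1=\Delta_2^{\ast}$, whose radicand $9+X+\rmi Y$, with $X=\tfrac{4b(-2a+b-c)}{ac-4d^2}$ and $Y=-\tfrac{8bd}{ac-4d^2}$, has nonzero imaginary part precisely because $d\neq0$ (recall $b>0$ and $ac-4d^2>0$). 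Hence $\Delta_1^{2}\notin\R$, so that $\Delta_1\notin\R$ and in fact both $\Re\Delta_1\neq0$ and $\Im\Delta_1\neq0$; this single fact drives the whole argument.

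First I would dispose of the finite (primitive) case: a finite subgroup of $\mathrm{SL}(2,\C)$ has local monodromies of finite order, hence all exponent differences rational, contradicting $\Delta_1\notin\R$. This step is exactly where the hypothesis $d\neq0$ enters. For the two remaining cases I record the Kovacic pole data. At a second-order pole with difference $\Delta$ the dihedral case forms $\{2,\,2\pm2\Delta\}\cap\Z$: the complex $\Delta_{1,2}$ force $E_{z_1}=E_{z_2}=\{2\}$, whereas $\Delta=\tfrac12$ gives $E_{z_3}=\dots=E_{z_6}=\{1,2,3\}$. Equivalently, the local monodromy at $\pm\rmi$, having eigenvalues of modulus $\neq1$, cannot lie in the anti-diagonal coset of $N(T)$ (whose elements have eigenvalues $\pm\rmi$) and must be toral.

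The \textbf{main obstacle} is excluding the reducible and imprimitive cases uniformly in the parameters. In the reducible case one seeks $\omega\in\C(z)$ with $\omega'+\omega^2=r$; using the local exponents $\tfrac12(1\pm\Delta_i)$, the only sign choices at the conjugate pair $z_{1,2}$ that keep the Kovacic degree $N$ real are those producing $1\pm\Re\Delta_1$, every other choice injecting the nonzero $\Im\Delta_1$. Thus $N$ can be a non-negative integer only if $\Re\Delta_1$ satisfies a rationality relation; writing $\sqrt{9+X+\rmi Y}=p+\rmi q$ gives $9+X=p^2-q^2$ and $Y=2pq$, with $p$ confined to a finite admissible set, which upon eliminating $q$ becomes an explicit algebraic relation among $a,b,c,d$ for each candidate $p$. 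In the imprimitive case one runs the finite search over $e_{z_i}\in\{1,2,3\}$ (with $e_{z_1}=e_{z_2}=2$ forced) together with the admissible value at infinity read off after~\eqref{eq:rrgeq0dneq0}, so that $N=\tfrac12\bigl(e_\infty-\sum_i e_{z_i}\bigr)$. I expect the bulk of the proof to consist in checking that each surviving candidate forces a relation incompatible with $d\neq0$ and the positivity~\eqref{eq:Ipos}, and that where a candidate degree $N$ does survive the required polynomial of degree $N$ fails to exist.

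Once all three cases are excluded, the algorithm places the equation in its remaining case, giving $\mathcal{G}=\mathrm{SL}(2,\C)$. Its identity component is the whole group and hence non-Abelian, which proves the lemma.
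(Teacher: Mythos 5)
Your strategy overshoots the lemma in a way that opens a genuine gap. The lemma asks only that the identity component $\scG^0$ be non-abelian, so by Lemmata~\ref{lem:alg} and~\ref{lem:algc1} one must exclude exactly four possibilities: $\scG$ conjugate to a subgroup of the diagonal group $\scD$, to some $\scT_m$, to a subgroup of $D^\dag$ in~\eqref{eq:Ddag}, or $\scG$ finite. The full triangular group $\scT$ of~\eqref{eq:Tfull} is connected and non-abelian, so a reducible $\scG=\scT$ is perfectly compatible with the lemma, and the paper makes no attempt to rule it out. Your plan, by contrast, requires killing the entire reducible case (Kovacic Case I) in order to reach $\scG=\mathrm{SL}(2,\C)$, and this part of your sketch does not close. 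First, your claim that $p=\Re\Delta_1$ is ``confined to a finite admissible set'' is false: the candidate degree $N$ is not bounded a priori, and with the exponents $\tfrac12\left(1\pm\tfrac12\right)$ at $z_3,\dots,z_6$ and $\alpha_\infty^\pm\in\{0,1\}$ the integrality condition reads $\Re\Delta_1=\pm 2(N+1+s-e_\infty)$ up to normalization, with $s\in\left\{1,\tfrac32,2,\tfrac52,3\right\}$ and arbitrary $N\in\N_0$; since $\Re\Delta_1$ is unbounded over the region~\eqref{eq:Ipos} (let $b\to\infty$, say), this produces infinitely many parameter relations, each still requiring a polynomial-existence check that you only ``expect'' to fail. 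Second, if Case I actually succeeds for some admissible $(a,b,c,d)$ --- which nothing in your argument excludes --- then $\scG\neq\mathrm{SL}(2,\C)$ there and your claimed conclusion is false even though the lemma remains true.

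The missing ingredient is a parameter-uniform exclusion of the \emph{abelian} reducible subcase only, and the paper supplies it via the second symmetric power: if $\scG\subseteq\scD$, there are two exponential solutions $w_1,w_2$ of the form~\eqref{eq:expsol}, and $v=w_1w_2\in\C(z)$ solves~\eqref{eq:ssp}; for $d\neq0$ no exponent of~\eqref{eq:ssp} at the six finite singularities is a negative integer, so $v$ must be a polynomial of degree at most $2$, while $2e_{i,l}\notin\Z$ forces the mixed choice $e_{i,1}+e_{i,2}=1$ at every $z_i$, whence $\deg v\geq 6$ --- a contradiction obtained with no case-by-case search. Your disposal of the finite case (rationality of exponent differences versus $\Delta_1\notin\R$, which holds precisely because $d\neq0$) matches the paper, and the same irrationality kills $\scT_m$. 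In the dihedral case the ``bulk of the proof'' you anticipate evaporates: with $E_1=E_2=\{2\}$, $E_3=\cdots=E_6=\{1,2,3\}$ and $e_\infty\leq4$ one gets $d(e)\leq\tfrac12(4-8)=-2<0$, so no candidate family exists and no polynomial is ever examined. In short: keep your finite and dihedral arguments, but replace the attempted exclusion of all of Case I by the exclusion of $\scD$ and $\scT_m$ alone, for instance by the symmetric-square argument above.
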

\begin{proof}
 Let $\scG$ denote the differential Galois group of equation~\eqref{eq:zredgeq0}.
  If its identity component $\scG^0$ is commutative, then either 
  \begin{enumerate}
    \item $\scG$ is a proper subgroup of triangular group $\scT$, see Lemma~\ref{lem:algc1} or
    \item $\scG$ is a subgroup of  $\scD^\dag$ defined in~\eqref{eq:Ddag}, or 
    \item $\scG$  is a finite group,
  \end{enumerate}
  see Appendix~\ref{sec:2ndorderlde}. 
  
In our proof we use the lemma~\ref{lem:alg} and~\ref{lem:algc1} from
Appendix~\ref{sec:2ndorderlde}. If equation~\eqref{eq:zredgeq0} is reducible and the
identity component $\scG^0$ of its differential Galois group is Abelian, then
$\scG$ is a subgroup of the diagonal group $\scD$, or $\scG$ is a proper
subgroup of triangular group $\scT$.  
  
  First, we show that $\scG\not\subset\scD$. Let us assume the opposite. Then there exist two exponential solutions of~\eqref{eq:zredgeq0} that have the following form.
\begin{equation}
\label{eq:expsol}
 w_l = P_l \prod_{i=1}^6(z-z_i)^{e_{i,l}}, \qquad P_l\in\C[z], \quad l=1,2, 
\end{equation}
where $e_{i,l}$ for $l=1,2$ are exponents at singular points $z_i$
\[
e_{i,l}\in\left\{ \frac{1}{2}(1 + \Delta_i), \frac{1}{2}(1 -\Delta_i)\right\}.
\]
Here $\Delta_i$ for $i=1,\ldots, 6$ are given by \eqref{eq:Deltaigeq0}. The product
of these solutions $v=w_1w_2$ is a rational function and is a solution of
the second symmetric power of equation~\eqref{eq:zredgeq0}, that is,
equation~\eqref{eq:ssp} with $r$ given by~\eqref{eq:roz}. This equation has the
same singular points as equation~\eqref{eq:zredgeq0}. Exponents $\rho_{i,l}$ at
singular points $z_i$, and at infinity $\rho_{\infty,l}$ for the second
symmetric power of~\eqref{eq:zredgeq0} are given by
\[
\rho_{i,l}\in  \{1, 1\pm\Delta_i\}, \quad 
\rho_{\infty,l}\in\{-2,-1,0\}, \quad l=1,2,3. 
\]
If we write $v = P/Q$ with $P,Q \in \C[z]$ then
\[
Q = \prod_{i=1}^6(z-r_i)^{n_i}, \qquad n_i\in\N, \qquad
r_i\in\{z_1,\ldots ,z_6\},
\]
and $ n_i= -\rho_{i,l}\in\N$ for certain $l$. However, if $d\neq0$,
then $\rho_{i,l}$ is not a negative integer for $i=1,\ldots, 6$ and
$l=1,2,3$. This implies that $Q=1$. Hence, equation~\eqref{eq:ssp} has
a polynomial solution $v=P$, and $\deg P = -\rho_{\infty,l}\leq 2$.
But $v$ is a product of two exponential solutions of the form
~\eqref{eq:expsol}, so we also have
\begin{equation}
\label{eq:exprod}
v = P_1P_2 \prod_{i=1}^6(z-z_i)^{e_{i,m}+e_{i,l}}\in\C[z], 
\quad m,l\in\{1,2\}.
\end{equation}
Consequently, ${e_{i,m}+e_{i,l}}$ is a non-negative integer for
$i=1,\ldots, 6$.  As for $d\neq0$, we have $2e_{i,l}\not\in\Z$ for
$i=1,\ldots, 6$ and $l=1,2$, we deduce that in~\eqref{eq:exprod} $m\neq
l$.  But $e_{i,1}+e_{i,2}=1$, for $i=1,\ldots, 6$.
Thus, we have 
\[ 
\deg v = \deg P =
\deg(P_1P_2)+6 > 2.
\]
We have a contradiction because we already showed that $\deg P \leq 2$.

It is also impossible that $\scG$ conjugates to $\scT_m$ for a certain
$m\in\N$ because when $d\neq 0$ exponents for $z_1$ and $z_2$ are not
rational. This implies also that $\scG$ is not finite.

The last possibility that $\scG^0$ is Abelian occurs when $\scG$ is
conjugated with a subgroup of $D^\dag$. We show that it is impossible.
To this end, we apply the second case of the Kovacic algorithm~\cite{Kovacic:86::}, see
Appendix~\ref{sec:2ndorderlde}. The auxiliary sets for singular points are the following
\[
\begin{split}
&E_1=E_2=\{2\}, \quad E_3=E_4 = E_5=E_6=\{1,2,3\}, \\
&E_\infty=\{0,2,4\}.
\end{split}
\]
In the Cartesian product $E=E_\infty\times E_1 \times \cdots \times E_6$ 
we look for such  elements $e$ for which 
\[
d(e):=\frac{1}{2}\left( e_\infty - \sum_{i=1}^6 e_i\right),
\]
is a non-negative integer, but there is no such element. Thus, the differential
Galois group of equation~\eqref{eq:zredgeq0} with coefficient $r(z)$ given in
\eqref{eq:rrgeq0dneq0} cannot be a subgroup of $D^\dag$ and its identity component is not Abelian.
\end{proof}
As the transformation  $t\mapsto z$ given by~\eqref{eq:trangeq0} does not change
the identity component of the differential Galois group of the normal
variational equations~\eqref{eq:rnve}, from the above lemma and
Theorem~~\ref{thm:Morales} it follows that if the Euler-Poisson equations are
integrable, then
\[ 
2 d = ( C-A){\widetilde L}_1 {\widetilde L}_3 =0.
\]

\subsection{Special symmetric case with $d=0$}
In this section we continue our integrability analysis assuming that
the necessary integrability condition $d=0$ formulated in Lemma~\ref{lem:3} is
satisfied and we use the condition that $B=A$. Vanishing of $d$ implies three
possibilities:
\begin{itemize}
\item $C=A$ that together with $B=A$ leads to fully symmetric integrable case $A=B=C$;
\item ${\widetilde L}_1=0$ that gives conditions ${\widetilde L}_1={\widetilde L}_2=0$ and $B=A$, thus integrable Lagrange case;
\item ${\widetilde L}_3=0$ and this case requires further analysis.
\end{itemize}

If ${\widetilde L}_3=0$, then $a=A$, $c=C$ and we have $\vI=\diag(A,A,C)$. 
In this case is useful to consider other  invariant manifold
\begin{equation*}
 \cM:= \{ ( \vM,\vGamma)\in \C^6\ | \ M_2=M_3=\Gamma_1 = 0,
 \ \Gamma_1^2+\Gamma_3^2=1\},   
\end{equation*}
different than in the case $d\neq0$.
The system restricted to $ \cM$ takes the form
\begin{equation}
  \label{eq:partsolgeqdeq0}
  \Dt M_1 = 0, \quad 
  \Dt \Gamma_2=\frac{M_1\Gamma_3}{A}, \quad 
  \Dt \Gamma_3=-\frac{M_1\Gamma_2}{A},
\end{equation}
and it gives a  particular solution.
The variational equations for this particular solution are following 
\begin{equation}
  \label{eq:varallvd00}
  \Dt \begin{bmatrix}
    m_1\\
    m_2\\
    m_3 \\
    \gamma_1 \\
    \gamma_2\\
    \gamma_3
  \end{bmatrix}
  =
\begin{bmatrix}
  0 & 0 & 0 & 0 & 0 & 0 \\
  0 & a_{22} & a_{23} & 0 & 0 & 0\\
  0 & a_{32} & a_{33} & 0 & 0 & 0 \\
  0 & a_{42} & a_{43} & 0 & 0 & 0\\
  a_{51} & 0 & 0 & 0 & 0 & a_{56}\\  
  a_{61} & 0 & 0 & 0 & a_{65} & 0
\end{bmatrix}
\begin{bmatrix}
  m_1\\
  m_2\\
  m_3 \\
  \gamma_1 \\
  \gamma_2\\
  \gamma_3
\end{bmatrix}.
\end{equation}

Now, the normal variational equations contain a subsystem for variables
$(m_2,m_3)$. 

We can rewrite it as one second-order  equation 
\begin{equation}
  \label{eq:rnve}
   \Dtt m + a_1(t) \Dt m + a_0(t) m = 0, \qquad m \equiv m_2,
\end{equation}
with coefficients
\[
\begin{split}
&a_1=\frac{2 A \Gamma_2\Gamma_3
   M_1 \left(A+\Gamma_2^2+\Gamma_3^2\right)}{\left[A
   \left(C+\Gamma_2^2\right)+C \Gamma_3^2\right] \left[(A-C) \left(A+\Gamma_3^2\right)-A \Gamma_2^2\right]},\\
&a_0= \frac{M_1^2}{A^2
   \left[A \left(C+\Gamma_2^2\right)+C\Gamma_3^2\right] \left[(A-C)
   \left(A+\Gamma_3^2\right)-A \Gamma_2^2\right]} \\
   &\cdot\Big[A^2 \Gamma_2^4+A
  \Gamma_2^2 \left(\Gamma_3^2 (3
   A-2 C)+A (C-A)\right)\\
   &+C \Gamma_3^2 (A-C)
   \left(A+\Gamma_3^2\right)\Big].
\end{split}
\]
Let us notice that system \eqref{eq:partsolgeqdeq0} has the solution
\begin{equation}
 M_1=const,\,\,\, \Gamma_2=\sin\left(\frac{M_1}{A} t\right),\,\,\,   
 \Gamma_3=\cos\left(\frac{M_1}{A} t\right).
 \label{eqparta}
\end{equation}
To rationalize normal variational equation \eqref{eq:rnve} we use the change of independent variable
\begin{equation}
\label{eq:trantryg}
t \longrightarrow z :=  \Gamma_3^2=\cos^2\left(\frac{M_1}{A} t\right)
\end{equation}
and  derivatives according to formulas
\[
\begin{split}
&\dot z^2= -\frac{4 M_1^2 (z-1) z}{A^2},\quad 
\ddot z=\frac{2 M_1^2 (1-2 z)}{A^2}.
\end{split}
\] 
Thus, we can transform equation \eqref{eq:rnve} into the following equation 
\begin{equation}
\label{eq:vargeq00}
m''+p(z)m'+q(z)m=0,\qquad '=\dfrac{\mathrm{d}}{\mathrm{d}z},
\end{equation}
with rational coefficients
\begin{equation*}
\begin{split}
 &p= \frac{\ddot z+a_1\dot z}{\dot z^2} =\frac{1}{2 z}+\frac{1}{2
   (z-1)}+\frac{A-C}{z (A-C)-A (C+1)}\\
   &+\frac{C-2 A}{z (2 A-C)+A (A-C-1)}
 ,\\
 &q=-\frac{1}{4 (C+1) z}-\frac{1}{4 (A+1)
   (z-1)}\\
   &+\frac{(A-2 C-2) (A-C)^2}{4 A (A+1) (C+1) [z (A-c)-A (C+1)]}\\
   &+\frac{(2
   A-C)^2}{2 A (A+1) [z (2 A-C)+A (A-C-1)]}.   
\end{split}
\end{equation*}

The coefficient $r(z)$ of the standard reduced form of normal variational equation \eqref{eq:zredgeq0} has the form
\begin{widetext}
\begin{equation}
\begin{split}
&r(z)=-\frac{3}{16 z^2}-\frac{3}{16 (z-1)^2}-\frac{1}{4
   \left(z-\frac{A (C+1)}{A-C}\right)^2}
   +\frac{3}{4 \left(z-\frac{A
   (C-A+1)}{2 A-C}\right)^2}
   -\frac{2 A^2+(A+3) C^2-(A+3) A C}{8
   (A+1) C (A-C)(z-1)}\\
 &  +\frac{-A (C+1)+C^2-3}{8 (C+1) (A-C-1)z} -\frac{(A-C)^2\left[A^2 (C+1)-A C (2 C+1)+2 c^2 (C+1)\right]}{4 A^2
   (A+1) C (C+1) [A (1 + C) + (C-A) z]}\\
   &+\frac{(C-2 A)^2 \left[2 A^3-4 A^2 C+A C (4
   C+1)-2 C^2 (C+1)\right]}{4 A^2 (A+1) (A-C-1) (A-C) \left[A (A - C-1) + (2 A - C) z\right]}.
\end{split}    
\label{eq:rrdgeq0}
\end{equation}
\end{widetext}
With the above coefficient $r(z)$,  equation \eqref{eq:zredgeq0}  has generically 
five singularities: four poles of the second order $z_1=0$, $z_2=1$, $z_3=\tfrac{A (C+1)}{A-C}$, $z_4=\tfrac{A (C-A+1)}{2A-C}$  provided
\begin{equation*}
AC(A+1) (C+1)(C-A)(1 - A + C)(2A - C)\neq0
\end{equation*}
and  $z_5=\infty$. The degree of infinity is 3 provided $A (A - C) (2A - C)\neq0$.

Now we will analyze the reduced form of normal variational equations
\eqref{eq:zredgeq0} with the above coefficient $r(z)$ in the generic case as
well as in nongeneric cases corresponding to the coalescence of singularities
or the change of the order of $r(z)$ at infinity. The results of this analysis
are formulated in the following lemma.  
\begin{lemma}
\label{lem:3}
Let us assume that $\vI=\diag(A,A,C)$ and ${\widetilde L}_2={\widetilde L}_3=0$. Then the identity component of the
differential Galois group of equation~\eqref{eq:zredgeq0} with coefficient $r(z)$ given in \eqref{eq:rrdgeq0} is not Abelian except the completely symmetric case $C=A$ which is a special case of the Lagrange case.
\end{lemma}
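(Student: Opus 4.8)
The plan is to run the Kovacic algorithm on the reduced normal variational equation~\eqref{eq:zredgeq0} with $r(z)$ from~\eqref{eq:rrdgeq0}, exactly as in the proof of Lemma~\ref{lem:2}. By the trichotomy of Lemma~\ref{lem:algc1}, if the identity component $\scG^0$ of the differential Galois group $\scG$ is Abelian, then $\scG$ is a proper subgroup of the triangular group $\scT$, or $\scG\subset\scD^{\dagger}$, or $\scG$ is finite; it therefore suffices to exclude all three possibilities. The first step is to read off the local data from the second-order poles: the exponent differences are $\Delta_1=\Delta_2=\tfrac12$ at $z=0,1$, then $\Delta_3=0$ at $z_3=\tfrac{A(C+1)}{A-C}$, $\Delta_4=2$ at $z_4=\tfrac{A(C-A+1)}{2A-C}$, and $\Delta_\infty=1$ at infinity (where $r\sim z^{-3}$ makes infinity a regular singular point). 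The crucial difference from Lemma~\ref{lem:2} is that here \emph{every} $\Delta_i$ is rational, so none of the three cases can be dismissed by an irrationality argument; each must be ruled out by an explicit polynomial-existence computation.

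To exclude the diagonal (reducible) subcase I would argue as in Lemma~\ref{lem:2}: a pair of exponential solutions $w_l=P_l\prod_i(z-z_i)^{e_{i,l}}$ would make $v=w_1w_2$ a rational solution of the second symmetric power~\eqref{eq:ssp}, whose exponents at $z_i$ lie in $\{1,1\pm\Delta_i\}$ and whose degree is bounded by the behaviour at infinity. Requiring $v$ to be rational forces each exponent sum $e_{i,1}+e_{i,2}$ to be an integer; combined with the degree bound this over-determines $v$. Because the differences $\tfrac12,0,2,1$ admit several formally consistent exponent patterns, I must substitute each admissible pattern into~\eqref{eq:ssp} and verify that the resulting candidate polynomial cannot satisfy the recursion. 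The same half-integer exponents at $z_1,z_2$ simultaneously rule out the proper triangular subgroups $\scT_m$ and, since a finite group would require rational monodromy without logarithms at $z_3$ (where $\Delta_3=0$) and $z_4$, exclude the finite case as well once one checks that these points carry genuine logarithmic singularities.

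The remaining subcase $\scG\subset\scD^{\dagger}$ is handled by the second case of the Kovacic algorithm~\cite{Kovacic:86::}. The auxiliary sets are $E_1=E_2=\{1,2,3\}$, $E_3=\{2\}$, $E_4=\{-2,2,6\}$ and $E_\infty=\{0,2,4\}$. In contrast with Lemma~\ref{lem:2}, the product $E=E_\infty\times E_1\times\cdots\times E_4$ \emph{does} contain elements $e$ for which $d(e)=\tfrac12\bigl(e_\infty-\sum_i e_i\bigr)$ is a non-negative integer (for instance $e=(4,1,1,2,-2)$ gives $d=1$, and $e=(2,1,1,2,-2)$ gives $d=0$), so the algorithm does not terminate at once. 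For each such $e$ I would form the candidate $\theta=\tfrac12\sum_c\tfrac{e_c}{z-c}$ and the associated degree-$d(e)$ polynomial $P$, and check that the defining recursion for $P$ admits no solution. This last verification is exactly where the dependence on $A,C$ enters, and where the value $C=A$ separates off from all other parameter values.

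Finally I would treat the non-generic configurations, which occur precisely when the genericity factor $AC(A+1)(C+1)(C-A)(1-A+C)(2A-C)$ vanishes. Since $A,C>0$ one checks that $z_3$ never meets $0$ or $1$ and that $z_3\neq z_4$, so only two degenerations survive: $A=C+1$, where $z_4$ coalesces with the pole at $0$, and $C=2A$, where $z_4$ escapes to infinity and lowers the order of $r$ there. In each I recompute $r(z)$, re-read the local exponents off the modified pole structure, and rerun the same three-branch Kovacic analysis; the exceptional case $C=A$ is the fully symmetric top, for which $z_3\to\infty$, $z_4\to1$, the singularity structure collapses and the equation becomes solvable, in agreement with the integrable Lagrange/fully symmetric case. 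The main obstacle is precisely this loss of the irrationality shortcut: because no branch is eliminated for free, completeness demands that the non-existence of the Kovacic polynomials be verified explicitly in the generic case and in each degenerate branch, and that the computations correctly isolate $C=A$ as the unique surviving Abelian possibility.
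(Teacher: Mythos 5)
Your overall architecture (run the Kovacic algorithm in the generic case, then treat the degenerations, isolating $C=A$) matches the paper, and your identification of the singular points, the exponent differences $\Delta_1=\Delta_2=\tfrac12$, $\Delta_3=0$, $\Delta_4=2$, and the degenerate branches $C=2A$, $C=A-1$ is correct. But there is a genuine gap in the one branch that actually requires work: the reducible case. You claim that the half-integer exponents at $z_1,z_2$ rule out the proper triangular groups $\scT_m$; this is false, since the local exponents there are $\tfrac14,\tfrac34$, perfectly compatible with $y^m\in\C(z)$ for $m=4$ --- rationality of exponents gives no obstruction here (in Lemma~\ref{lem:2} this step worked only because $d\neq0$ made the exponents at $z_{1},z_{2}$ irrational). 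Your symmetric-power argument targets only $\scG\subset\scD$, and in any case the Lemma~\ref{lem:2} degree-count contradiction does not transfer: $1-\Delta_4=-1$ is a negative integer, so the rational solution $v$ of \eqref{eq:ssp} may have a pole at $z_4$ and the conclusion $Q=1$ fails. What is actually needed --- and what the paper does --- is Case~I of the Kovacic algorithm, which excludes \emph{all} reducible groups ($\scT$, $\scT_m$ and $\scD$ alike) at one stroke: the sets $E_1=E_2=\bigl\{\tfrac14,\tfrac34\bigr\}$, $E_3=\bigl\{\tfrac12,\tfrac12\bigr\}$, $E_4=\bigl\{-\tfrac12,\tfrac32\bigr\}$, $E_\infty=\{0,1\}$ admit exactly two families with $d(e)=0$, and the degree-zero condition $\omega'+\omega^2-r=0$ of \eqref{eq:kova1deg0} fails for positive moments of inertia. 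You never propose this computation, so the reducible case is left open.

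You also underuse your own observation about logarithms. Because $\Delta_3=0$, the local solutions at $z_3$ \emph{always} contain a logarithm, and this forces the Galois group into case 1 or case 4 of Lemma~\ref{lem:alg}: a subgroup of $\scD^\dag$, a diagonal group, or a finite primitive group consists of semisimple elements and cannot contain the resulting non-diagonalizable local monodromy (this is the restriction cited in the paper via \cite{mp:02::d}). You invoke the logarithm only against finiteness, and consequently plan a full Case~II analysis of the $\scD^\dag$ branch, with parameter-dependent polynomial recursions for the families with $d(e)\in\N_0$ --- computations you merely sketch and which the logarithmic obstruction renders unnecessary (incidentally, with $\ord(\infty)=3$ the convention of Appendix~\ref{sec:2ndorderlde} gives $E_\infty=\{1,2,4\}$ there, not $\{0,2,4\}$). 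The same applies in the non-generic cases $C=2A$ and $C=A-1$, where $\Delta_3=0$ persists and the paper again needs only Case~I. So as written your proposal is both heavier than necessary and incomplete: the only branch that truly demands an explicit computation is precisely the branch your argument handles incorrectly.
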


\begin{proof}
In the generic case coefficient $r(z)$ has expansion
\begin{equation}
\label{eq:roz}
r(z)=\sum_{i=1}^4\left[\dfrac{\alpha_i}{(z-z_i)^2}+\dfrac{\beta_i}{z-z_i}\right],
\end{equation}
with coefficients
\begin{gather*}
\alpha_1 = \alpha_2=-\frac{3}{16} \quad \alpha_3=-\frac{1}{4},\quad \alpha_4=\frac{3}{4}, 
\end{gather*}
and we do not write explicit formulas for $\beta_i$ because their form is
irrelevant at this point. 

The differences of exponents $\Delta_i=\sqrt{1+4a_i}$ at 
singular points $z_i$ are the following
\begin{equation*}
\Delta_1 = \Delta_2=\frac{1}{2},\quad \Delta_3=0,\quad \Delta_4=2.
\end{equation*}
Since $\Delta_3=0$, thus in local solutions around singularity $z_3$ logarithmic terms always appear and the differential Galois groups can be either
reducible or full $\mathrm{SL}(2,\C)$, see for details~\cite{mp:02::d}. To check
whether the group is reducible, we use the first case of the Kovacic algorithm,
see the appendix~\ref{sec:2ndorderlde}. The auxiliary sets of exponents at
singularities
\begin{equation*}
\begin{split}
 &E_1=E_2=\left\{\frac{1}{4},\frac{3}{4}\right\},\,\,\, E_3=\left\{\frac{1}{2},\frac{1}{2}\right\},\,\,\, E_4=\left\{-\frac{1}{2},\frac{3}{2}\right\},\\    &E_{\infty}=\{0,1\}.
 \end{split}
\end{equation*}

In the Cartesian product $E=E_\infty\times E_1 \times \cdots \times E_4$ we look
for such  elements $e$ for which 
\[
d(e):= e_\infty - \sum_{i=1}^4 e_i
\]
is a non-negative integer. We have two sets of exponents
\begin{equation}
  \label{eq:degP0}
\begin{split}
&e^{(1)}=\left\{1,\frac{3}{4},\frac{1}{4},\frac{1}{2},-\frac{1}{2} \right\},\quad d\left( e^{(1)}\right)=0,\\
&e^{(2)}=\left\{1,\frac{1}{4},\frac{3}{4},\frac{1}{2},-\frac{1}{2} \right\},\quad d\left( e^{(2)}\right)=0.
\end{split}
\end{equation}
For these two choices, we construct the function
$\omega=\sum_{i=1}^4\frac{e_i}{z-z_i}$, according to formula \eqref{eq:t1}, and check whether the condition of
existence of polynomial of degree $0$
\begin{equation}
\Dz \omega+\omega^2-r=0    
\label{eq:kova1deg0}
\end{equation}
is satisfied, compare with equation  \eqref{eq:P1}.  Easy calculations show that for both choices~\eqref{eq:degP0}
condition \eqref{eq:kova1deg0} is not fulfilled for positive moments of inertia. 

Now we consider the non-generic cases. Case $C=A$ is integrable. For $C=2A$ coefficient $r(z)$ takes the form
\begin{equation}
 \begin{split}
r(z)=&-\frac{3}{16 z^2}-\frac{3}{16 (z-1)^2}-\frac{1}{4 (2 A+z+1)^2}\\
&+\frac{3-2 A}{16 A z+8 z}+\frac{A+4}{8 (A+1) (z-1)}\\
&-\frac{10 A+7}{8 (A+1) (2 A+1) (2 A+z+1)}.
 \end{split}   
\end{equation}

With the above coefficient $r(z)$,  equation \eqref{eq:zredgeq0}  has
generically four singularities: three poles of the second order $z_1=0$,
$z_2=1$, $z_3=-(2 A+1)$ provided $(A+1)(2A+1)\neq0$ and $z_4=\infty$. The degree
of infinity is two, and the Laurent expansion at infinity is
\begin{equation*}
r(z)=\frac{3}{4 z^2}+O\left(\dfrac{1}{z^3}\right).
\end{equation*}
The coefficients $\alpha_i$ in the expansions \eqref{eq:roz} of $r(z)$ around
singularities are
\begin{gather*}
\alpha_1 =\alpha_2=-\frac{3}{16},\quad  \alpha_3=-\frac{1}{4},\quad \alpha_{\infty}=\frac{3}{4}
\end{gather*}
and the differences of exponents
\begin{equation*}
\Delta_1=\Delta_2 =\frac{1}{2},\quad  \Delta_3=0, \quad  \Delta_{\infty}=2.
\end{equation*}
Also in this case, logarithmic terms in local solutions of
\eqref{eq:zredgeq0} around $z_3$ are always present and the differential Galois groups
can be reducible or full $\mathrm{SL}(2,\C)$. The auxiliary sets of
exponents at singularities in the first case of the Kovacic algorithm are
\begin{equation*}
 E_1=E_2=\left\{\frac{1}{4},\frac{3}{4}\right\},\,\,\, E_3=\left\{\frac{1}{2},\frac{1}{2}\right\},\,\,\,    E_{\infty}=\left\{-\frac{1}{2},\frac{3}{2}\right\}.
\end{equation*}
In the Cartesian product $E=E_\infty\times E_1 \times \cdots \times E_3$ we look
for such  elements $e$ for which 
\[
d(e):= e_\infty - \sum_{i=1}^4 e_i,
\]
is a non-negative integer and we have two sets of exponents
\[
\begin{split}
&e^{(1)}=\left\{\frac{3}{2},\frac{3}{4},\frac{1}{4},\frac{1}{2} \right\},\quad d\left( e^{(1)}\right)=0,\\
&e^{(2)}=\left\{\frac{3}{2},\frac{1}{4},\frac{3}{4},\frac{1}{2} \right\},\quad d\left( e^{(2)}\right)=0,
\end{split}
\]
For these two choices we construct function
$\omega=\sum_{i=1}^4\frac{e_i}{z-z_i}$  condition of existence of polynomial of
degree 0  gives in these two cases conditions $A=0$ and $A+1=0$, respectively,
that are impossible and the first cases of the Kovacic algorithm is impossible.

For $C=A-1$ coefficient $r(z)$ simplifies to
\begin{equation*}
 \begin{split}
&r(z)=\frac{5}{16 z^2}-\frac{3}{16 (z-1)^2} -\frac{1}{4 \left(A^2-z\right)^2}+\frac{(A-2) A+2}{8 A^2 z}\\
&-\frac{(A-2) A+3}{8 \left(A^2-1\right) (z-1)}
+\frac{A^2-A+1}{4 (A-1) A^2 (A+1) \left(z-A^2\right)}
 \end{split}   
\end{equation*}
and equation \eqref{eq:zredgeq0} with this coefficient has generically four
singularities: three second-order poles $z_1=0$, $z_2=1$, $z_3=A^2$ provided
$A(A+1)(A-1)\neq0$ and $z_4=\infty$, and the degree of infinity is three
provided $A(A+1)\neq0$. The coefficients $\alpha_i$ in the expansions \eqref{eq:roz} of
$r(z)$ around singularities are
\begin{gather*}
\alpha_1 =\frac{5}{16},\quad \alpha_2=-\frac{3}{16}, \quad \alpha_3=-\frac{1}{4}
\end{gather*}
and the differences of exponents equal
\begin{equation*}
\Delta_1 =\frac{3}{2},\quad \Delta_2=\frac{1}{2},\quad \Delta_3=0.
\end{equation*}
As in previous cases, logarithmic terms in local solutions of \eqref{eq:zredgeq0}
around $z_3$ are present and the differential Galois groups can be
reducible or full $\mathrm{SL}(2,\C)$. The auxiliary sets of exponents at
singularities in the first case of the Kovacic algorithm are
\begin{equation*}
\begin{split}
 &E_1=\left\{-\frac{1}{4},\frac{5}{4}\right\},\,\,\, E_2=\left\{\frac{1}{4},\frac{3}{4}\right\},\,\,\, E_3=\left\{\frac{1}{2},\frac{1}{2}\right\},\\    &E_{\infty}=\{0,1\}.
 \end{split}
\end{equation*}
In the Cartesian product $E=E_\infty\times E_1 \times \cdots \times E_3$ we have
one element $e=\left(1,-\frac{1}{4},\frac{3}{4},\frac{1}{2}\right)$ for which
$d(e)=e_{\infty}-e_1-e_2-e_3=0$ is a nonnegative integer and we construct the
function $\omega=\sum_{i=1}^3\frac{e_i}{z-z_i}$. The condition
\eqref{eq:kova1deg0} of the existence of a polynomial of degree 0  gives $A (1 +
A)=0$ but these values are greater than zero. The special subcase $A=1$ of this case
gives $C=A-1=0$ and this case is excluded. In this way, we finished the proof of
Lemma~\ref{lem:3}.
\end{proof}
\begin{remark}
In case $g=d=0$   our system \eqref{eq:lag_borisovexpl} has two other invariant
manifolds: given by conditions $M_1=M_3=\Gamma_2=0$ (as for the case $g=0$ and
$d\neq0$) and another determined by  $M_1=M_2=\Gamma_3=0$ but reduced normal
variational equations along particular solutions lying on these manifolds have
differences of exponents depending on parameters $A$ and $C$ that makes analysis
more difficult.
\end{remark}

By Theorem~\ref{thm:Morales} using Theorem~\ref{thm:g0gen} and
Lemmata~\ref{lem:2} and \ref{lem:3} we obtain that for $g=0$ and $\vL\neq\vzero$
the only integrable cases are the Euler case and the Lagrange case. This result,
together with the results of the analysis in Section~\ref{sec:anal} for sliding
top in non-zero constant gravity field, finishes the proof of our main
Theorem~\ref{thm:we-main}.

\section{Final remarks}
\label{sec:finalremarks}

We perform a complete integrability analysis of the sliding top problem. The final
result was achieved by applying several methods. We show that
the considered system is, in a certain sense, a perturbation of either the heavy top or the Euler top. Using this fact, we show that the integrability of the classical top is the necessary condition of the sliding top. In effect, under the assumption that the gravity field does not vanish, our analysis was reduced to studying two cases of Kovlaevskaya and Goryachev-Chaplygin. 

The case of vanishing gravity appeared considerably harder. The basic problem in this case is the lack of a suitable particular solution that can be used to study the integrability. One way to overcome this difficulty was to add an appropriate
assumption concerning the problem parameters. However, this automatically limits
the generality of the results. This is why we decided to follow S.L. Ziglin Integrability Investigations of the classical top. The crucial point was to prove that if the sliding top is integrable, then the body is symmetric. We obtain this result using one, not so well-known, theorem of S.L. Ziglin describing the splitting of separatrices in a system with two degrees of freedom close to an integrable one.  

The similarities between the classical and sliding tops are misleading. Nevertheless, both tops have two common integrable cases of Euler and Lagrange. Moreover, we mentioned that for the sliding top we also have the Hess-Appelrot case.  It is remarkable that in this case, for the sliding top, similarly as for the classical
top, only one pair of surfaces asymptotic to the hyperbolic periodic solutions
intersects. One can prove it using the method of \cite{Ziglin:80::e} and our calculation presented in Section~\ref{sec:g0generic}.

It is known that the addition of gyrostat terms to the Euler equations of classical heavy top does not destroy the integrability in famous cases giving the   Zhukovskii case (an extension of the Euler case), the Lagrange case for gyrostat and  Yehia case (an extension of the Kovalevskaya case), see e.g.  \cite{Ollagnier:23::} and references therein. It is also integrable on the
level  $F_2=0$ t the Sretenskii case (an extension of the Goryachev-Chaplygin). One can ask whether the two integrable cases of the sliding top after the addition of the gyrostatic term remain integrable.

Equations of motion for the sliding top with gyrostat take the form
\begin{equation}
\begin{split}
 &\Dt\vM=(\vM-\vB)\times \vJ^{-1}\vM-mg\vL\times\vGamma\\
 &+ 
 m\left[\vGamma\cdot(\vJ^{-1}\vM\times\vL)\right]
 \left((\vJ^{-1}\vM\times\vL)\times\vGamma\right),\\
 &\Dt  \vGamma = \vGamma \times \vJ^{-1}\vM,
\end{split}
 \label{eq:lag_with_gyro} 
\end{equation}
where  $\vB$ is the constant gyrostatic moment.

System \eqref{eq:lag_with_gyro} has three first integrals 
\begin{equation*}
\begin{split}
&H=  \frac{1}{2} \vJ^{-1}\vM\cdot \vI \vJ^{-1}\vM+ 
\frac{1}{2} m \left[\vGamma\cdot( \vJ^{-1}\vM\times\vL)\right]^2 \\
&+ 
m g \vGamma\cdot\vL,\qquad F_1= \vGamma^2,\quad F_2=\vGamma\cdot(\vM-\vB).
\end{split}
\label{integralsMM} 
\end{equation*}

Equations of motion \eqref{eq:lag_with_gyro} can be written as Hamiltonian system \eqref{eq:poissonowska} with degenerated 
Poisson structure defined by slightly modified matrix $\J(\vX)$ 
\begin{equation*}
  \J(\vX)= \begin{bmatrix}
    \widehat\vM- \widehat\vB &  \widehat\vGamma \\
    \widehat\vGamma & \vzero
  \end{bmatrix}.
\end{equation*}
The functions $F_1$ and $F_2$ are Casimir functions of this Poisson structure.

We have additional first integral  in cases that are generalizations of the Euler and the Lagrange rigid body cases:
\begin{enumerate}
    \item  in the  Zhukovskii--Volterra
 case when $L_1=L_2=L_3=0$ that is 
\[
F_3=\vM\cdot\vM-2\vB\cdot\vM
\]
and 
\item in the Lagrange case when $B=A$ and $L_1=L_2=B_1=B_2=0$ equals to $F_3=M_3$. In the completely symmetric case when $A=B=C$ the additional first integral is $F_3=\vM\cdot\vL$ provided $\vB\times\vL=\vzero$.
\end{enumerate}
 In fact change of variables $\vM - \vB \to \vM$ transforms the first case into the Zhukovskii--Volterra system with the standard bracket defined by the Lie algebra $\mathfrak{e}(3)$, for details see \cite[ Sec. 2.7, Ch. 2]{Borisov:19::}.

\section*{Acknowledgements}
The authors wish to express their thanks to Ivan S. Mamaev for drawing attention to the equivalence of equations of motion for a sliding top and a ball with a displaced center of mass moving on a perfectly smooth horizontal plane.

This research was founded by The
National Science Center of Poland under Grant No.
2020/39/D/ST1/01632.

\section*{Data Availability Statement}
Data sharing is not applicable to this article as no new data were created or analyzed in this study.

\appendix
\section{Differential Galois group of second order differential equation with rational coefficients}
\label{sec:2ndorderlde}
Let us consider a second-order differential equation of the following form
\begin{equation}
\label{eq:gso}
 y''=r(z) y, \qquad r(z)\in\C(z), \qquad '\equiv \frac{\rmd\phantom{z}}{\rmd z} .
\end{equation}
For this equation its differential Galois group $\scG$ is an algebraic
subgroup of $\mathrm{SL}(2,\C)$. The following lemma describes all
possible types of $\scG$ and relates these types to forms of solution
of \eqref{eq:gso}, see \cite{Kovacic:86::,Morales:99::c}.
\begin{lemma}
\label{lem:alg}
Let $\scG$ be the differential Galois group of equation~\eqref{eq:gso}.
Then one of four cases can occur.
\begin{enumerate}
\item $\scG$ is reducible (it is conjugated to a subgroup of the triangular
  group); in this case, equation \eqref{eq:gso} has an exponential
  solution of the form $y=\exp\int \omega$, where $\omega\in\C(z)$,
\item $\scG$ is conjugated with a subgroup of
\begin{equation}
D^\dag = \left\{ \begin{bmatrix} c & 0\\
                                0 & c^{-1}
                      \end{bmatrix}  \; \biggl| \; c\in\C^*\right\} \cup
                      \left\{ \begin{bmatrix} 0 & c\\
                                c^{-1} & 0
                      \end{bmatrix}  \; \biggl| \; c\in\C^*\right\},
\label{eq:Ddag}                      
\end{equation}
  in this case equation
  \eqref{eq:gso} has a solution of the form $y=\exp\int \omega$, where
  $\omega$ is algebraic over $\C(z)$ of degree 2,
\item $\scG$ is primitive and finite; in this case all
  solutions of equation \eqref{eq:gso} are algebraic,

\item $\scG= \mathrm{SL}(2,\C)$ and equation \eqref{eq:gso}
  has no Liouvillian solution.
\end{enumerate}
\end{lemma}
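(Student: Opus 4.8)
The plan is to combine three ingredients: the observation that $\scG$ lands in $\mathrm{SL}(2,\C)$, the classification of algebraic subgroups of $\mathrm{SL}(2,\C)$ up to conjugacy, and the fundamental theorem of differential Galois theory translating invariant subspaces of the solution space into the structure of solutions. First I would note that because equation~\eqref{eq:gso} has no first-order term, the Wronskian $W = y_1 y_2' - y_1' y_2$ of any two solutions satisfies $W' = 0$ and is therefore a nonzero constant, fixed by every element of $\scG$. Hence $\scG$ preserves the volume form on the two-dimensional solution space $V$, so $\scG \subseteq \mathrm{SL}(2,\C)$.

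The core step is the classification of algebraic subgroups of $\mathrm{SL}(2,\C)$, which I would carry out by analyzing the identity component $\scG^0$. If $\scG^0$ is solvable then, by the Lie--Kolchin theorem, it is conjugate to a subgroup of the upper-triangular (Borel) group and in particular fixes a line in $V$. Three subcases then arise according to how the finite quotient $\scG/\scG^0$ acts: either $\scG$ itself fixes a line (the reducible case~1), or $\scG$ stabilizes a maximal torus and is conjugate into its normalizer $D^\dag$ of~\eqref{eq:Ddag} (case~2), or $\scG$ is finite. A finite group that is not contained in a Borel or in $D^\dag$ is primitive, and its image in $\mathrm{PSL}(2,\C)$ is one of the tetrahedral, octahedral or icosahedral groups (case~3). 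If instead $\scG^0$ is not solvable, then since the only connected nonsolvable algebraic subgroup of $\mathrm{SL}(2,\C)$ is the whole group, $\scG = \mathrm{SL}(2,\C)$ (case~4).

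Next I would translate each case into the claimed solution structure via the Galois correspondence. In case~1 let $\C y_0 \subset V$ be the invariant line; every $g \in \scG$ scales $y_0$, so the logarithmic derivative $\omega = y_0'/y_0$ is fixed by all of $\scG$ and hence lies in $\C(z)$, giving $y_0 = \exp\int\omega$. In case~2 the group permutes exactly two lines $\C y_1, \C y_2$; the index-two subgroup fixing $\C y_1$ leaves $\omega_1 = y_1'/y_1$ invariant, so $\omega_1$ belongs to the fixed field of that subgroup, a quadratic extension of $\C(z)$, and is therefore algebraic of degree~$2$ over $\C(z)$. In case~3 finiteness of $\scG$ forces the Picard--Vessiot extension to have transcendence degree zero, so it is an algebraic extension of $\C(z)$ and all solutions are algebraic. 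Finally, in case~4 the existence of a Liouvillian solution would force $\scG^0$ to be solvable by Kolchin's theorem; since $\mathrm{SL}(2,\C)$ is connected and nonsolvable, no Liouvillian solution exists.

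The main obstacle is the group-theoretic classification in the second step, specifically the identification of the finite primitive subgroups: establishing that every finite subgroup of $\mathrm{SL}(2,\C)$ not contained in $D^\dag$ projects to one of Klein's three exceptional groups in $\mathrm{PSL}(2,\C)$ requires the full classification of finite subgroups of $\mathrm{PSL}(2,\C)$, which is the genuinely nontrivial input. The remaining translations via the Galois correspondence are formal once this structural result is in hand.
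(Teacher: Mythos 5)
The paper does not prove this lemma at all: it is quoted as a known result with a citation to Kovacic and to Morales-Ruiz, and your sketch is precisely the standard argument given in those references --- the constant Wronskian places $\scG$ inside $\mathrm{SL}(2,\C)$, the Lie--Kolchin analysis of the identity component (with Klein's classification of finite subgroups of $\mathrm{PSL}(2,\C)$ supplying the primitive finite case, and the fact that the only connected non-solvable algebraic subgroup is $\mathrm{SL}(2,\C)$ itself supplying case 4) yields the four conjugacy classes, and the Galois correspondence plus Kolchin's Liouvillian criterion translate each case into the stated solution structure. Your proposal is therefore correct and takes essentially the same approach as the paper's cited sources, the only point worth flagging being that the four cases are to be read in Kovacic's order, each excluding its predecessors, which is what your case-2 argument implicitly uses when it promotes ``algebraic of degree at most $2$'' to ``degree exactly $2$'' (if $y_1'/y_1$ were rational the line $\C y_1$ would be $\scG$-invariant and one would be back in case 1).
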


We need a more precise characterization of case 1 in the above
lemma. It is given by the following lemma, see Lemma~4.2 in
\cite{Singer:93::a}.
\begin{lemma}
\label{lem:algc1}
  Let $\scG$ be the differential Galois group of
  equation~\eqref{eq:gso} and assume that $\scG$ is reducible.
  Then either
\begin{enumerate}
\item equation~\eqref{eq:gso} has a unique solution $y$ such that
  $y'/y\in\C(z)$, and $\scG$ is conjugate to a subgroup of the
  triangular group
\begin{equation}
 \scT = \left\{ \begin{bmatrix} a & b\\
                        0& a^{-1}
               \end{bmatrix} \, |\, a,b\in\C, a\neq 0\right\}.
               \label{eq:Tfull}
\end{equation}
Moreover, $\scG$ is a proper subgroup of $\scT$ if and only if there
exists $m\in\N$ such that $y^m\in\C(z)$. In this case $\scG$ is
conjugate to
\begin{equation}
 \scT_m = \left\{ \begin{bmatrix} a & b\\
                        0& a^{-1}
               \end{bmatrix} \, |\, a,b\in\C, a^m=1\right\},
\end{equation}
where $m$ is the smallest positive integer such that $y^m\in\C(z)$, or
\item equation~\eqref{eq:gso} has two linearly independent solutions
  $y_1$ and $y_2$ such that $y'_i/y_i\in\C(z)$, then $\scG$ is
  conjugate to a subgroup of
\begin{equation}
 \scD = \left\{ \begin{bmatrix} a & 0\\
                        0& a^{-1}
               \end{bmatrix} \, |\, a\in\C, a\neq 0\right\}.
\end{equation}
In this case, $y_1y_2\in\C(z)$. Furthermore, $\scG$ is conjugate to a
proper subgroup of $\scD$ if and only if $y_1^m\in\C(z)$ for some
$m\in\N$. In this case $\scG$ is a cyclic group of order $m$ where $m$
is the smallest positive integer such that $y_1^m\in\C(z)$.
\end{enumerate}
\end{lemma}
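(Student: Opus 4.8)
The plan is to argue entirely through the linear action of $\scG$ on the two-dimensional $\C$-vector space $W$ of solutions of \eqref{eq:gso}, using the fundamental theorem of Picard--Vessiot theory, which identifies the elements of the Picard--Vessiot extension fixed by $\scG$ with the base field $\C(z)$. By Lemma~\ref{lem:alg} every $\sigma\in\scG$ lies in $\mathrm{SL}(2,\C)$, so $\det\sigma=1$. The first step is to set up the dictionary between invariant lines and rational logarithmic derivatives: a line $\C y\subseteq W$ is $\scG$-invariant if and only if $y'/y\in\C(z)$. Indeed, invariance means $\sigma(y)=\chi(\sigma)y$ for a character $\chi:\scG\to\C^*$, whence $\sigma(y'/y)=\sigma(y)'/\sigma(y)=y'/y$, so $y'/y$ is fixed and thus rational; conversely, if $y'/y=u\in\C(z)$ then $(\sigma y/y)'=(\sigma y/y)\,(\sigma(u)-u)=0$, so $\sigma y/y$ is a constant and $\C y$ is invariant. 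Reducibility of $\scG$ is precisely the existence of at least one such line, and the whole proof is then a case analysis on how many invariant lines there are.

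Next I would treat the case of a unique invariant line $\C y$. Taking $y$ as first basis vector places $\scG$ inside the triangular group $\scT$ of \eqref{eq:Tfull}, with diagonal entries $a(\sigma),a(\sigma)^{-1}$ forced by $\det\sigma=1$, where $a(\sigma)$ denotes the $(1,1)$-entry and $a:\scG\to\C^*$ is the resulting character. The key point is to show that the unipotent subgroup $\mathscr{U}=\{\sigma\in\scT:a(\sigma)=1\}$ is contained in $\scG$: the intersection $\scG\cap\mathscr{U}$ is an algebraic subgroup of the additive group $(\C,+)$, hence either trivial or all of $\mathscr{U}$; were it trivial, the projection $\sigma\mapsto a(\sigma)$ would embed $\scG$ into $\C^*$, making $\scG$ diagonalizable and so producing a second invariant eigenline, contradicting uniqueness. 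With $\mathscr{U}\subseteq\scG$ one obtains $\scG=\{\sigma\in\scT:a(\sigma)\in a(\scG)\}$, so $\scG=\scT$ exactly when $a(\scG)=\C^*$. Since the only proper algebraic subgroups of $\C^*$ are the finite cyclic groups $\mu_m$, properness of $\scG$ is equivalent to $a(\scG)$ being finite, i.e. $a(\sigma)^m=1$ for all $\sigma$, i.e. $\sigma(y^m)=y^m$, i.e. $y^m\in\C(z)$; the minimal such $m$ is $|a(\scG)|$ and realizes the conjugacy $\scG\cong\scT_m$.

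If instead there are at least two distinct invariant lines $\C y_1$ and $\C y_2$, then every element of $\scG$ preserves both, so in the basis $(y_1,y_2)$ the group is simultaneously diagonal, $\scG\subseteq\scD$, with $\sigma(y_i)=a(\sigma)^{\pm1}y_i$; then $\sigma(y_1y_2)=y_1y_2$ gives $y_1y_2\in\C(z)$, and the same subgroup analysis for $\C^*$ shows $\scG$ is a proper (hence finite cyclic) subgroup of $\scD$ iff $a(\scG)=\mu_m$ iff $y_1^m\in\C(z)$, with minimal $m=|\scG|$. I expect the only genuinely delicate step to be the uniqueness argument in the triangular case, namely that a single invariant line forces the full unipotent part into $\scG$: reducibility alone does not suffice, and one must rule out the diagonalizable alternative. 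This rests squarely on the two structural facts that the algebraic subgroups of $(\C,+)$ are only $\{0\}$ and $(\C,+)$, and those of $\C^*$ are only $\C^*$ and the $\mu_m$; granting these, everything else is bookkeeping with the fixed-field theorem.
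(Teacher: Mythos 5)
Your proof is correct, but there is nothing in the paper to compare it against: the paper does not prove this lemma at all, importing it verbatim as Lemma~4.2 of Singer and Ulmer \cite{Singer:93::a}, so your argument is a self-contained reconstruction rather than an alternative to an in-paper proof. Your route is the standard one and it works: the dictionary between $\scG$-invariant lines in the solution space and solutions with $y'/y\in\C(z)$ (via the fixed-field theorem of Picard--Vessiot theory) is set up correctly in both directions, and the case split on the number of invariant lines, together with the classification of algebraic subgroups of $\mathbb{G}_a$ (only $\{0\}$ and the whole group) and of $\mathbb{G}_m$ (only $\C^*$ and the finite cyclic groups $\mu_m$), yields exactly the dichotomy $\scT$, $\scT_m$ versus $\scD$, cyclic of order $m$, with the stated characterizations via $y^m\in\C(z)$ and $y_1y_2,\,y_1^m\in\C(z)$. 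You correctly identified the one delicate step — that a single invariant line forces $\mathscr{U}\subseteq\scG$ — but your justification there is slightly compressed: the assertion that trivial kernel of the character $a$ ``embeds $\scG$ into $\C^*$, making $\scG$ diagonalizable'' needs the embedding to be one of \emph{algebraic} groups, not merely abstract groups (abstractly, even $(\C,+)$ embeds into $\C^*$, since both are divisible and $(\C,+)$ is torsion-free). The clean repair is either to note that an injective morphism of linear algebraic groups in characteristic zero is a closed immersion, so $\scG$ is isomorphic as an algebraic group to a closed subgroup of $\mathbb{G}_m$ and is therefore diagonalizable in any representation, or to argue via Jordan decomposition: the semisimple and unipotent parts of any $\sigma\in\scG$ again lie in $\scG$, so $\scG\cap\mathscr{U}=\{1\}$ forces every element to be semisimple, and since $\scG$ is then commutative (it injects into $\C^*$), its elements are simultaneously diagonalizable, producing the second invariant line and the desired contradiction. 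With that fact made explicit, the remaining steps — $\scG=\{\sigma\in\scT:\ a(\sigma)\in a(\scG)\}$ once $\mathscr{U}\subseteq\scG$, and the translation between finiteness of $a(\scG)$ and rationality of $y^m$ — are exactly the bookkeeping you describe, and they also cover the degenerate situations (three or more invariant lines force $\scG\subseteq\{\pm\id\}$, which falls under case~2) without further work.
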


In case 2 of the above lemma we know that $v=y_1y_2\in\C(z)$.
Differentiating $v$ three times, and  using the fact that $y_i$ satisfies
equation \eqref{eq:gso},  we obtain
\begin{equation}
\label{eq:ssp}
 v'''= 2r' v + 4r v'.
\end{equation}
The above equation is called the second symmetric power of
equation~\eqref{eq:gso}. For applications of symmetric powers of
differential operators to study the existence of Liouvillian
solutions and differential Galois group,  see e.g.
\cite{Singer:93::a,Singer:95::,Ulmer:96::}.

To decide if case 2 from Lemma~\ref{lem:alg} occurs we can apply the
Kovacic algorithm \cite{Kovacic:86::}. The algorithm consists of four cases that correspond exactly to the cases listed in the Lemma~\ref{lem:alg} and is necessary to test the second case.

The presence of logarithms in local solutions of~\eqref{eq:gso} around a singularity also gives strong restrictions on the differential Galois group of this equation. Then only case 1 or case 4 of Lemma~\ref{lem:alg} can happen. 
 
Here, we present the first and the second cases of the Kovacic algorithm corresponding to differential Galois groups described in items 1 and 2 of Lemma~\ref{lem:alg}, and we restrict ourselves to the Fuchsian linear differential equations.  First, we introduce the notation.  We write $r(z)\in\mathbb{C}(z)$ in the form
\begin{equation*}
\label{eq:rst}
r(z) = \frac{s(z)}{t(z)}, \qquad s(z),\, t(z) \in \mathbb{C}[z],
\end{equation*}
where $s(z)$ and $t(z)$ are relatively prime polynomials and $t(z)$ is monic.  The roots of $t(z)$ are poles of $r(z)$. We denote $\Sigma:=
\{ c\in\mathbb{C}\,\vert\, t(c) =0 \}$.  The order $\ord(c)$ of $c\in\Sigma$
is equal to the multiplicity of $c$ as a root of $t(z)$. The infinity is the singularity of equation \eqref{eq:gso} and the order of infinity is defined by
\[
\mathrm{ord}(\infty):=\deg t- \deg s.
\]

Because we assume that equation~\eqref{eq:gso} is Fuchsian,  we have
$\ord(c)\leq 2$ for $c\in\Sigma$ and $\mathrm{ord}(\infty)\geq2$.

For each $c\in\Sigma$ we have the
following expansion
\begin{equation}
r(z) = \frac{a_c}{(z-c)^2} + O\left( \frac{1}{z-c}\right),
\label{eq:r}
\end{equation}
and we define $\Delta_c = \sqrt{1+4a_c}$. For infinity, we have

\begin{equation}
r(z)=\dfrac{a_\infty}{z^2}+O\left(\dfrac{1}{z^3}\right),
\label{eq:Linfty}
\end{equation}
and we define $\Delta_\infty = \sqrt{1+4a_\infty}$.

\noindent
\textsc{Case I}

\noindent
\textbf{Step I.} 
For each $c\in\Gamma\cup\{\infty\}$ we
define  two complex numbers
$\alpha_c^{+}$, $\alpha_c^{-}$ as described below.
\begin{description}
\item[(c$_1$)] If $c\in\Gamma$ and $\ord(c)=1,$ then 
\[
\alpha_c^{+}=\alpha_c^{-}=1.
\]
\item[(c$_2$)] If $c\in\Gamma$ and $\ord(c)=2,$ and $r$ has the  expansion of the form \eqref{eq:r}, then
\begin{equation}
\alpha_c^{\pm}=  \frac{1}{2}\left(1\pm\Delta_c\right)
\label{eq:del1s}
\end{equation}
\item[($\infty_1$)] If $\mathrm{ord}(\infty)>2$, then
\[
 \alpha_{\infty}^{+}=0,\quad \alpha_{\infty}^{-}=1.
\]
\item[($\infty_2$)] If  $\mathrm{ord}(\infty)=2$ and  the Laurent series expansion of $r$ at $\infty$ takes the form \eqref{eq:Linfty},
then
\begin{equation}
\alpha_{\infty}^{\pm}=  \frac{1}{2}\left(1\pm\Delta_\infty\right).
\label{eq:delinf}
\end{equation}
\end{description}
\noindent
\textbf{Step II.} For each family $s=(s(c),s(\infty))$, $c\in\Gamma$, where
$s(c)$ and $s(\infty)$ are either $+$ either $-$, let
we compute
\[
  d := e_\infty^{s(\infty)}- \sum_{c\in\Gamma}\alpha_c^{s(c)}.
\]
If $d$ is a non-negative integer, then
\begin{equation}
 \omega(z) =
\sum_{c\in\Gamma}\frac{\alpha_c^{s(c)}}{z-c},
\label{eq:t1}
\end{equation}
is a candidate for $\omega$.  If there are no such elements,
equation~\eqref{eq:gso} does
not have an exponential solution and the algorithm stops here.

\noindent
\textbf{Step III.} For each family from step II that gives 
$d\in\N_0$ we  search for a monic polynomial $P=P(z)$
of degree $d$ satisfying the
following equation
\begin{equation}
\label{eq:P1}
P'' + 2\omega(z)P' +( \omega'(z)+ \omega(z)^2 -r(z))P =0.
\end{equation}
If such polynomial exists, then equation~\eqref{eq:gso} possesses an
exponential solution of the form $y=P\exp\int\theta$, where $\theta=\omega$,
if not,
equation~\eqref{eq:gso} does not have an exponential solution.
\\[\bigskipamount]
\noindent
\textsc{Case II}

\noindent
\textbf{Step I.}
For each $c\in\Gamma\cup\{\infty\}$ we define   $E_c$ as described below.
\begin{description}
\item[(c$_1$)] If $c\in\Gamma$ and $\ord(c)=1,$ then the set $E_{c}=\{4\}$.
\item[(c$_2$)] If $c\in\Gamma,$ $\ord(c)=2$ and $r$ has the  expansion of the form
\eqref{eq:r}, then
\begin{equation}
  E_c := \left\{ 2 , 2\left(1+\Delta_c\right),
   2\left(1-\Delta_c\right)\right\}\cap\Z,
\label{eq:del2s}
\end{equation}
\item[($\infty_{1}$)] If $\ord{\infty}>2$, then $E_{\infty}=\{1,2,4\}$
\item[($\infty_{2}$)] If $\ord{\infty}=2$ and the he Laurent series expansion of
$r$ at $\infty$ takes the form \eqref{eq:Linfty}, then
\begin{equation}
  E_\infty := \left\{2,  2\left(1+\Delta_{\infty}\right),
  2\left(1-\Delta_{\infty}\right)\right\}\cap\Z.
\label{eq:del2i}
\end{equation}
\end{description}

\noindent
\textbf{Step II.} We consider all families $(e_c)_{c\in\Gamma\cup\{\infty\}}$
with $e_c\in E_c$ and at least one of the coordinates is odd. 
We compute
\[
  d := \frac{1}{2}\left(e_\infty- \sum_{c\in\Gamma}e_c\right).
\]
and  select those  families $(e_c)_{c\in\Gamma\cup\{\infty\}}$ for which $d(e)$
is a non-negative
integer.  If there are no such elements,  Case II cannot occur and the
algorithm stops here.

\noindent
\textbf{Step III.} For each family  giving
$d\in\N_0$ we define
\begin{equation}
\omega=\omega(z) = \frac{1}{2}
\sum_{c\in\Gamma}\frac{e_c}{z-c},
\label{eq:t2}
\end{equation}
and we search for a monic polynomial $P=P(z)$ of degree $d$ satisfying the
following equation
\begin{equation}
\begin{split}
&P''' + 3\omega P'' +(3 \omega^2 + 3\omega' -4r)P' \\
&+
(\omega'' + 3 \omega\omega' + \omega^3 -4 r\omega - 2r')P =0.
\end{split}
\label{eq:P2}
\end{equation}
If such a polynomial exists, then equation~\eqref{eq:gso} possesses a
 solution of the form $y=\exp\int\theta$, where
\[
\theta^2 - \psi\theta +\frac{1}{2}\psi' + \frac{1}{2}\psi^2 - r =0, \qquad
\psi = \omega + \frac{P'}{P}.
\]
If we do not find such a polynomial, then Case II in Lemma~\ref{lem:alg}
cannot occur.

\section{Differential Galois group of the Lam\'e equation}
\label{sec:lamiaste}

The Weierstrass 
form of the Lam\'e equation is the following
\begin{equation} 
\label{eq:lame2} 
\frac{\mathrm{d}^2y}{\mathrm{d}t^2}=(\alpha\wp(t;g_2,g_3)+\beta)y, 
\end{equation} 
where $\alpha$ and $\beta$ are, in general, complex parameters and
$\wp(t;g_2,g_3)$ is the elliptic Weierstrass function with invariants
$g_2$, $g_3$. In other words, $\wp(t;g_2,g_3)$ is a solution of the
differential equation
\begin{equation} 
\dot v^2=f(v),\qquad f(v)=4v^3-g_2v-g_3. 
\label{wei} 
\end{equation} 
The polynomial $f(v)$ is assumed to have three different roots, so
\begin{equation*} 
\Delta=g_2^3 -27g_3^2 \neq  0.
\end{equation*} 
The modular function $j(g_2,g_3)$ associated with the
elliptic curve \eqref{wei}  is defined as follows
\begin{equation*}
\label{eq:j}
j(g_2,g_3)=\frac{g_2^2}{g_2^3 -27g_3^2}.
\end{equation*}
The algebraic form of the Lam\'e equation is
\begin{equation}
\frac{\mathrm{d}^2y}{\mathrm{d} x^2} +\frac{f'(x)}{2f(x)} \frac{\mathrm{d}y}{\mathrm{d} x}-\frac{\alpha x+\beta}{f(x)}m=0,
\label{eq:lamaalga}
\end{equation}
and is related to the Weierstrass form \eqref{eq:lame2} by the transformation $x=\wp(t)$.

Classically, the Lam'e equation is written with parameter $n$
instead of $\alpha$ which are related by the formula $\alpha=n(n+1)$.  

We see that the Lam\'e equation depends on four parameters $(n,\beta,g_2,g_3)$.
The following lemma lists all the cases in which the identity
component of the differential Galois group of Lam\'e
equation~\eqref{eq:lame2} is Abelian, see, e.g.,
\cite[Sec.~2.8.4]{Morales:99::c} and references therein.
\begin{lemma} 
\label{lem:lame}
The identity component of the differential Galois group of Lam\'e
equation~\eqref{eq:lame2} is Abelian only in the following cases:
\begin{enumerate} 
\item the Lam\'e-Hermite case when $n\in\mathbb{Z}$ i.e.  $\delta:=n+\frac{1}{2}\in \frac{1}{2}\mathbb{Z}$ and three other parameters are arbitrary.
\item the Brioschi-Halphen-Crowford case for which
$\delta:=n+\frac{1}{2}\in\mathbb{N}$, and the remaining parameters $(g_2,g_3,\beta)$ satisfy an algebraic equation obtained from the condition of vanishing of the so-called  Brioschi determinant.
\item the Baldassarri case for which
  $\delta:=n+\frac{1}{2}\in \frac{1}{3}\mathbb{Z}\cup  \frac{1}{4}\mathbb{Z}\cup \frac{1}{5}\mathbb{Z}\setminus\mathbb{Z}$,
 with  additional algebraic restrictions on $(g_2,g_3,\beta)$. 
\end{enumerate} 
\end{lemma}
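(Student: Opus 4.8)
The plan is to read $\scG^0$ off directly from differential Galois theory, using that the Lam\'e equation~\eqref{eq:lame2} is already in the normal form $y''=ry$, so that its differential Galois group $\scG$ is an algebraic subgroup of $\mathrm{SL}(2,\C)$. For such a group $\scG^0$ is Abelian precisely when it is conjugate into the diagonal torus or is unipotent; by Lemmata~\ref{lem:alg} and~\ref{lem:algc1} this excludes exactly two configurations---$\scG$ equal to the full triangular group $\scT$ and $\scG=\mathrm{SL}(2,\C)$---and leaves four admissible ones: $\scG$ a subgroup of the diagonal group $\scD$, a proper subgroup of $\scT$, a subgroup of $D^\dag$, or a finite group. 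The statement therefore reduces to deciding which quadruples $(n,\beta,g_2,g_3)$ realise these configurations and to sorting them into the three named cases.

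First I would record the local data. Passing to the algebraic form~\eqref{eq:lamaalga} with $x=\wp(t)$, the equation is Fuchsian on the elliptic curve $w^2=f(x)$ with regular singular points at the three branch points $e_1,e_2,e_3$ (the roots of $f$) and at $x=\infty$. At each $e_i$ the exponent difference is $\tfrac12$, while at infinity the exponents are $\{-\tfrac{n}{2},\tfrac{n+1}{2}\}$, so the exponent difference there is $\delta=n+\tfrac12$. Since $n(n+1)$ is invariant under $n\mapsto-n-1$, the whole problem depends on $n$ only through $\delta$ up to sign, and this single parameter is what ties the global Galois structure to the three cases.

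Next I would treat the admissible subgroup types in turn. For the reducible case I look for a solution $y=\exp\int\omega$ with $\omega$ in the elliptic function field; its poles sit only at the $e_i$ and at infinity, with residues fixed by the local exponents above. Imposing the Riccati relation $\omega'+\omega^2=r$ and balancing residues (equivalently the divisor degree) on the compact curve $w^2=f(x)$ forces an integrality condition that can be met for \emph{arbitrary} $(\beta,g_2,g_3)$ exactly when $n\in\Z$; there Hermite's explicit ansatz produces two independent such solutions, so $\scG\subseteq\scD$ and $\scG^0$ is Abelian (the Lam\'e--Hermite case). For the dihedral case $\scG\subseteq D^\dag$ I pass to the second symmetric power~\eqref{eq:ssp} and demand that $\omega$ be algebraic of degree two over the function field; the local analysis then forces $\delta=n+\tfrac12\in\N$ together with the vanishing of the Brioschi determinant, giving the Brioschi--Halphen--Crawford case. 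Finally, for the finite primitive case I study the projective monodromy: its image in $\mathrm{PSL}(2,\C)$ must be tetrahedral, octahedral or icosahedral, and matching $\delta$ against the Schwarz/Klein list of finite-monodromy data pins it to $\tfrac13\Z\cup\tfrac14\Z\cup\tfrac15\Z\setminus\Z$, with the accompanying algebraic restrictions on $(g_2,g_3,\beta)$ (the Baldassarri case).

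The hard part will be completeness, i.e.\ showing that no other value of $\delta$ can occur. The reducible and dihedral cases come down to residue and degree bookkeeping on the elliptic curve, which is delicate but essentially mechanical. The genuine obstacle is the finite-group case: one must rule out every connected non-Abelian possibility and determine \emph{exactly} which rational values of $\delta$ are compatible with a finite projective monodromy, which is precisely Baldassarri's refinement of Klein's theorem together with the pull-back conditions ensuring that the Lam\'e equation descends from a finite-monodromy standard (hypergeometric) equation. Reproducing this in full would require the entire classification of finite subgroups of $\mathrm{PSL}(2,\C)$ and the associated algebraic constraints, so at this point I would---as the paper does---invoke the established classification in \cite{Morales:99::c} rather than rederive it.
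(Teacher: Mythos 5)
Your proposal is correct and ends up in the same place as the paper: the paper states Lemma~\ref{lem:lame} without proof, referring to the classification in \cite[Sec.~2.8.4]{Morales:99::c}, and your sketch (local exponent differences $\tfrac12$ at the branch points and $\delta=n+\tfrac12$ at infinity, the reducible/Hermite case for $n\in\Z$, the dihedral case via the second symmetric power and the Brioschi determinant, and Baldassarri's refinement of Klein's theorem for the finite primitive case) is precisely the standard argument behind that citation, which you then invoke for the completeness step exactly as the paper does. The only quibbles are cosmetic: your exponents at infinity should read $\{n/2,-(n+1)/2\}$ (harmless, since you work with $\delta$ up to sign), and for degenerate $\beta$ with $n\in\Z$ the second Hermite solution may be replaced by one obtained by reduction of order, giving $\scG$ inside a unipotent group rather than $\scD$ --- but $\scG^0$ is still Abelian, so the conclusion is unaffected.
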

\section{Formulae for proof of Lemma~\ref{lem:3}}
\label{sec:residius_explicitely}

Function $h_1$ on the first double-asymptotic  solution, with  sign $+$ in equation \eqref{eq:solsEulertop},
takes the form
\begin{widetext}
\begin{equation}
  \label{eq:h1z}
\begin{split}
& h_1^{+}(z)=-\frac{2m \alpha  \beta \aH \rme^{\rmi g_0} (\alpha -\rmi \beta ) \sqrt{G_0^2-\aH^2}
   z^{1+\frac{\rmi b}{\alpha  \beta }}}{(z^2+1)^4 (\alpha ^2+\beta
   ^2)^{3/2}}    \Big[2 \alpha  \beta  z \Big(-S_2 (z^4-6 z^2+1)
   \sqrt{\alpha ^2+\beta ^2} (\beta  S_1+\alpha  S_3)\\
  & +2 z (z^2-1) (\beta S_1+\alpha  S_3)^2-2 S_2^2 z(z^2-1) (\alpha ^2+\beta
   ^2)\Big)+b (z^2+1)  \Big(\beta  S_1^2 
   \left(\alpha  (z^4-1)+\rmi\beta  (z^2-1)^2\right)\\
   &+S_1 \left[2 S_2 z \sqrt{\alpha ^2+\beta ^2}
   \left(\alpha  (z^2+1)+2 \rmi \beta(z^2-1)\right)+S_3 \left((\alpha^2-\beta^2)
   (z^4-1)+2 \rmi \alpha  \beta 
   (z^2-1)^2\right)\right]\\
   &+4 \rmi  (\alpha ^2+\beta ^2) S_2^2 z^2+2 \rmi S_2
   S_3 z \sqrt{\alpha ^2+\beta ^2} \left(2 \alpha (z^2-1)+\rmi \beta 
   (z^2+1)\right)+\alpha  S_3^2 
   \big[\beta(1 -z^4)
   +\rmi \alpha 
   (z^2-1)^2\big]\Big)\Big],
\end{split}    
\end{equation}
\end{widetext}
and function $h_2$ on it is the following
\begin{widetext}
\begin{equation}
  \label{eq:h2z}
\begin{split}
&h_2^{+}(z)=\frac{\rme^{2 \rmi g_0} m (\alpha -\rmi \beta )^2 (\aH^2-G_0^2)
    z^{\frac{2 \rmi b}{\alpha  \beta }}}{4
   (z^2+1)^4 (\alpha^2+\beta^2)^2}\Big[ b^2 (z^2+1)^2 \Big(2 S_1 
   \left(\beta +\rmi \alpha (z^2+1)-\beta  z^2\right) \Big[2 S_2 z \sqrt{\alpha^2+\beta ^2}+\alpha  S_3 (z^2-1)\\
   &+\rmi \beta  S_3
   (z^2+1)\Big]+\left(\alpha  S_1
   (z^2+1)+\rmi \beta  S_1 (z^2-1)\right)^2-4
   S_2^2 z^2 (\alpha^2+\beta^2) -4 S_2 S_3
z \sqrt{\alpha ^2+\beta ^2}\left(\alpha  (z^2-1)+\rmi \beta 
   (z^2+1)\right)\\
   &-\left(\alpha  S_3
   (z^2-1)+\rmi \beta  S_3
   (z^2+1)\right)^2\Big)-4 \alpha  b \beta  z
   (z^2+1)   \Big[S_2 \sqrt{\alpha ^2+\beta ^2}
   \Big((z^4-1) (\alpha  S_1-\beta  S_3)+\rmi
   (z^4-6 z^2+1) (\beta  S_1\\
   &+\alpha 
   S_3)\Big)+2 z (\beta  S_1+\alpha  S_3)
   \left((z^2+1) (\beta  S_3-\alpha  S_1)-\rmi
   (z^2-1) (\beta  S_1+\alpha  S_3)\right)+2 \rmi
   S_2^2 z (z^2-1) \left(\alpha ^2+\beta
   ^2\right)\Big] \\
   &+4 \alpha ^2 \beta ^2 z^2 \Big(S_2
  (z^2-1)\sqrt{\alpha ^2+\beta ^2} \Big[S_2
   (z^2-1) \sqrt{\alpha ^2+\beta ^2}-4 z (\beta 
   S_1+\alpha  S_3)\Big]+4 z^2 (\beta  S_1+\alpha 
   S_3)^2\Big)
   \Big],
\end{split}   
\end{equation}
\end{widetext}
where $z=\rme^{\lambda t}$, $\alpha=\sqrt{b-a}$ and $\beta=\sqrt{c-b}$. From the forms of these functions, it follows that they can have residues only at $z=\pm\rmi$. 

The sum of residues of function $h_1^{+}(z)$  equals
\begin{widetext}
\begin{equation}
\begin{split}
&\sum_{i=1}^2  \operatorname{res}\,(h_1^{+},z_i) = \frac{mb \aH (\alpha -\rmi \beta ) \sqrt{G_0^2-\aH^2} \rme^{-\frac{\pi  b}{2 \alpha  \beta }+\rmi g_0}}{6 \alpha  \beta  \left(\alpha ^2+\beta
   ^2\right)^{3/2}}\Big\{\left(e^{\frac{\pi  b}{\alpha  \beta }}+1\right) \Big[2 \rmi b^2 \left((\beta  S_1+\alpha 
   S_3)^2-S_2^2 \left(\alpha^2+\beta^2\right)\right)\\
 &  +3 \alpha  \beta  b \left(-S_2^2
   \left(\alpha ^2+\beta ^2\right)+(\beta  S_1+\alpha  S_3) (S_1 (\beta -2 \rmi \alpha
   )+S_3 (\alpha +2 \rmi \beta ))\right)-2 \alpha ^2 \beta ^2 \Big((S_1 (3 \alpha +2 \rmi \beta
   )
  +S_3 (-3 \beta +2 \rmi \alpha )) \\
  &\cdot(\beta  S_1+\alpha  S_3)+\rmi S_2^2 \left(\alpha ^2+\beta
   ^2\right)\Big)\Big] +2\sqrt{\alpha ^2+\beta
   ^2} S_2 \left(1-e^{\frac{\pi  b}{\alpha  \beta }}\right) (b-\rmi \alpha  \beta ) \Big[2 b (\beta  S_1
   +\alpha 
   S_3)-\alpha  \beta  (S_1(3 \alpha +\rmi \beta)\\
   &+S_3(\rmi \alpha-3 \beta))\Big] \Big\},
   \end{split}
\label{eq:resh1+}  
\end{equation}
\end{widetext}
where $z_1=-z_2=\rmi$.

For function $h_2^+(z)$  the respective sum of residues reads
\begin{widetext}
\begin{equation}
\begin{split}
&\sum_{i=1}^2  \operatorname{res}\,(h_2^{+},z_i) =
\frac{m (G_0^2-\aH^2) (\alpha -\rmi \beta )^2\rme^{-\frac{\pi  b}{\alpha  \beta }+2
   \rmi g_0}}{24 \alpha  \beta  (\alpha
   ^2+\beta ^2)^2}\Big\{ \left(e^{\frac{2 \pi  b}{\alpha  \beta }}-1\right) \Big[4 b^3 \Big(S_2^2 (\alpha^2+\beta^2)-(\beta  S_1+\alpha  S_3)^2\Big)\\
&   +6 \alpha  \beta  b^2 \left((S_1 (2
   \alpha +\rmi \beta )+\rmi S_3 (\alpha +2 \rmi \beta )) (\beta  S_1+\alpha  S_3)-\rmi S_2^2
   \left(\alpha ^2+\beta ^2\right)\right)+4 \alpha ^2 \beta^2 b \Big(S_2^2 \left(\alpha ^2+\beta
   ^2\right)
   +(\beta  S_1+\alpha  S_3)\\ &\cdot (S_1 (2 \beta -3 \rmi \alpha )+S_3 (2 \alpha
   +3 \rmi \beta ))\Big)-3 \rmi \alpha ^3 \beta ^3 \left((\beta  S_1+\alpha  S_3)^2+S_2^2
   \left(\alpha ^2+\beta ^2\right)\right)\Big]
-4 \rmi\sqrt{\alpha
   ^2+\beta^2} b S_2 \left(e^{\frac{2 \pi  b}{\alpha  \beta }}+1\right) \\ &\cdot (b-\rmi \alpha  \beta )\Big[2 b (\beta 
   S_1+\alpha  S_3)-\alpha  \beta  (S_1 (3 \alpha +\rmi \beta )+\rmi S_3 (\alpha +3 \rmi \beta ))\Big]
   \Big\}.
\end{split}
 \label{eq:resh2+}  
\end{equation}
\end{widetext}

Functions $h_1$ and $h_2$ on the second double-asymptotic solution, that with the sign  minus in formula~\eqref{eq:solsEulertop},  take the forms
\begin{widetext}
\begin{equation}
\begin{split}
&h_1^{-}(z)=-\frac{2m \alpha  \beta \aH \rme^{\rmi g_0} (\alpha -\rmi \beta ) \sqrt{G_0^2-\aH^2}
   z^{1+\frac{\rmi b}{\alpha  \beta }}}{(z^2+1)^4 (\alpha ^2+\beta
   ^2)^{3/2}}    \Big\{2 \alpha  \beta  z \Big(S_2 \left(z^4-6 z^2+1\right) \sqrt{\alpha ^2+\beta ^2} (\alpha 
   S_3-\beta  S_1)
   -2 z (z^2-1) \\ &\cdot (\alpha  S_3-\beta  S_1)^2+2
   S_2^2 z (z^2-1) (\alpha ^2+\beta ^2)\Big)-b (z^2+1)
   \Big(\beta  S_1^2 \left(\alpha  \left(z^4-1\right)+\rmi \beta 
   (z^2-1)^2\right)
   +S_1 \Big(S_3 \Big(-\alpha ^2
   \left(z^4-1\right)\\ &+\beta ^2 \left(z^4-1\right)-2 \rmi \alpha  \beta 
   (z^2-1)^2\Big)-2 S_2 z \sqrt{\alpha ^2+\beta ^2} \left(\alpha 
   (z^2+1)+2 \rmi \beta  (z^2-1)\right)\Big)
   +4 \rmi S_2^2 z^2
   (\alpha ^2+\beta ^2)\\ &+2 \rmi S_2 S_3 z \sqrt{\alpha ^2+\beta ^2} \left(2
   \alpha  (z^2-1)+\rmi \beta  (z^2+1)\right)+\alpha  S_3^2 \left(\beta
   -\beta  z^4+\rmi \alpha  (z^2-1)^2\right)\Big)
   \Big\},
\end{split}
\label{eq:h1zm}
\end{equation}
\end{widetext}
\begin{widetext}
\begin{equation}
\begin{split}
&h_2^{-}(z)=\frac{\rme^{2 \rmi g_0} m (\alpha -\rmi \beta )^2 (\aH^2-G_0^2)
    z^{\frac{2 \rmi b}{\alpha  \beta }}}{4
   (z^2+1)^4 (\alpha^2+\beta^2)^2}\Big\{
b^2 (z^2+1)^2 \Big[2 S_1 \left(\alpha 
   (z^2+1)+\rmi \beta  (z^2-1)\right) \Big[-2
   \rmi S_2 z \sqrt{\alpha ^2+\beta ^2}
  \\& -\rmi \alpha  S_3
  (z^2-1)+\beta  S_3
   (z^2+1)\Big]+\left(\alpha  S_1
   (z^2+1)+\rmi \beta  S_1
  (z^2-1)\right)^2-4 S_2^2 z^2 \left(\alpha
   ^2+\beta ^2\right)-4 S_2 S_3
    z \sqrt{\alpha^2+\beta^2}\\&\cdot \left(\alpha  (z^2-1)+\rmi \beta 
   (z^2+1)\right)-\left(\alpha  S_3
   (z^2-1)+\rmi \beta  S_3
   (z^2+1)\right)^2\Big]+4 \alpha  b \beta  z
  (z^2+1)\Big[S_2 \sqrt{\alpha^2+\beta^2}
   \Big((z^4-1) (\alpha  S_1\\ &+\beta 
   S_3)-\rmi \left(z^4-6 z^2+1\right) (\alpha  S_3-\beta
    S_1)\Big)+2 \rmi z (\alpha  S_3-\beta  S_1)
    \left((z^2-1) (\alpha  S_3-\beta 
   S_1)+\rmi (z^2+1) (\alpha  S_1+\beta 
   S_3)\right)\\ &-2 \rmi S_2^2 z (z^2-1)
   \left(\alpha^2+\beta^2\right)\Big]
   +4 \alpha^2 \beta^2
   z^2 \Big(4 z (\alpha  S_3-\beta  S_1) \left(z
   (\alpha  S_3-\beta  S_1)-S_2
   (z^2-1) \sqrt{\alpha ^2+\beta
   ^2}\right)\\&+S_2^2 (z^2-1)^2 (\alpha
   ^2+\beta ^2)\Big)   
\Big\},
\end{split}   
\label{eq:h2zm}
\end{equation}
\end{widetext}
respectively. The  sums of residues of $h_1^{-}(z)$ and $h_2^{-}(z)$ equal
\begin{widetext}
\begin{equation}
\begin{split}
&\sum_{i=1}^2  \operatorname{res}\,(h_1^{-},z_i) = \frac{mb \aH (\alpha -\rmi \beta ) \sqrt{G_0^2-\aH^2} \rme^{-\frac{\pi  b}{2 \alpha  \beta }+\rmi g_0}}{6 \alpha  \beta  (\alpha ^2+\beta^2)^{3/2}}\Big\{ \left(\rme^{\frac{\pi  b}{\alpha  \beta }}+1\right) \Big[2 \rmi b^2
   \left(S_2^2 (\alpha ^2+\beta ^2)-(\alpha 
   S_3-\beta  S_1)^2\right)\\
  & +3 \alpha  \beta  b
   \Big(S_2^2 (\alpha ^2+\beta ^2)+(\alpha 
   S_3-\beta  S_1) (S_1 (\beta -2 \rmi \alpha )-S_3
   (\alpha +2 \rmi \beta ))\Big)
   -2 \alpha ^2 \beta ^2 \Big(
    (S_1 (3 \alpha +2 \rmi \beta)
    +S_3 (3 \beta -2 \rmi \alpha ))\\
    &\cdot(\alpha  S_3-\beta  S_1)-\rmi S_2^2 \left(\alpha
   ^2+\beta ^2\right)\Big)\Big]+2 S_2 \sqrt{\alpha ^2+\beta
   ^2} \left(\rme^{\frac{\pi  b}{\alpha  \beta }}-1\right) (b-\rmi \alpha 
   \beta ) \Big[2 b (\alpha  S_3
   -\beta  S_1)+\alpha  \beta 
   (S_1 (3 \alpha +\rmi \beta )\\
   &+S_3 (3 \beta -\rmi \alpha ))\Big]\Big\},
   \end{split}
\label{eq:resh1-}
\end{equation}
\end{widetext}
\begin{widetext}
\begin{equation}
\begin{split}
&\sum_{i=1}^2  \operatorname{res}\,(h_2^{-},z_i) =
\frac{m (G_0^2-\aH^2) (\alpha -\rmi \beta )^2\rme^{-\frac{\pi  b}{\alpha  \beta }+2
   \rmi g_0}}{24 \alpha  \beta  (\alpha
   ^2+\beta ^2)^2}\Big\{ \left(\rme^{\frac{2 \pi  b}{\alpha  \beta }}-1\right) \Big[4 b^3
   \left(S_2^2 \left(\alpha ^2+\beta ^2\right)-(\alpha 
   S_3-\beta  S_1)^2\right)\\
   &+6 \alpha  \beta  b^2
   \left((\beta  S_1-\alpha  S_3) (S_1 (2
   \alpha +\rmi \beta )+S_3 (2 \beta -\rmi \alpha ))-\rmi
   S_2^2 \left(\alpha ^2+\beta ^2\right)\right)+4 \alpha ^2
   \beta ^2 b \Big(S_2^2 \left(\alpha ^2+\beta
   ^2\right)
   +(S_1 (-2 \beta \\ &+3 \rmi \alpha )+S_3 (2
   \alpha +3 \rmi \beta )) (\alpha  S_3-\beta 
   S_1)\Big)-3 \rmi \alpha ^3 \beta ^3 \left((\alpha 
   S_3-\beta  S_1)^2+S_2^2 \left(\alpha
   ^2+\beta ^2\right)\right)\Big]
   +4 b S_2 \sqrt{\alpha
   ^2+\beta ^2} \left(\rme^{\frac{2 \pi  b}{\alpha  \beta
   }}+1\right) \\ &(b-\rmi \alpha  \beta ) (\alpha  \beta  (S_1
   (\beta -3 \rmi \alpha )-S_3 (\alpha +3 \rmi \beta ))-2 \rmi b
   (\alpha  S_3-\beta  S_1))
\Big\},
\end{split}
\label{eq:resh2-}
\end{equation}  
\end{widetext}
respectively. These formulas are used in section~\ref{sec:g0generic}.
\bibliographystyle{unsrtnat}
\newcommand{\noopsort}[1]{}\def\cprime{$'$}
  \def\cydot{\leavevmode\raise.4ex\hbox{.}} \def\cprime{$'$} \def\cprime{$'$}
  \def\polhk#1{\setbox0=\hbox{#1}{\ooalign{\hidewidth
  \lower1.5ex\hbox{`}\hidewidth\crcr\unhbox0}}}

\end{document}